
\documentclass[12pt,a4paper]{article}
\usepackage{amssymb}
\usepackage{amsmath}
\usepackage[top=1in, bottom=1.25in, left=1.25in, right=1.25in]{geometry}
\usepackage[onehalfspacing]{setspace}
\usepackage{amsfonts}
\usepackage{amstext}
\usepackage{amsthm}
\usepackage[round]{natbib}
\usepackage{hyperref}
\usepackage{graphicx}
\usepackage{xr}

\setcounter{MaxMatrixCols}{10}

\numberwithin{equation}{section}
\newtheorem{theorem}{Theorem}

\newtheorem{axiom}[theorem]{Axiom}

\newtheorem{condition}{Condition}
\newtheorem{conjecture}[theorem]{Conjecture}
\newtheorem{corollary}{Corollary}

\newtheorem{definition}{Definition}
\newtheorem{example}[theorem]{Example}
\newtheorem{exercise}[theorem]{Exercise}
\newtheorem{lemma}{Lemma}

\newtheorem{proposition}{Proposition}
\theoremstyle{remark}
\newtheorem{remark}{Remark}

\typeout{TCILATEX Macros for Scientific Word 3.0 <19 May 1997>.}
\typeout{NOTICE:  This macro file is NOT proprietary and may be 
freely copied and distributed.}
\makeatletter
%

\providecommand{\BOXEDSPECIAL}[4]{\hbox to #2{\raise #3\hbox to #2{\null #1\hfil}}}

\newcount\@hour\newcount\@minute\chardef\@x10\chardef\@xv60
\def\tcitime{
\def\@time{%
  \@minute\time\@hour\@minute\divide\@hour\@xv
  \ifnum\@hour<\@x 0\fi\the\@hour:%
  \multiply\@hour\@xv\advance\@minute-\@hour
  \ifnum\@minute<\@x 0\fi\the\@minute
  }}%

\@ifundefined{hyperref}{}{}

\@ifundefined{qExtProgCall}{\def\qExtProgCall#1#2#3#4#5#6{\relax}}{}
%
%
%
%
\def\QCTOpt[#1]#2{%
  \def\QCTOptB{#1}
  \def\QCTOptA{#2}
}
\def\QCTNOpt#1{%
  \def\QCTOptA{#1}
  \let\QCTOptB\empty
}
\def\Qct{%
  \@ifnextchar[{%
    \QCTOpt}{\QCTNOpt}
}
\def\QCBOpt[#1]#2{%
  \def\QCBOptB{#1}
  \def\QCBOptA{#2}
}
\def\QCBNOpt#1{%
  \def\QCBOptA{#1}
  \let\QCBOptB\empty
}
\def\Qcb{%
  \@ifnextchar[{%
    \QCBOpt}{\QCBNOpt}
}
\def\PrepCapArgs{%
  \ifx\QCBOptA\empty
    \ifx\QCTOptA\empty
      {}%
    \else
      \ifx\QCTOptB\empty
        {\QCTOptA}%
      \else
        [\QCTOptB]{\QCTOptA}%
      \fi
    \fi
  \else
    \ifx\QCBOptA\empty
      {}%
    \else
      \ifx\QCBOptB\empty
        {\QCBOptA}%
      \else
        [\QCBOptB]{\QCBOptA}%
      \fi
    \fi
  \fi
}
\newcount\GRAPHICSTYPE
\GRAPHICSTYPE=\z@
\def\GRAPHICSPS#1{%
 \ifcase\GRAPHICSTYPE
   \special{ps: #1}%
 \or
   \special{language "PS", include "#1"}%
 \fi
}%
%
%
%
\def\graffile#1#2#3#4#5{%
    \bgroup
    \leavevmode
    \@ifundefined{bbl@deactivate}{\def~{\string~}}{\activesoff}
    \raise -#4 \BOXTHEFRAME{%
       \BOXEDSPECIAL{#1}{#2}{#3}{#5}}%
    \egroup
}%
%
\def\draftbox#1#2#3#4{%
 \leavevmode\raise -#4 \hbox{%
  \frame{\rlap{\protect\tiny #1}\hbox to #2%
   {\vrule height#3 width\z@ depth\z@\hfil}%
  }%
 }%
}%
\newcount\draft
\draft=\z@

\newif\ifwasdraft
\wasdraftfalse

\def\GRAPHIC#1#2#3#4#5{%
 \ifnum\draft=\@ne\draftbox{#2}{#3}{#4}{#5}%
  \else\graffile{#1}{#3}{#4}{#5}{#2}%
  \fi
 }%
\def\addtoLaTeXparams#1{%
    \edef\LaTeXparams{\LaTeXparams #1}}%
%

\newif\ifBoxFrame \BoxFramefalse
\newif\ifOverFrame \OverFramefalse
\newif\ifUnderFrame \UnderFramefalse

\def\BOXTHEFRAME#1{%
   \hbox{%
      \ifBoxFrame
         \frame{#1}%
      \else
         {#1}%
      \fi
   }%
}

\def\doFRAMEparams#1{\BoxFramefalse\OverFramefalse\UnderFramefalse\readFRAMEparams#1\end}%
\def\readFRAMEparams#1{%
 \ifx#1\end%
  \let\next=\relax
  \else
  \ifx#1i\dispkind=\z@\fi
  \ifx#1d\dispkind=\@ne\fi
  \ifx#1f\dispkind=\tw@\fi
  \ifx#1t\addtoLaTeXparams{t}\fi
  \ifx#1b\addtoLaTeXparams{b}\fi
  \ifx#1p\addtoLaTeXparams{p}\fi
  \ifx#1h\addtoLaTeXparams{h}\fi
  \ifx#1X\BoxFrametrue\fi
  \ifx#1O\OverFrametrue\fi
  \ifx#1U\UnderFrametrue\fi
  \ifx#1w
    \ifnum\draft=1\wasdrafttrue\else\wasdraftfalse\fi
    \draft=\@ne
  \fi
  \let\next=\readFRAMEparams
  \fi
 \next
 }%
%

\def\IFRAME#1#2#3#4#5#6{%
      \bgroup
      \let\QCTOptA\empty
      \let\QCTOptB\empty
      \let\QCBOptA\empty
      \let\QCBOptB\empty
      #6%
      \parindent=0pt%
      \leftskip=0pt
      \rightskip=0pt
      \setbox0 = \hbox{\QCBOptA}%
      \@tempdima = #1\relax
      \ifOverFrame
          \typeout{This is not implemented yet}%
          \show\HELP
      \else
         \ifdim\wd0>\@tempdima
            \advance\@tempdima by \@tempdima
            \ifdim\wd0 >\@tempdima
               \textwidth=\@tempdima
               \setbox1 =\vbox{%
                  \noindent\hbox to \@tempdima{\hfill\GRAPHIC{#5}{#4}{#1}{#2}{#3}\hfill}\\%
                  \noindent\hbox to \@tempdima{\parbox[b]{\@tempdima}{\QCBOptA}}%
               }%
               \wd1=\@tempdima
            \else
               \textwidth=\wd0
               \setbox1 =\vbox{%
                 \noindent\hbox to \wd0{\hfill\GRAPHIC{#5}{#4}{#1}{#2}{#3}\hfill}\\%
                 \noindent\hbox{\QCBOptA}%
               }%
               \wd1=\wd0
            \fi
         \else
            \ifdim\wd0>0pt
              \hsize=\@tempdima
              \setbox1 =\vbox{%
                \unskip\GRAPHIC{#5}{#4}{#1}{#2}{0pt}%
                \break
                \unskip\hbox to \@tempdima{\hfill \QCBOptA\hfill}%
              }%
              \wd1=\@tempdima
           \else
              \hsize=\@tempdima
              \setbox1 =\vbox{%
                \unskip\GRAPHIC{#5}{#4}{#1}{#2}{0pt}%
              }%
              \wd1=\@tempdima
           \fi
         \fi
         \@tempdimb=\ht1
         \advance\@tempdimb by \dp1
         \advance\@tempdimb by -#2%
         \advance\@tempdimb by #3%
         \leavevmode
         \raise -\@tempdimb \hbox{\box1}%
      \fi
      \egroup%
}%
%
\def\DFRAME#1#2#3#4#5{%
 \begin{center}
     \let\QCTOptA\empty
     \let\QCTOptB\empty
     \let\QCBOptA\empty
     \let\QCBOptB\empty
     \ifOverFrame 
        #5\QCTOptA\par
     \fi
     \GRAPHIC{#4}{#3}{#1}{#2}{\z@}
     \ifUnderFrame 
        \nobreak\par\nobreak#5\QCBOptA
     \fi
 \end{center}%
 }%
%
\def\FFRAME#1#2#3#4#5#6#7{%
 \begin{figure}[#1]%
  \let\QCTOptA\empty
  \let\QCTOptB\empty
  \let\QCBOptA\empty
  \let\QCBOptB\empty
  \ifOverFrame
    #4
    \ifx\QCTOptA\empty
    \else
      \ifx\QCTOptB\empty
        \caption{\QCTOptA}%
      \else
        \caption[\QCTOptB]{\QCTOptA}%
      \fi
    \fi
    \ifUnderFrame\else
      \label{#5}%
    \fi
  \else
    \UnderFrametrue%
  \fi
  \begin{center}\GRAPHIC{#7}{#6}{#2}{#3}{\z@}\end{center}%
  \ifUnderFrame
    #4
    \ifx\QCBOptA\empty
      \caption{}%
    \else
      \ifx\QCBOptB\empty
        \caption{\QCBOptA}%
      \else
        \caption[\QCBOptB]{\QCBOptA}%
      \fi
    \fi
    \label{#5}%
  \fi
  \end{figure}%
 }%
%
%
%
%
%
\newcount\dispkind%

\def\makeactives{
  \catcode`\"=\active
  \catcode`\;=\active
  \catcode`\:=\active
  \catcode`\'=\active
  \catcode`\~=\active
}
\bgroup
   \makeactives
   \gdef\activesoff{%
      \def"{\string"}
      \def;{\string;}
      \def:{\string:}
      \def'{\string'}
    }
\egroup

\def\FRAME#1#2#3#4#5#6#7#8{%
 \bgroup
 \ifnum\draft=\@ne
   \wasdrafttrue
 \else
   \wasdraftfalse%
 \fi
 \def\LaTeXparams{}%
 \dispkind=\z@
 \def\LaTeXparams{}%
 \doFRAMEparams{#1}%
 \ifnum\dispkind=\z@\IFRAME{#2}{#3}{#4}{#7}{#8}{#5}\else
  \ifnum\dispkind=\@ne\DFRAME{#2}{#3}{#7}{#8}{#5}\else
   \ifnum\dispkind=\tw@
    \edef\@tempa{\noexpand\FFRAME{\LaTeXparams}}%
    \@tempa{#2}{#3}{#5}{#6}{#7}{#8}%
    \fi
   \fi
  \fi
  \ifwasdraft\draft=1\else\draft=0\fi{}%
  \egroup
 }%
%

\def\TEXUX#1{"texux"}

%
%
%
%
%
%
%
%
%
%

%
\long\def\QQQ#1#2{%
     \long\expandafter\def\csname#1\endcsname{#2}}%
\@ifundefined{QTP}{\def\QTP#1{}}{}
\@ifundefined{QEXCLUDE}{\def\QEXCLUDE#1{}}{}
\@ifundefined{Qlb}{}{}
\@ifundefined{Qlt}{}{}
\long\def\QQA#1#2{}%
\newcommand{\QTR}[2]{\csname text#1\endcsname{#2}}
\def\EXPAND#1[#2]#3{}%
\def\NOEXPAND#1[#2]#3{}%
\def\LaTeXparent#1{}%
\def\ChildStyles#1{}%
\def\ChildDefaults#1{}%
\def\QTagDef#1#2#3{}%

\@ifundefined{correctchoice}{}{}
\@ifundefined{HTML}{\def\HTML#1{\relax}}{}
\@ifundefined{TCIIcon}{\def\TCIIcon#1#2#3#4{\relax}}{}
\if@compatibility
  \typeout{Not defining UNICODE or CustomNote commands for LaTeX 2.09.}
\else
  \providecommand{\UNICODE}[2][]{}
  
\fi

%
\@ifundefined{StyleEditBeginDoc}{}{}
%
\def\QQfnmark#1{\footnotemark}

%
%
\@ifundefined{TCIMAKEINDEX}{}{\makeindex}%
%
\@ifundefined{abstract}{%
 \def\abstract{%
  \if@twocolumn
   \section*{Abstract (Not appropriate in this style!)}%
   \else \small 
   \begin{center}{\bf Abstract\vspace{-.5em}\vspace{\z@}}\end{center}%
   \quotation 
   \fi
  }%
 }{%
 }%
\@ifundefined{endabstract}{\def\endabstract
  {\if@twocolumn\else\endquotation\fi}}{}%
\@ifundefined{maketitle}{\def\maketitle#1{}}{}%
\@ifundefined{affiliation}{\def\affiliation#1{}}{}%
\@ifundefined{proof}{}{}%
\@ifundefined{endproof}{}{}%
\@ifundefined{newfield}{\def\newfield#1#2{}}{}%
\@ifundefined{chapter}{\def\chapter#1{\par(Chapter head:)#1\par }%
 \newcount\c@chapter}{}%
\@ifundefined{part}{\def\part#1{\par(Part head:)#1\par }}{}%
\@ifundefined{section}{\def\section#1{\par(Section head:)#1\par }}{}%
\@ifundefined{subsection}{\def\subsection#1%
 {\par(Subsection head:)#1\par }}{}%
\@ifundefined{subsubsection}{\def\subsubsection#1%
 {\par(Subsubsection head:)#1\par }}{}%
\@ifundefined{paragraph}{\def\paragraph#1%
 {\par(Subsubsubsection head:)#1\par }}{}%
\@ifundefined{subparagraph}{\def\subparagraph#1%
 {\par(Subsubsubsubsection head:)#1\par }}{}%
\@ifundefined{therefore}{}{}%
\@ifundefined{backepsilon}{}{}%
\@ifundefined{yen}{}{}%
\@ifundefined{registered}{%
   \def\registered{\relax\ifmmode{}\r@gistered
                    \else$\m@th\r@gistered$\fi}%
 \def\r@gistered{^{\ooalign
  {\hfil\raise.07ex\hbox{$\scriptstyle\rm\text{R}$}\hfil\crcr
  \mathhexbox20D}}}}{}%
\@ifundefined{Eth}{}{}%
\@ifundefined{eth}{}{}%
\@ifundefined{Thorn}{}{}%
\@ifundefined{thorn}{}{}%
%
\@ifundefined{degree}{}{}%
%
\newdimen\theight
\def\Column{%
 \vadjust{\setbox\z@=\hbox{\scriptsize\quad\quad tcol}%
  \theight=\ht\z@\advance\theight by \dp\z@\advance\theight by \lineskip
  \kern -\theight \vbox to \theight{%
   \rightline{\rlap{\box\z@}}%
   \vss
   }%
  }%
 }%
\def\qed{%
 \ifhmode\unskip\nobreak\fi\ifmmode\ifinner\else\hskip5\p@\fi\fi
 \hbox{\hskip5\p@\vrule width4\p@ height6\p@ depth1.5\p@\hskip\p@}%
 }%
\def\miss{\hbox{\vrule height2\p@ width 2\p@ depth\z@}}%
%
%
\def\tcol#1{{\baselineskip=6\p@ \vcenter{#1}} \Column}  %
%
%
\@ifundefined{note}{}{}%

\def\newfmtname{LaTeX2e}

\ifx\fmtname\newfmtname
  \DeclareOldFontCommand{\rm}{\normalfont\rmfamily}{\mathrm}
  \DeclareOldFontCommand{\sf}{\normalfont\sffamily}{\mathsf}
  \DeclareOldFontCommand{\tt}{\normalfont\ttfamily}{\mathtt}
  \DeclareOldFontCommand{\bf}{\normalfont\bfseries}{\mathbf}
  \DeclareOldFontCommand{\it}{\normalfont\itshape}{\mathit}
  \DeclareOldFontCommand{\sl}{\normalfont\slshape}{\@nomath\sl}
  \DeclareOldFontCommand{\sc}{\normalfont\scshape}{\@nomath\sc}
\fi

%


\@ifundefined{theorem}{\newtheorem{theorem}{Theorem}}{}
\@ifundefined{lemma}{\newtheorem{lemma}[theorem]{Lemma}}{}
\@ifundefined{corollary}{}{}
\@ifundefined{conjecture}{}{}
\@ifundefined{proposition}{\newtheorem{proposition}[theorem]{Proposition}}{}
\@ifundefined{axiom}{}{}
\@ifundefined{remark}{\newtheorem{remark}{Remark}}{}
\@ifundefined{example}{}{}
\@ifundefined{exercise}{}{}
\@ifundefined{definition}{\newtheorem{definition}{Definition}}{}

\@ifundefined{mathletters}{%
  \newcounter{equationnumber}  
  \def\mathletters{%
     \addtocounter{equation}{1}
     \edef\@currentlabel{\theequation}%
     \setcounter{equationnumber}{\c@equation}
     \setcounter{equation}{0}%
     \edef\theequation{\@currentlabel\noexpand\alph{equation}}%
  }
  
}{}

\@ifundefined{BibTeX}{%
    \def\BibTeX{{\rm B\kern-.05em{\sc i\kern-.025em b}\kern-.08em
                 T\kern-.1667em\lower.7ex\hbox{E}\kern-.125emX}}}{}%
\@ifundefined{AmS}%
    {\def\AmS{{\protect\usefont{OMS}{cmsy}{m}{n}%
                A\kern-.1667em\lower.5ex\hbox{M}\kern-.125emS}}}{}%
\@ifundefined{AmSTeX}{}{}%
%

\def\@@eqncr{\let\@tempa\relax
    \ifcase\@eqcnt \def\@tempa{& & &}\or \def\@tempa{& &}%
      \else \def\@tempa{&}\fi
     \@tempa
     \if@eqnsw
        \iftag@
           \@taggnum
        \else
           \@eqnnum\stepcounter{equation}%
        \fi
     \fi
     \global\tag@false
     \global\@eqnswtrue
     \global\@eqcnt\z@\cr}

\def\TCItag{\@ifnextchar*{\@TCItagstar}{\@TCItag}}
\def\@TCItag#1{%
    \global\tag@true
    \global\def\@taggnum{(#1)}}
\def\@TCItagstar*#1{%
    \global\tag@true
    \global\def\@taggnum{#1}}
%
%
%
%
\def\binom#1#2{{#1 \choose #2}}%
%
%
%
%
%
%
%
%
%
%
%
%
%
%
%
%
%
%
%
%
%
%
%
%
%
%
%
%
%
%
%
%
%
%
%
%
%
%
%
%
%
%
%
%
%
%
\def\dsum{\mathop{\displaystyle \sum }}%
%
%
%
%
%
%
%
%
%
%
%

%
%

\makeatother

\externaldocument{BSBP-14May2018-supplement}

\begin{document}

\title{Best Subset Binary Prediction\thanks{{\footnotesize We are grateful
to the co-editor, Jianqing Fan, an associate editor and three anonymous
referees for constructive comments and suggestions. We also thank Stefan
Hoderlein, Joel Horowitz, Shakeeb Khan, Toru Kitagawa, Arthur Lewbel, and
participants in 2017 Asian Meeting of the Econometric Society and 2017
annual conference of the International Association for Applied Econometrics
for helpful comments. This work was supported by the Ministry of Science and
Technology, Taiwan (MOST106-2410-H-001-015-), Academia Sinica (Career
Development Award research grant), the European Research Council
(ERC-2014-CoG-646917-ROMIA), the UK Economic and Social Research Council
(ES/P008909/1 via CeMMAP), and the British Academy (International
Partnership and Mobility Scheme Grant, reference number PM140162).}}}
\author{Le-Yu Chen\thanks{%
E-mail: lychen@econ.sinica.edu.tw} \\
{\small {Institute of Economics, Academia Sinica}} \and Sokbae Lee\thanks{%
E-mail: sl3841@columbia.edu} \\
{\small {Department of Economics, Columbia University}}\\
{\small {Centre for Microdata Methods and Practice, Institute for Fiscal
Studies} }}
\date{May 2018}
\maketitle

\begin{abstract}
\noindent We consider a variable selection problem for the prediction of
binary outcomes. We study the best subset selection procedure by which the
covariates are chosen by maximizing \citet{Manski:1975,
Manski:1985}'s maximum score objective function subject to a constraint on
the maximal number of selected variables. We show that this procedure can be
equivalently reformulated as solving a mixed integer optimization problem,
which enables computation of the exact or an approximate solution with a
definite approximation error bound. In terms of theoretical results, we
obtain non-asymptotic upper and lower risk bounds when the dimension of
potential covariates is possibly much larger than the sample size. Our upper
and lower risk bounds are minimax rate-optimal when the maximal number of
selected variables is fixed and does not increase with the sample size. We
illustrate usefulness of the best subset binary prediction approach via
Monte Carlo simulations and an empirical application of the work-trip
transportation mode choice.


\textbf{Keywords}:\textit{\ binary choice, maximum score estimation, best
subset selection, }$\ell_{0}$\textit{-constrained maximization, mixed
integer optimization, minimax optimality, finite sample property}


\textbf{JEL codes}: C52, C53, C55
\end{abstract}

\section{Introduction\label{model}}

Prediction of binary outcomes is an important topic in economics and various
scientific fields. Let $Y\in \{0,1\}$ be the binary outcome of interest and $%
W$ a vector of covariates for predicting $Y$. Assume that the researcher has
a training sample of $n$ independent identically distributed (i.i.d.)
observations $\left( Y_{i},W_{i}\right) _{i=1}^{n}$ of $(Y,W)$. For $w\in 
\mathcal{W}$, let 
\begin{equation}
b_{\theta }(w)\equiv 1\left\{ w^{\prime }\theta \geq 0\right\} ,
\label{predictor}
\end{equation}%
where $\mathcal{W}$ is the support of $W$, $\theta $ is a vector of
parameters, and $1\left\{ \cdot \right\} $ is an indicator function that
takes value 1 if its argument is true and 0 otherwise.

One reasonable prediction rule is to choose $\theta $ such that it maximizes
the probability of making the correct prediction $P\left[ Y=b_{\theta }(W)%
\right] $. However, this is infeasible in practice since the joint
distribution of $(Y,W)$ is unknown. A natural sample analog is to maximize
the sample average score which equals the proportion of correct predictions
under the prediction rule (\ref{predictor}) in the training sample. This
maximization problem is equivalent to the maximum score estimation in binary
response models and is pioneered by \citet{Manski:1975, Manski:1985}. Thus,
we call the corresponding prediction rule the \emph{maximum score }%
prediction rule. See \citet{Manski:Thompson:1989}, \citet{Jiang:10}, and 
\citet{
Elliott2013} for prediction in the maximum score approach.

This prediction problem has the same structure as the binary classification
problem, which is extensively studied in the statistics and machine learning
literature. For example, see the classic work of \citet{DGL:1996} among many
others. In this literature, the empirical risk minimization (ERM) classifier
over the class of binary classifiers specified by (\ref{predictor}) is
defined as a minimizer of the empirical predictive risk, which is taken to
be one minus the objective function of the maximum score prediction problem.
In other words, the ERM classification rule is identical to the maximum
score prediction rule.

In this paper, we address the covariate selection issue in the framework of
predicting the binary outcome $Y_{i}$ using the class of linear
threshold-crossing prediction rules $b_{\theta }(W_{i})$ defined by (\ref%
{predictor}). We study the \emph{best subset }selection procedure by which
the covariates are chosen among a collection of candidate explanatory
variables by maximizing the empirical score subject to a constraint on the
maximal number of selected variables. In other words, we investigate
theoretical and numerical properties of the $\ell _{0}$-norm constrained
maximum score prediction rules.\footnote{%
Here, the $\ell _{0}$-norm of a real vector refers to the number of non-zero
components of the vector.}

To the best of our knowledge, \citet{Greenshtein2006} and \citet{Jiang:10}
are the only existing papers in the literature that explicitly considered
the same prediction problem as ours. \citet{Greenshtein2006} considered a
general loss function that includes maximum score prediction as a special
case in the i.i.d.\thinspace setup. \citet{Greenshtein2006} focused on a
high dimensional case and established conditions under which the excess risk
converges to zero as $n\rightarrow \infty $. \citet{Jiang:10} focused on the
prediction of time series data and obtained an upper bound for the excess
risk. Neither \citet{Greenshtein2006} nor \citet{Jiang:10} provided any
numerical results for the best subset maximum score prediction rule. In
contrast, we focus on cross-sectional applications and emphasize
computational aspects.

The main contributions of this paper are twofold: first, we show that the
best subset maximum score prediction rule is minimax rate-optimal and
second, we demonstrate that it can be implemented via mixed integer
optimization. The first contribution is theoretical and builds on the
literature of empirical risk minimization 
\citep[][in
particular]{tsybakov2004, massart2006}. Specifically, we obtain \emph{%
non-asymptotic} upper and lower risk bounds when the dimension of potential
covariates is possibly much larger than the sample size $n$. Our upper and
lower risk bounds are minimax rate-optimal when the maximal number of
selected variables is fixed and does not increase with $n$. The existing
results of finite-sample upper and lower risk bounds for the binary
prediction problem focus on the case where there is no variable selection
and the set of covariates is fixed and low-dimensional. Our risk bound
results extend to the setup under the $\ell _{0}$-norm constraint when the
set of potential covariates is high-dimensional.\footnote{%
\citet{Raskutti2011} developed minimax rate results for high-dimensional
linear mean regression models. We have used in the derivation of our lower
risk bound a technical lemma of their paper \citep[][Lemma
4]{Raskutti2011}, which is based on the approximation theory literature.
Nonetheless, our results are not directly obtainable from %
\citet{Raskutti2011}, who considered the least squares objective function.}

The second contribution is computational. We face two kinds of computational
challenges. One challenge comes from the nature of the objective function
and the other is from the best subset selection. The maximum score objective
function is a piecewise constant function whose range set contains only
finitely many points. Hence the maximum of the score maximization problem is
always attained yet the maximizer is generally not unique. It is known that
computing the maximum score estimates regardless of the presence of the $%
\ell _{0}$ constraint is NP (non-deterministic polynomial-time)-hard 
\citep[see,
e.g.,][]{Johnson:78}. See \citet{Manski:Thompson:1986} and \citet{PINKSE1993}
for first generation algorithms for maximum score estimation.

Our computation algorithm is based on the method of mixed integer
optimization (MIO). \citet{Florios:Skouras:08} provided compelling numerical
evidence that the MIO approach is superior to the first-generation
approaches. \citet{Kitagawa:Tetenov:2015} used an MIO formulation that is
different from \citet{Florios:Skouras:08} to solve maximum score type
problems. The objective of interest in \citet{Kitagawa:Tetenov:2015} is to
develop treatment choice rules by maximizing an empirical welfare criterion,
which resembles the maximum score objective function. They derived minimax
optimality and used the MIO formulation to implement their algorithm.
Neither \citet{Florios:Skouras:08} nor \citet{Kitagawa:Tetenov:2015} were
concerned with the variable selection problem.

These second generation approaches are driven by developments in MIO solvers
and also by availability of a much faster computer compared to the period
when the first generation algorithms were proposed. %
\citet{Florios:Skouras:08} reported that they obtained the exact maximum
score estimates using \citet{horowitz1993}'s data in 10.5 hours. In this
application, the sample size was $n=842$ and there were 4 parameters to
estimate.

It is well known that use of a good and tighter parameter space can
strengthen the performance of a global optimization procedure including the
MIO approach. In this paper, we propose a data driven approach to refine the
parameter space. Using a state-of-the-art MIO solver as well as a
tailor-made heuristic to choose the parameter space, it took us less than 5
minutes to obtain the exact maximum score estimates using the same dataset
with the same number of parameters to estimate.\footnote{%
This numerical result can be found in Online Appendix \ref{emp:linear:spec}
of the paper.} This is a dramatic improvement at the factor of more than 100
relative to the numerical performance reported in \citet{Florios:Skouras:08}%
. In other words, we demonstrate that hardware improvements combined with
the advances in MIO solvers and also with a carefully chosen parameter space
have made the maximum score approach empirically much more relevant now than
ten years ago.

The second numerical challenge is concerned with constrained optimization
with the $\ell _{0}$-norm constraint. It is well known that the $\ell _{0}$%
-norm constraint renders the variable selection problem NP-hard even in the
regression setup where the objective function is convex and smooth 
\citep[see,
e.g.,][]{Natarajan:95,bertsimas2016}. Recently \citet{bertsimas2016}
proposed a novel MIO approach to the best subset variable selection problem
when least squares and least absolute deviation risks are concerned. They
demonstrated that the MIO approach can efficiently deliver a provably
optimal solution to the resulting $\ell _{0}$-norm constrained risk
minimization problem for a variety of datasets with practical problem size.
Our implementation of the best subset maximum score prediction rules
combines insights from \citet{bertsimas2016}, \citet{Florios:Skouras:08},
and \citet{Kitagawa:Tetenov:2015}. We present two alternative MIO solution
methods that complement each other.

In practical applications, it is useful to consider an approximate solution
by adopting an early termination rule. In our empirical application, by
setting an explicitly pre-specified optimization error, we were able to
obtain approximate maximum score estimates with \citet{horowitz1993}'s data
in around 10 minutes when both an intercept term and one specific random
covariate were always selected, and there were 9 additional auxiliary
covariates that were subject to the constraint where at most 5 of them could
be selected. This suggests that fast developments in computing environments
will enable us to solve an empirical problem at a practically relevant scale
in very near future. We provide additional numerical evidence in Monte Carlo
experiments in a high-dimensional setup when the number of potential
covariates is larger than the sample size.

The remainder of this paper is organized as follows. In Section \ref%
{covariate selection}, we describe our prediction rule. Section \ref{risk
performance} establishes theoretical properties of the proposed prediction
rule. In Section \ref{implementation}, we present computation algorithms
using the MIO approach, and in Section \ref{simulation study}, we conduct a
simulation study on the performance of our prediction rule in both low and
high dimensional variable selection problems. In Section \ref{empirical
application}, we illustrate usefulness of our prediction rule in the
empirical application of work-trip mode choice using \citet{horowitz1993}'s
data. We then conclude the paper in Section \ref{conclusion}. Proofs of all
theoretical results and supplementary material of this paper are collated in
online appendices.

\section{A Best Subset Approach to Maximum Score Prediction of Binary
Outcomes}

\label{covariate selection}

In this section, we describe our proposal of the best subset maximum score
prediction rule. Following \citet{MD:99} and \citet{Danilov2004}, we
distinguish between \emph{focus covariates} that are always included in the
prediction rule and \emph{auxiliary covariates} of which we are less
certain. We thus decompose the covariate vector $W$ as $W=(X,Z)$, where $X$
is a $\left( k+1\right) $-dimensional vector of focus covariates and $Z$ is
a $p$-dimensional vector of auxiliary covariates.

Noting that $b_{a\theta }(w)=$ $b_{\theta }(w)$ for any positive real scalar 
$a$, we adopt the same scale normalization method as in \citet{Horowitz:92}
and \citet{Jiang:10} by restricting the magnitude of the coefficient of one
of the focus covariates to be unity. Specifically, write $X=(X_{0},%
\widetilde{X})$ where $X_{0}$ is a scalar variable and $\widetilde{X}$ is
the remaining $k$-dimensional subvector of focus covariates. The parameter
vector $\theta $ in (\ref{predictor}) is decomposed accordingly as $\theta
=(\alpha ,\beta ,\gamma )$, where $\alpha \in \{-1,1\}$ and $\left( \beta
,\gamma \right) \in \Theta $, which is a subset of $\mathbb{R}^{k+p}$. In
this notation, the binary prediction rule has the following form: 
\begin{equation}
b_{\alpha ,\beta ,\gamma }(w)=1\{\alpha x_{0}+\widetilde{x}^{\prime }\beta
+z^{\prime }\gamma \geq 0\}\text{ for }w\in \mathcal{W}.
\label{augmented model}
\end{equation}

We consider a parsimonious variable selection method by which the
constituted prediction rule does not include more than a pre-specified
number of auxiliary covariates. For any $p$ dimensional real vector $c$, let 
$\left\Vert c\right\Vert _{0}\equiv\sum\nolimits_{j=1}^{p}1\{c_{j}\neq0\}$
be the $\ell_{0}$-norm of $c$. We carry out the $\ell_{0}$-norm constrained
covariate selection procedure by solving the constrained maximization problem%
\begin{equation}
\max\nolimits_{\left( \alpha,\beta,\gamma\right) \in\{-1,1\}\times\Theta
_{q}}\text{ }S_{n}(\alpha,\beta,\gamma ),
\label{best subset maximum score binary prediction rule}
\end{equation}
where the objective function $S_{n}$ is defined as 
\begin{equation}
S_{n}(\theta )\equiv n^{-1}\sum\nolimits_{i=1}^{n}1\{Y_{i}=b_{\theta
}(W_{i})\}  \label{empirical objective function}
\end{equation}%
and the $\ell_0$-norm constrained parameter space is given as 
\begin{equation}
\Theta_{q}\equiv\{\left( \beta,\gamma\right) \in\Theta\subset\mathbb{R}%
^{k+p}:\left\Vert \gamma\right\Vert _{0}\leq q\}  \label{theta_q}
\end{equation}
for a given positive integer $q$.

As we discussed in the introduction, solving for the exact maximizer for (%
\ref{best subset maximum score binary prediction rule}) is desirable yet can
be computationally challenging. It is hence practically useful to consider
an approximate solution, which is constructed below, to the maximization
problem (\ref{best subset maximum score binary prediction rule}).

For any $\varepsilon \geq 0$, let $(\widehat{\alpha },\widehat{\beta },%
\widehat{\gamma })\in \{-1,1\}\times \Theta _{q}$ be an approximate
maximizer with $\varepsilon $ tolerance level such that%
\begin{equation}
S_{n}(\widehat{\alpha },\widehat{\beta },\widehat{\gamma })\geq
\max\nolimits_{\left( \alpha ,\beta ,\gamma \right) \in \{-1,1\}\times
\Theta _{q}}\text{ }S_{n}(\alpha ,\beta ,\gamma )-\varepsilon \text{ almost
surely.}  \label{epsilon level best subset MSBP rule}
\end{equation}%
We refer to the prediction rule defined by $1\{\widehat{\alpha }x_{0}+%
\widetilde{x}^{\prime }\widehat{\beta }+z^{\prime }\widehat{\gamma }\geq 0\}$
as the approximate best subset maximum score binary prediction rule.%
\footnote{%
The dependence of $(\widehat{\alpha },\widehat{\beta },\widehat{\gamma })$
on $\varepsilon $ is suppressed for simplicity of notation.} The value of $%
\varepsilon $ can be specified for early termination of the solution
algorithm to the problem (\ref{best subset maximum score binary prediction
rule}). In Section \ref{implementation}, we will present an algorithm that
allows for computing an approximate solution to (\ref{best subset maximum
score binary prediction rule}) within a definite approximation error bound
specified by the tolerance level $\varepsilon $. In what follows, we use
PRESCIENCE as shorthand for the approximate best subset maximum score binary
prediction rule.\footnote{%
It comes from the aPpRoximate bEst S(C)ubset maxImum scorE biNary prediCtion
rulE.}

\begin{remark}
In terms of model selection, there are several aspects one needs to
consider. First, one needs to specify the covariate vector $W$. We recommend
starting with a large set of covariates for $W$ since we have a built-in
model selection procedure. Second, it is necessary to decide which
covariates belong to $X$ (focus covariates) and $Z$ (auxiliary covariates).
What consists of auxiliary covariates depends on particular applications.
The auxiliary covariates correspond to the part of the model specification
the researcher is not sure about. For example, they could be some higher
order terms or interaction terms. If a researcher does not have concrete
ideas about the auxiliary covariates, we recommend letting the auxiliary
covariates be all regressors except one the researcher is specifically
interested in. Third, it is required to choose $q$ (the $\ell _{0}$-norm
constraint). The constant $q$ is an important tuning parameter in our
procedure. A particular choice of $q$ can be motivated in some specific
applications. Generally speaking, for the purpose of prediction, there is
the standard tradeoff between flexibility, which requires a larger $q$, and
the risk of over-fitting, which pushes for a smaller $q$. We recommend using
cross validation to choose $q$, as we will demonstrate in our empirical
example and Monte Carlo experiments.
\end{remark}

\section{Theoretical Properties of PRESCIENCE}

\label{risk performance}

In this section, we study the theoretical properties of PRESCIENCE. Let $F$
denote the joint distribution of $\left( Y,W\right) $. For $\left( \alpha
,\beta ,\gamma \right) \in \{-1,1\}\times \Theta $, let 
\begin{equation}
S(\alpha ,\beta ,\gamma )\equiv P\left( Y=b_{\alpha ,\beta ,\gamma
}(W)\right) .  \label{S(a,b,c)}
\end{equation}%
Note that $S(\alpha ,\beta ,\gamma )$ depends on the joint distribution $F$.
Given a cardinality bound $q$, let 
\begin{equation}
S_{q}^{\ast }\equiv \sup\nolimits_{\left( \alpha ,\beta ,\gamma \right) \in
\{-1,1\}\times \Theta _{q}}\text{ }S(\alpha ,\beta ,\gamma ).
\label{theoretical prediction rule}
\end{equation}%
That is, $S_{q}^{\ast }$ is the supremum of $S(\alpha ,\beta ,\gamma )$
given the $\ell _{0}$-norm constraint.

Following the literature on empirical risk minimization (see, e.g., %
\citet{DGL:1996}, \citet{Lugosi:2002}, \citet{tsybakov2004}, %
\citet{massart2006}, \citet{Greenshtein2006} and \citet{Jiang:10} among many
others), we assess the predictive performance of PRESCIENCE by bounding the
difference 
\begin{equation}
U_{n}\equiv S_{q}^{\ast }-S(\widehat{\alpha },\widehat{\beta },\widehat{%
\gamma }),  \label{U}
\end{equation}%
where $(\widehat{\alpha },\widehat{\beta },\widehat{\gamma })$ is defined by
(\ref{epsilon level best subset MSBP rule}). The difference $U_{n}$ is
non-negative by the definition of $S_{q}^{\ast }$. Hence, a good prediction
rule will result in a small value of $U_{n}$ with a high probability and
also on average.

Throughout this section, we assume that $\varepsilon =0$ for simplicity.
Before presenting our theoretical results, we first introduce some notation.
For any two real numbers $a$ and $b$, let $a\vee b\equiv \max \{a,b\}$. Let $%
s\equiv k+q$ and 
\begin{equation}
r_{n}\equiv s\ln (p\vee n)\vee 1.  \label{rn}
\end{equation}

\begin{theorem}
\label{maximal inequality} Assume $s\geq1$. Then for all $\sigma>0$, there
is a universal constant $M_{\sigma}$, which depends only on $\sigma$, such
that 
\begin{equation}
P\left( U_{n}>2\sqrt{\frac{M_{\sigma}r_{n}}{n}}\right) \leq \exp(-\sigma
r_{n}),  \label{bound}
\end{equation}
provided that 
\begin{equation}
\left( 4s+4\right) \ln\left( M_{\sigma}r_{n}\right) \leq r_{n}+\left(
6s+5\right) \ln2.  \label{condition on r_n}
\end{equation}
\end{theorem}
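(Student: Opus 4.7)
\textbf{Proof proposal for Theorem \ref{maximal inequality}.}

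The plan is to reduce the control of $U_n$ to a uniform concentration inequality for the empirical process $S_n - S$ over the parameter class $\{-1,1\} \times \Theta_q$, and then to exploit the fact that this class decomposes into a small union of Vapnik--Chervonenkis (VC) classes. Concretely, let $\theta^{\ast } = (\alpha^{\ast},\beta^{\ast},\gamma^{\ast})$ be an approximate maximizer of $S$ over $\{-1,1\} \times \Theta_q$ with arbitrarily small slack (letting the slack vanish at the end). Using $\varepsilon = 0$ in the definition of $(\widehat{\alpha},\widehat{\beta},\widehat{\gamma})$, the standard ERM telescoping
\begin{equation*}
U_n \;=\; [S(\theta^{\ast}) - S_n(\theta^{\ast})] + [S_n(\theta^{\ast}) - S_n(\widehat{\theta})] + [S_n(\widehat{\theta}) - S(\widehat{\theta})] \;\leq\; 2 \sup_{\theta \in \{-1,1\}\times \Theta_q} |S_n(\theta) - S(\theta)|
\end{equation*}
reduces the theorem to a one-sided concentration bound for this supremum at level $\sqrt{M_\sigma r_n / n}$.

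Next I would bound the complexity of the indexing class. For each $J \subseteq \{1,\ldots,p\}$ with $|J| \leq q$, let $\mathcal{C}_J$ be the family of classifiers $b_\theta$ with $\gamma_j = 0$ for $j \notin J$. Each $\mathcal{C}_J$ is a subclass of the indicators of half-spaces in $\mathbb{R}^{k+1+|J|}$, which has VC dimension at most $k + |J| + 2 \leq s + 2$. The full index set $\{-1,1\} \times \Theta_q$ is the union of the $2\binom{p}{q} \leq 2 p^{q}$ subclasses obtained from all admissible $J$ and both signs of $\alpha$. Applying a Massart--type VC deviation inequality (as in the literature this paper cites) to each subclass gives, for every $t > 0$ and an absolute constant $C$,
\begin{equation*}
P\!\left( \sup_{\theta \in \{-1,1\} \times \Theta^{J}} |S_n(\theta) - S(\theta)| \;\geq\; C\sqrt{\tfrac{(s+2) \log(n/(s+2)) + t}{n}} \right) \;\leq\; e^{-t},
\end{equation*}
after which a union bound over the at most $2p^q$ subclasses with $t = \sigma r_n + q \log p + \log 2$ produces
\begin{equation*}
P\!\left( \sup_{\theta \in \{-1,1\}\times \Theta_q} |S_n(\theta) - S(\theta)| \;\geq\; C\sqrt{\tfrac{s \log n + q \log p + \sigma r_n}{n}} \right) \;\leq\; \exp(-\sigma r_n).
\end{equation*}
Since $s \log n + q \log p \leq 2 r_n$ by the definition of $r_n$, all these contributions collapse into $\sqrt{M_\sigma r_n / n}$ for some constant $M_\sigma$ depending only on $\sigma$, which gives the advertised bound (\ref{bound}).

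The role of condition (\ref{condition on r_n}) is exactly to certify that we are in the regime where the above VC estimate is valid. The growth function bound $(en/V)^V$ for $V \sim s$ enters the Massart-style inequality through a factor like $(4s+4) \log(n/s)$; the stated condition ensures that this logarithmic factor (evaluated at the critical radius $M_\sigma r_n / n$) is dominated by $r_n$, so that the square-root concentration rate $\sqrt{M_\sigma r_n / n}$ is actually attained. The main obstacle in the formal write-up is bookkeeping: the constants from the VC deviation inequality, from the union bound over $J$, and from the tail level $\sigma$ must be consolidated into a single constant $M_\sigma$, and one must verify that condition (\ref{condition on r_n}) is strong enough to absorb all the residual $\log$ terms into $r_n$ with the clean prefactor $2$ appearing in (\ref{bound}).
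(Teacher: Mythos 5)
Your proposal is correct, and its skeleton is the same as the paper's: the ERM telescoping $U_n\le 2\sup_{\theta\in\{-1,1\}\times\Theta_q}|S_n(\theta)-S(\theta)|$, the decomposition of $\{-1,1\}\times\Theta_q$ into $2\binom{p}{q}\le 2p^q$ subclasses indexed by the support set $J$ and the sign of $\alpha$, a per-subclass VC bound of order $s$, a union bound, and the observation that all logarithmic terms are absorbed by $r_n$. The genuine difference is the concentration tool applied to each subclass. The paper bounds covering numbers of the loss classes via Kosorok's Lemmas 9.6, 9.9 and Theorem 9.3 and then invokes Talagrand's (1994) inequality, which yields the tail $\bigl(Dt/\sqrt{4s+4}\bigr)^{4s+4}e^{-2t^2}$; the polynomial prefactor is precisely why condition \eqref{condition on r_n} appears — it is used to absorb the term $(4s+4)\ln(M_\sigma r_n)$ when $t=\sqrt{M_\sigma r_n}$ with $M_\sigma=(1+\sigma/2)\vee D^2$. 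You instead combine an expectation bound for a VC class with a bounded-differences (Massart/McDiarmid-type) deviation inequality, which hides the absolute constants inside $C$ and hence inside $M_\sigma$ (legitimate, since $M_\sigma$ is unspecified) and in fact never needs \eqref{condition on r_n}; proving the bound unconditionally is of course fine, since the theorem only asserts it under that condition. For this reason your closing remark that \eqref{condition on r_n} "certifies the regime where the VC estimate is valid" mischaracterizes your own argument — in your route the condition plays no role; its real function is specific to the paper's Talagrand-based tail. What the paper's route buys is the sharp sub-Gaussian constant $e^{-2t^2}$ and a fully explicit $M_\sigma$; what yours buys is a cleaner argument with no side condition, at the price of an unquantified constant. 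Two minor bookkeeping points you should make explicit in a formal write-up: the empirical process is indexed by the loss class $\{(y,w)\mapsto 1\{y=b_\theta(w)\}\}$ rather than the classifier class itself, so the half-space VC bound must be transferred to this class (this is what the paper's appeal to Kosorok's lemmas accomplishes), and your bound $(s+2)\log(n/(s+2))+q\log p+\log 2\le (\mathrm{const})\cdot r_n$ should be stated with the constant folded into $M_\sigma$ rather than claimed as $s\log n+q\log p\le 2r_n$ alone.
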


Theorem \ref{maximal inequality} establishes that the tail probability of $%
U_{n}$ decays exponentially in $r_{n}$. Moreover, this result is
non-asymptotic: inequality (\ref{bound}) is valid for every sample size $n$
for which condition (\ref{condition on r_n}) holds. By comparing the leading
terms on both sides of inequality (\ref{condition on r_n}), we can see that
condition (\ref{condition on r_n}) is satisfied, for instance, if 
\begin{equation}
4 [\ln(s) + \ln (\ln (p \vee n)) + \ln(M_{\sigma})] \leq \frac{1}{2} \ln (p
\vee n).  \label{suff condition on r_n}
\end{equation}
Hence, condition (\ref{condition on r_n}) is satisfied easily when $p\vee n$
takes a relatively large value compared to $s$. If $(p\vee n)$ diverges to
infinity, then $s$ can diverge at a sufficiently slow rate.

Theorem \ref{maximal inequality} implies that 
\begin{equation}
E\left[ U_{n}\right] =O\left( n^{-1/2}\sqrt{s\ln (p\vee n)}\right) =o(1),
\label{theorem1-expectation}
\end{equation}%
provided that 
\begin{equation}
s\ln (p\vee n)=o(n)  \label{rate}
\end{equation}%
holds. This allows the case that 
\begin{equation}
\ln p=O(n^{\alpha })\text{ and }s=o(n^{1-\alpha })\text{ for }0<\alpha <1.
\label{rate result}
\end{equation}%
In other words, the predictive performance of PRESCIENCE remains good even
when the number of potentially relevant covariates ($p$) grows
exponentially, provided that the number of selected covariates ($s$) can
only grow at a polynomial rate. \citet{GreenshteinRitov2004} and %
\citet{Greenshtein2006} consider the case where $p$ grows at a polynomial
rate. For this case, condition (\ref{rate}) implies that $s=o\left( n/\ln
n\right) $, which coincides with the optimal sparsity rate established by %
\citet{GreenshteinRitov2004} and \citet{Greenshtein2006} under which a
sequence of predictor selection procedures subject to the sparsity
constraint can be shown to be persistent.

\begin{remark}
For the case with $\varepsilon >0$, it is straightforward to modify the
theoretical results presented above such that the rate result (\ref%
{theorem1-expectation}) continues to hold provided that 
\begin{equation}
\varepsilon =O\left( n^{-1/2}\sqrt{s\ln (p\vee n)}\right) .  \label{epsilon}
\end{equation}
\end{remark}

\subsection{An Upper Bound under the Margin Condition}

The result (\ref{theorem1-expectation}) is derived under the
i.i.d.\thinspace setup but does not hinge on other regularity conditions on
the underlying data generating distribution $F$. This rate result can be
improved under additional assumptions on the distribution $F$. In this
section, we consider a condition that is called the \emph{margin condition}
in the literature under which we may obtain a sharper result on the upper
bound of $E\left[ U_{n}\right] $. As before, the derived bound will be
non-asymptotic.

It is necessary to introduce additional notation. Let%
\begin{equation}
\mathcal{B}_{q}\mathcal{\equiv }\left\{ b_{\theta }:\theta \in
\{-1,1\}\times \Theta _{q}\right\} .  \label{Bq}
\end{equation}%
That is, $\mathcal{B}_{q}$ is the class of all prediction rules in %
\eqref{augmented model} with the $\ell _{0}$-norm constraint. For $w\in 
\mathcal{W}$, let%
\begin{align}
\eta (w)& \equiv P(Y=1|W=w),  \label{eta_w} \\
b^{\ast }(w)& \equiv 1\left\{ \eta (w)\geq 0.5\right\} .  \label{b_star}
\end{align}%
For any measurable function $f:\mathcal{W\mapsto }\mathbb{R}$, let $%
\left\Vert f\right\Vert _{1}=E\left[ \left\vert f(W)\right\vert \right] $
denote the $L_{1}$-norm of $f$. The functions $\eta $ and $b^{\ast }$ as
well as the $L_{1}$-norm $\left\Vert \cdot \right\Vert _{1}$ depend on the
data generating distribution $F$. For any indicator function $b:\mathcal{W}%
\mapsto \left\{ 0,1\right\} $, let 
\begin{equation}
\widetilde{S}\left( b\right) \equiv P\left( Y=b(W)\right) .  \label{S_tilda}
\end{equation}%
We now state the following regularity condition.

\begin{condition}[Margin Condition]
\label{margin condition for upper bound}There are some $\vartheta \geq 1$
and $h>0$ such that, for every binary predictor $b:\mathcal{W}\mapsto
\{0,1\} $, 
\begin{equation}
\widetilde{S}\left( b^{\ast }\right) -\widetilde{S}\left( b\right) \geq
h^{\vartheta }\left\Vert b^{\ast }-b\right\Vert _{1}^{\vartheta }.
\label{well posedness}
\end{equation}
\end{condition}

Condition \ref{margin condition for upper bound} is termed as the margin
condition in the literature (see, e.g., \citet{mammen1999}, %
\citet{tsybakov2004} and \citet{massart2006}). For any binary predictor $b$, 
\begin{equation}
\widetilde{S}(b^{\ast })-\widetilde{S}(b)=E\left[ \left\vert 2\eta
(W)-1\right\vert \left\vert b^{\ast }(W)-b(W)\right\vert \right] ,
\label{objective value difference}
\end{equation}%
so that $\widetilde{S}(b)$ is maximized at $b=b^{\ast }$. Hence, Condition %
\ref{margin condition for upper bound} implies that the functional $%
\widetilde{S}\left( \cdot \right) $ has a well-separated maximum. Suppose
that there exist universal positive constants $C$ and $\alpha $ such that 
\begin{equation*}
P\left( |\eta (W)-1/2|\leq t\right) \leq Ct^{\alpha }
\end{equation*}%
for all $t>0$. Then by modifying the proof of Proposition 1 of %
\citet{tsybakov2004} slightly, we can show that \eqref{well posedness} holds
with $\vartheta =(1+\alpha )/\alpha $. See \citet{tsybakov2004} for further
discussions on the margin condition.

Recall that it is not necessary to assume \eqref{well posedness} to
establish the risk consistency, as shown in Theorem \ref{maximal inequality}%
. We show below that we can obtain a tighter upper bound on $E\left[ U_{n}%
\right] $ under \eqref{well posedness}. Let%
\begin{equation}
\rho _{n}\equiv 1\vee \left[ \ln 2+q\ln p+\left( s+1\right) \ln \left(
n+1\right) \right] .  \label{tau_n}
\end{equation}%
The next theorem, which is an application of \citet[Theorem 2]{massart2006},
establishes a finite-sample bound on $E\left[ U_{n}\right] $ under the
margin condition.

\begin{theorem}
\label{risk upper bound}There are universal constants $K$ and $K^{\prime }$
such that%
\begin{equation}
E\left[ U_{n}\right] \leq \left[ \widetilde{S}\left( b^{\ast }\right)
-\sup\nolimits_{b\in \mathcal{B}_{q}}\widetilde{S}\left( b\right) \right]
+K^{\prime }\left( \frac{K^{2}\rho _{n}}{nh}\right) ^{\vartheta /\left(
2\vartheta -1\right) },  \label{finite sample upper bound}
\end{equation}%
provided that Condition \ref{margin condition for upper bound} holds with 
\begin{equation}
h\geq \left( \frac{K^{2}\rho _{n}}{n}\right) ^{\frac{1}{2\vartheta }}.
\label{condition on h}
\end{equation}
\end{theorem}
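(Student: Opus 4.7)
The plan is to invoke Theorem 2 of \citet{massart2006}, which bounds the expected excess Bayes risk of an empirical risk minimizer over a classifier class under a margin-type variance hypothesis by an approximation error plus a complexity-driven estimation term. Three ingredients need to be supplied: a reduction from $U_n$ to the excess Bayes risk, a complexity bound for $\mathcal{B}_q$ that produces $\rho_n$, and a verification that Condition \ref{margin condition for upper bound} supplies Massart's variance hypothesis.

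The reduction step is immediate: since $b^{\ast}$ is the pointwise Bayes classifier, $\sup_{b\in\mathcal{B}_q}\widetilde{S}(b)\le \widetilde{S}(b^{\ast})$, so writing $\hat b \equiv b_{\hat\alpha,\hat\beta,\hat\gamma}$ we have $U_n\le \widetilde{S}(b^{\ast})-\widetilde{S}(\hat b)$ almost surely. As $\varepsilon=0$, $\hat b$ is an exact empirical risk minimizer of $1-S_n(\theta)$ over $\mathcal{B}_q$, so it suffices to bound $E[\widetilde{S}(b^{\ast})-\widetilde{S}(\hat b)]$ by the right-hand side of \eqref{finite sample upper bound}.

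For the complexity bound, I would decompose $\mathcal{B}_q=\bigcup_{J\subseteq\{1,\ldots,p\},\,|J|\le q}\mathcal{B}_J$, where $\mathcal{B}_J$ collects those prediction rules whose auxiliary coefficient vector $\gamma$ is supported on $J$. Each $\mathcal{B}_J$ is contained in the class of half-space indicators on $\mathbb{R}^{s+1}$ with $s=k+q$, hence a VC class of dimension at most $s+1$; by Sauer's lemma the number of distinct labelings produced by $\mathcal{B}_J$ on any $n$-sample is at most $(n+1)^{s+1}$. Multiplying by $2$ for $\alpha\in\{-1,1\}$ and by $\binom{p}{q}\le p^q$ sparsity patterns, the logarithm of the growth function of $\mathcal{B}_q$ is at most $\ln 2+q\ln p+(s+1)\ln(n+1)$, which matches $\rho_n$. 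The variance hypothesis is then verified by noting that, for any $b,b'\in\mathcal{B}_q$, $\mathrm{Var}(\mathbf{1}\{Y\ne b(W)\}-\mathbf{1}\{Y\ne b'(W)\})\le \|b-b'\|_1$, while Condition \ref{margin condition for upper bound} combined with \eqref{objective value difference} gives $\|b-b^{\ast}\|_1\le h^{-1}[\widetilde{S}(b^{\ast})-\widetilde{S}(b)]^{1/\vartheta}$, which together furnish the inputs required by Theorem 2 of \citet{massart2006}. Substituting $\rho_n$ for the combinatorial entropy term in that theorem yields the estimation term $K'(K^2\rho_n/(nh))^{\vartheta/(2\vartheta-1)}$ under \eqref{condition on h}, and combining with the reduction step delivers \eqref{finite sample upper bound}.

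The principal obstacle is making the complexity calculation compatible with Massart's theorem: $\mathcal{B}_q$ is not itself a single VC class but a union of $\binom{p}{q}$ VC subclasses, and one has to verify that this union contributes exactly the additive $q\ln p$ in $\rho_n$ and no more, so that the result slots directly into the single-class bound of \citet{massart2006} without inflating other constants. The reduction and the variance-to-loss inequality are essentially routine given the Bayes classifier characterization and Condition \ref{margin condition for upper bound}.
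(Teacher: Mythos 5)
Your overall architecture is exactly the paper's: apply Theorem 2 of \citet{massart2006} in its combinatorial-entropy form, bound the (expected) log shatter count of $\mathcal{B}_q$ by splitting over the $\binom{p}{q}\le p^{q}$ sparsity patterns and the sign of $\alpha$, use the VC/Sauer bound $(n+1)^{s+1}$ per pattern to arrive at $\rho_n$ in \eqref{tau_n}, and let Condition \ref{margin condition for upper bound} play the role of the variance-to-loss link, with \eqref{condition on h} selecting the $(K^{2}\rho_n/(nh))^{\vartheta/(2\vartheta-1)}$ branch of Massart's bound. Your worry about the union of VC classes is resolved precisely the way you suggest: Massart's theorem takes the random combinatorial entropy as input, and the trace of a union is contained in the union of traces, so the union costs only the additive $\ln 2+q\ln p$.

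The one step that does not deliver the stated inequality is your reduction. You bound $U_n\le \widetilde{S}(b^{\ast})-\widetilde{S}(\hat b)$ and then propose to bound the excess Bayes risk by the right-hand side of \eqref{finite sample upper bound}; but Theorem 2 of \citet{massart2006} yields
\begin{equation*}
E\bigl[\widetilde{S}(b^{\ast})-\widetilde{S}(b_{\widehat{\theta }})\bigr]\leq 2\Bigl[\widetilde{S}(b^{\ast})-\sup\nolimits_{b\in \mathcal{B}_{q}}\widetilde{S}(b)\Bigr]+K^{\prime }\Bigl(\tfrac{K^{2}\rho _{n}}{nh}\Bigr)^{\vartheta /(2\vartheta -1)},
\end{equation*}
with a factor $2$ on the approximation error, so your route gives $E[U_n]$ bounded by \emph{twice} the bias term, which is strictly weaker than \eqref{finite sample upper bound} whenever $b^{\ast}\notin\mathcal{B}_q$. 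The paper avoids this by using the exact identity
$U_n=\bigl[\sup_{b\in\mathcal{B}_q}\widetilde{S}(b)-\widetilde{S}(b^{\ast})\bigr]+\bigl[\widetilde{S}(b^{\ast})-\widetilde{S}(b_{\widehat{\theta}})\bigr]$,
so that the nonpositive first term cancels one of the two bias terms produced by Massart's bound, leaving the single bias term in \eqref{finite sample upper bound}. Replacing your inequality-based reduction with this identity is a one-line fix, after which your argument coincides with the paper's proof.
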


For $p\vee n$ sufficiently large, we have that $\rho _{n}\leq 5s\ln (p\vee
n) $; thus, inequality (\ref{condition on h}) can hold under condition (\ref%
{rate}) in large samples, provided that $h$ is fixed or does not go to zero
too rapidly.

The first term on the right-hand side of inequality 
\eqref{finite sample
upper bound} represents the bias term. Equation (\ref{objective value
difference}) implies that there is no bias term, namely $\widetilde{S}\left(
b^{\ast }\right) =\sup\nolimits_{b\in \mathcal{B}_{q}}\widetilde{S}\left(
b\right) $ if $b^{\ast }\in \mathcal{B}_{q}$. Therefore, Theorem \ref{risk
upper bound} implies that%
\begin{equation}
E\left[ U_{n}\right] =O\left( \left[ \frac{s\ln (p\vee n)}{nh}\right]
^{\vartheta /\left( 2\vartheta -1\right) }\right) ,  \label{sharper rate}
\end{equation}%
provided that $b^{\ast }\in \mathcal{B}_{q}$.\footnote{%
For the case with $\varepsilon >0$, it is also straightforward to modify
Theorem \ref{risk upper bound} such that the rate result (\ref{sharper rate}%
) continues to hold provided that 
\begin{equation*}
\varepsilon =O\left( \left[ \frac{s\ln (p\vee n)}{nh}\right] ^{\vartheta
/\left( 2\vartheta -1\right) }\right) .
\end{equation*}%
} The rate of convergence in \eqref{sharper rate} doubles that in %
\eqref{theorem1-expectation} when $h$ is fixed and $\vartheta =1$. We notice
that, if $b^{\ast }\notin \mathcal{B}_{q}$, the upper bound derived in
Theorem \ref{risk upper bound} would asymptotically reduce to the non-zero
bias term $\widetilde{S}\left( b^{\ast }\right) -\sup\nolimits_{b\in 
\mathcal{B}_{q}}\widetilde{S}\left( b\right) $ and hence the margin
condition alone does not suffice for deducing there is improved rate of
convergence. Nevertheless, the rate result \eqref{theorem1-expectation}
still holds regardless of the validity of the presumption that $b^{\ast }\in 
\mathcal{B}_{q}$.

We now remark on the condition that $b^{\ast }\in \mathcal{B}_{q}$ in the
context of the binary response model specified below. Suppose that the
outcome $Y$ is generated from a latent variable threshold crossing model 
\citep[see,
e.g.,][]{Manski:1975, Manski:1985}:%
\begin{equation}
Y=1\{W^{\prime }\theta ^{\ast }\geq \xi \},  \label{binary response model}
\end{equation}%
where $\theta ^{\ast }$ denotes the true data generating parameter vector
and $\xi $ is an unobserved latent variable whose distribution satisfies that%
\begin{equation}
Med(\xi |W=w)=0\text{ for }w\in \mathcal{W}.  \label{median independence}
\end{equation}%
Let $\theta _{0}\equiv \arg \sup\nolimits_{\theta \in \{-1,1\}\times \Theta
_{q}}$ $S(\theta )$. For simplicity, assume that $\theta _{0}\in
\{-1,1\}\times \Theta _{q}$ so that the maximum is attained.

\begin{proposition}
\label{observational equivalence} Assume that the model given by (\ref%
{binary response model}) and (\ref{median independence}) is correctly
specified. Suppose that Condition \ref{margin condition for upper bound}
holds. Then $b^{\ast }\in \mathcal{B}_{q}$ if and only if $W^{\prime }\theta
_{0}$ and $W^{\prime }\theta ^{\ast }$ have the same sign with probability $%
1 $.
\end{proposition}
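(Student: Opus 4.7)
The plan is first to establish the key auxiliary identity that, under the correctly specified model with median-zero errors and under the margin condition (Condition \ref{margin condition for upper bound}), the Bayes classifier $b^{\ast}$ coincides $W$-almost surely with the population maximum score predictor $b_{\theta^{\ast}}(w)=1\{w'\theta^{\ast}\geq 0\}$. Once this is in hand, both directions of the proposition reduce to a short application of the margin condition to the respective optimum.

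To prove the key identity, I would start from the observation that $\eta(w)=P(\xi\leq w'\theta^{\ast}\mid W=w)$. The median-zero condition $P(\xi\leq 0\mid W=w)\geq 1/2\geq P(\xi<0\mid W=w)$ then gives the two-sided implication: if $w'\theta^{\ast}\geq 0$ then $\eta(w)\geq 1/2$, and if $w'\theta^{\ast}<0$ then $\eta(w)\leq 1/2$. Hence $b^{\ast}$ and $b_{\theta^{\ast}}$ can disagree only on the set $\{\eta(W)=1/2\}$, on which the weight $|2\eta(W)-1|$ vanishes. Identity \eqref{objective value difference} therefore yields $\widetilde{S}(b^{\ast})=\widetilde{S}(b_{\theta^{\ast}})$. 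Applying Condition \ref{margin condition for upper bound} with $b=b_{\theta^{\ast}}$ gives
\[
h^{\vartheta}\,\|b^{\ast}-b_{\theta^{\ast}}\|_{1}^{\vartheta}\leq \widetilde{S}(b^{\ast})-\widetilde{S}(b_{\theta^{\ast}})=0,
\]
so $\|b^{\ast}-b_{\theta^{\ast}}\|_{1}=0$ and therefore $b^{\ast}=b_{\theta^{\ast}}$ $W$-a.s.

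For the $(\Leftarrow)$ direction, suppose $W'\theta_{0}$ and $W'\theta^{\ast}$ have the same sign a.s., i.e.\ $1\{W'\theta_{0}\geq 0\}=1\{W'\theta^{\ast}\geq 0\}$ a.s. Then $b_{\theta_{0}}=b_{\theta^{\ast}}$ a.s., and combining with the key identity yields $b_{\theta_{0}}=b^{\ast}$ a.s.; since $\theta_{0}\in\{-1,1\}\times\Theta_{q}$ by hypothesis, this shows $b^{\ast}\in\mathcal{B}_{q}$. For the $(\Rightarrow)$ direction, suppose $b^{\ast}\in\mathcal{B}_{q}$. Then $\sup_{b\in\mathcal{B}_{q}}\widetilde{S}(b)=\widetilde{S}(b^{\ast})$, which equals $S(\theta_{0})$ by the definition of $\theta_{0}$. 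Applying the margin condition once more, this time with $b=b_{\theta_{0}}$, gives $\|b^{\ast}-b_{\theta_{0}}\|_{1}=0$, i.e.\ $b^{\ast}=b_{\theta_{0}}$ a.s. Chaining with the key identity produces $b_{\theta_{0}}=b_{\theta^{\ast}}$ a.s., which is exactly $1\{W'\theta_{0}\geq 0\}=1\{W'\theta^{\ast}\geq 0\}$ a.s., the required sign agreement.

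The only delicate point is the indeterminacy on the set $\{\eta(W)=1/2\}$: without extra structure $b^{\ast}$ and $b_{\theta^{\ast}}$ could differ there without any effect on $\widetilde{S}$, so predictive-accuracy optimality would not on its own force pointwise equality. The margin condition is exactly the tool that removes this ambiguity, converting equality of $\widetilde{S}$ into $L_{1}$-equality of the predictors, and thus cleanly translating the statement about $\mathcal{B}_{q}$-membership of $b^{\ast}$ into a sign statement about the indices $W'\theta_{0}$ and $W'\theta^{\ast}$.
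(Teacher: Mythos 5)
Your proposal is correct and takes essentially the same approach as the paper: both identify $b^{\ast}$ with $b_{\theta^{\ast}}$ using the median restriction and then use the uniqueness of the maximizer of $\widetilde{S}$ implied by Condition \ref{margin condition for upper bound} to equate $b^{\ast}\in\mathcal{B}_{q}$ with $b^{\ast}=b_{\theta_{0}}$ almost surely, which is the sign-agreement statement. Your explicit treatment of the boundary set $\{\eta(W)=1/2\}$ via the margin condition is simply a more careful rendering of the equivalence $w^{\prime}\theta^{\ast}\geq 0\Leftrightarrow \eta(w)\geq 1/2\Leftrightarrow b^{\ast}(w)=1$ that the paper asserts directly.
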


\citet[Proposition 2]{Manski:1988} showed that, for the binary response
model specified by (\ref{binary response model}) and (\ref{median
independence}), the true parameter value $\theta ^{\ast }$ is identified
relative to another value $\theta $ if and only if the event that $W^{\prime
}\theta $ and $W^{\prime }\theta ^{\ast }$ have different sign occurs with
positive probability. Therefore, Proposition \ref{observational equivalence}
implies that $b^{\ast }\in \mathcal{B}_{q}$ if and only if the
``pseudo-true'' value $\theta _{0}$ is observationally equivalent to $\theta
^{\ast }$. In particular, this implies that $\theta ^{\ast }=\theta _{0}$ if 
$\theta ^{\ast }$ is point-identified.

It would be interesting to study the role of the bias when $b^{\ast }\notin 
\mathcal{B}_{q}$ using the framework of sieve estimation \citep{Chen:2007}.
As pointed by \citet[Proposition 1]{Elliott2013}, what matters is how well
we can approximate the value of optimum $\sup\nolimits_{\theta \in
\{-1,1\}\times \Theta _{q}}S(\theta )$, not the optimizer $\arg
\sup\nolimits_{\theta \in \{-1,1\}\times \Theta _{q}}S(\theta )$. However,
it would be much more demanding to develop non-asymptotic theory when the
bias is present in our framework. We leave this as a topic for future
research.

\subsection{A Minimax Lower Bound under the Margin Condition}

In this section, we derive a minimax lower bound under the margin condition.
In particular, we focus on the case that $s=k+q$ is low-dimensional in that $%
s$ does not grow with sample size $n$ and also consider a sufficient
condition for the margin condition.

Condition \ref{margin condition for upper bound} is satisfied with $%
\vartheta =1$ whenever 
\begin{equation}
\left\vert 2\eta (w)-1\right\vert \geq h\text{ for }w\in \mathcal{W}.
\label{margin condition}
\end{equation}%
\citet{massart2006} introduced (\ref{margin condition}) as an easily
interpretable margin condition requiring that the conditional probability $%
\eta (w)$ should be bounded away from $1/2$. Condition (\ref{margin
condition}) holds under certain regularity assumptions on the binary
response model as indicated in the following proposition.

\begin{proposition}
\label{margin condition for the binary response model} Assume that the model
given by (\ref{binary response model}) and (\ref{median independence}) is
correctly specified. Suppose that there are universal constants $\kappa
_{1}>0,$ $\kappa _{2}>0$ such that (i) $P\left( \left\vert W^{\prime }\theta
^{\ast }\right\vert \geq \kappa _{1}\right) =1$ and (ii) there is some open
interval $T$ containing $\left( -\kappa _{1},\kappa _{1}\right) $ such that $%
P\left( \xi \leq t|W=w\right) $ has a derivative (with respect to $t$) which
is bounded below by $\kappa _{2}$ for every $t\in T$. Then condition (\ref%
{margin condition}) holds with $h=2\kappa _{1}\kappa _{2}$.
\end{proposition}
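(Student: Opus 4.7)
The plan is to rewrite $\eta(w)-1/2$ as an integral of the conditional density of $\xi$ given $W=w$ and then bound this integral below using conditions (i) and (ii).

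First I would use the threshold-crossing model \eqref{binary response model} to express, for $w\in\mathcal{W}$,
\[
\eta(w)=P(W'\theta^{\ast}\geq \xi \mid W=w)=F_{\xi\mid W}(w'\theta^{\ast}\mid w),
\]
where $F_{\xi\mid W}(\cdot\mid w)$ denotes the conditional CDF of $\xi$ given $W=w$. The median-independence condition \eqref{median independence} together with the local differentiability of $F_{\xi\mid W}(\cdot\mid w)$ guaranteed by condition (ii) implies that $F_{\xi\mid W}(\cdot\mid w)$ is continuous at $0$ and hence $F_{\xi\mid W}(0\mid w)=1/2$. Therefore
\[
\eta(w)-\tfrac{1}{2}=F_{\xi\mid W}(w'\theta^{\ast}\mid w)-F_{\xi\mid W}(0\mid w)=\int_{0}^{w'\theta^{\ast}} f_{\xi\mid W}(t\mid w)\, dt,
\]
where $f_{\xi\mid W}(\cdot\mid w)$ is the derivative of $F_{\xi\mid W}(\cdot\mid w)$ on $T$ (positive or negative orientation of the integral according to the sign of $w'\theta^{\ast}$).

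Next I would split into two cases according to condition (i), which forces $w'\theta^{\ast}\geq \kappa_{1}$ or $w'\theta^{\ast}\leq -\kappa_{1}$ almost surely. In the first case, the interval $[0,\kappa_{1}]$ lies inside $T$, so by condition (ii),
\[
\eta(w)-\tfrac{1}{2}=\int_{0}^{w'\theta^{\ast}} f_{\xi\mid W}(t\mid w)\, dt \;\geq\; \int_{0}^{\kappa_{1}} f_{\xi\mid W}(t\mid w)\, dt \;\geq\; \kappa_{1}\kappa_{2}.
\]
In the second case, a symmetric argument on $[-\kappa_{1},0]\subset T$ yields $1/2-\eta(w)\geq \kappa_{1}\kappa_{2}$. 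Combining the two cases gives $\lvert 2\eta(w)-1\rvert \geq 2\kappa_{1}\kappa_{2}=h$, which is exactly \eqref{margin condition}.

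There is no genuine obstacle here; the only mildly delicate point is the inclusion $[0,\kappa_{1}]\subset T$ (and its symmetric counterpart), which follows from the hypothesis that $T$ is open and contains the open interval $(-\kappa_{1},\kappa_{1})$: the lower bound $\kappa_{2}$ on the density propagates to the endpoints via the one-sided limits of $f_{\xi\mid W}(\cdot\mid w)$, or equivalently by passing to a slightly smaller $\kappa_{1}'<\kappa_{1}$ and taking $\kappa_{1}'\uparrow\kappa_{1}$, so the bound $\kappa_{1}\kappa_{2}$ is recovered without loss.
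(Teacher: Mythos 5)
Your proposal is correct and follows essentially the same route as the paper's proof: write $\eta(w)$ as the conditional CDF of $\xi$ evaluated at $w'\theta^{\ast}$, use median independence (with continuity from the differentiability assumption) to fix the value $1/2$ at zero, and apply the derivative lower bound $\kappa_{2}$ on each of the two sign cases from condition (i) to obtain $|2\eta(w)-1|\geq 2\kappa_{1}\kappa_{2}$. Your explicit handling of the endpoint $\kappa_{1}$ via a limiting argument is a minor refinement of a step the paper passes over silently, but it does not change the argument.
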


Conditions (i) and (ii) in Proposition \ref{margin condition for the binary
response model} assume that $\left\vert W^{\prime }\theta ^{\ast
}\right\vert $ is bounded away from zero and the density of $\xi $
conditional on $W=w$ is bounded away from zero in a neighborhood of zero.
While the latter condition is mild, the former is non-trivial. Condition (i)
can hold easily when all components of $W$ are discrete, which is not
uncommon in microeconometric applications of binary response models (see
e.g., \citet{komarova2013} and \citet{magnac2008}). In the presence of
continuous covariates, this condition becomes more restrictive.

For any real vector $u$, let $\left\Vert u\right\Vert _{E}=\sqrt{u^{\prime}u}
$ denote the Euclidean norm of $u$. To state a minimax lower bound, we first
define the following class of distributions.

\begin{definition}
For every $h\in \left( 0,1\right) $, let $\mathcal{P}(h,\mathcal{B}_{q})$
denote the class of distributions $F$ satisfying the following conditions:
(i) $b^{\ast }\in \mathcal{B}_{q}$, (ii) condition (\ref{margin condition})
holds, and (iii) there are constants $c_{u}>0$ and $c_{l}>0$ such that, for
any two vectors $\theta =\left( \alpha ,\beta ,\gamma \right) ,$ $\widetilde{%
\theta }=\left( \widetilde{\alpha },\widetilde{\beta },\widetilde{\gamma }%
\right) \in \{-1,1\}\times \Theta _{q}$ satisfying $\alpha =\widetilde{%
\alpha }$ and $\beta =\widetilde{\beta }$, it holds that%
\begin{equation}
c_{l}\left\Vert \theta -\widetilde{\theta }\right\Vert _{E}\leq \left\Vert
b_{\theta }-b_{\widetilde{\theta }}\right\Vert _{1}\leq c_{u}\left\Vert
\theta -\widetilde{\theta }\right\Vert _{E}.
\label{condition on the distribution of W}
\end{equation}
\end{definition}

The first two conditions in the definition of $\mathcal{P}(h,\mathcal{B}%
_{q}) $ have already been introduced before. The new condition (iii) imposes
that the Euclidean norm $\left\Vert \theta -\widetilde{\theta }\right\Vert
_{E}$ is equivalent to the $L_{1}$-norm $\left\Vert b_{\theta }-b_{%
\widetilde{\theta }}\right\Vert _{1}$ for two values $\theta $ and $%
\widetilde{\theta }$ that differ only in the components corresponding to the
auxiliary covariate coefficients. This condition is concerned with
restrictions on the distribution of the covariate vector $W$. The following
proposition gives sufficient conditions for verifying this norm equivalence
condition.

For any subset $J\subset \{1,...,p\}$, let $Z_{J}$ denote the $\left\vert
J\right\vert $-dimensional subvector of $Z\equiv (Z^{(1)},\ldots
,Z^{(p)})^{\prime }$ formed by keeping only those elements $Z^{(j)}$ with $%
j\in J$. Let $\mathcal{I}_{q}\equiv \cup _{\left( \beta ,\gamma \right) \in
\Theta _{q}}\text{Supp}(\widetilde{X}\beta +Z^{\prime }\gamma )$, where $%
\text{Supp}(V)$ denotes the support of the random variable $V$.

\begin{proposition}
\label{primitive assumptions on norm equivalence} Suppose that $s$ is fixed
and does not grow with sample size $n$. Assume that there are positive real
constants $L_{1}$, $L_{2}$ and $L_{3}$ such that (a) the distribution of $%
X_{0}$ conditional on $(\widetilde{X},Z)$ has a Lebesgue density that is
bounded above by $L_{1}$ and bounded below by $L_{1}^{-1}$ on $\mathcal{I}%
_{q}$, and (b) for any subset $J\subset \{1,...,p\}$ such that $\left\vert
J\right\vert \leq 2q$, $P\left( \left\Vert Z_{J}\right\Vert _{E}\leq
L_{2}\right) =1$ and the smallest eigenvalue of $E\left( Z_{J}Z_{J}^{\prime
}\right) $ is bounded below by $L_{3}$. Then Condition (iii) stated in 
\eqref{condition
on the distribution of W} holds with $c_{u}=L_{1}L_{2}$ and $%
c_{l}=(L_{1}L_{2})^{-1}L_{3}$.
\end{proposition}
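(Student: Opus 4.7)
Since $\alpha=\widetilde{\alpha}$ and $\beta=\widetilde{\beta}$, the two parameter vectors differ only in their auxiliary-coefficient blocks, so $\|\theta-\widetilde{\theta}\|_{E} = \|\Delta\|_{E}$ with $\Delta\equiv\gamma-\widetilde{\gamma}$. Let $J$ denote the union of the supports of $\gamma$ and $\widetilde{\gamma}$, so that $|J|\leq 2q$, $Z'\Delta = Z_{J}'\Delta_{J}$, and $\|\Delta_{J}\|_{E} = \|\Delta\|_{E}$. The strategy proceeds in two steps: first, use assumption (a) to sandwich $\|b_{\theta}-b_{\widetilde{\theta}}\|_{1}$ between constant multiples of $E|Z_{J}'\Delta_{J}|$; second, use assumption (b) to sandwich that expectation between constant multiples of $\|\Delta\|_{E}$.

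For the first step, set $V \equiv \widetilde{X}'\beta + Z'\gamma$ and $\widetilde{V} \equiv \widetilde{X}'\beta + Z'\widetilde{\gamma}$. Then, up to a Lebesgue-null set in the $X_{0}$ axis, $|b_{\theta}(W)-b_{\widetilde{\theta}}(W)|$ equals one precisely when $\alpha X_{0}$ lies between $-V$ and $-\widetilde{V}$. Conditioning on $(\widetilde{X}, Z)$ and applying the two-sided density bound of assumption (a) on $\mathcal{I}_{q}$ (noting that $V,\widetilde{V}\in\mathcal{I}_{q}$ almost surely, so the relevant interval of integration sits in the region where the conditional density of $X_{0}$ is controlled), I obtain
\[
L_{1}^{-1}\,E|V-\widetilde{V}| \;\leq\; \|b_{\theta} - b_{\widetilde{\theta}}\|_{1} \;\leq\; L_{1}\,E|V-\widetilde{V}|,
\]
and a direct computation gives $V - \widetilde{V} = Z_{J}'\Delta_{J}$.

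For the second step, the upper bound on $E|Z_{J}'\Delta_{J}|$ follows immediately from the Cauchy--Schwarz inequality together with the a.s.\ bound $\|Z_{J}\|_{E}\leq L_{2}$: we have $|Z_{J}'\Delta_{J}|\leq L_{2}\|\Delta\|_{E}$ almost surely, hence $E|Z_{J}'\Delta_{J}|\leq L_{2}\|\Delta\|_{E}$ and $\|b_{\theta}-b_{\widetilde{\theta}}\|_{1} \leq L_{1} L_{2}\|\theta-\widetilde{\theta}\|_{E}$. For the lower bound, the eigenvalue condition yields $E[(Z_{J}'\Delta_{J})^{2}] = \Delta_{J}'E[Z_{J}Z_{J}']\Delta_{J} \geq L_{3}\|\Delta\|_{E}^{2}$, and the same a.s.\ bound lets me pass from the second moment back to the first via $E[(Z_{J}'\Delta_{J})^{2}] \leq L_{2}\|\Delta\|_{E}\cdot E|Z_{J}'\Delta_{J}|$; dividing the two inequalities gives $E|Z_{J}'\Delta_{J}|\geq (L_{3}/L_{2})\|\Delta\|_{E}$ and therefore $\|b_{\theta}-b_{\widetilde{\theta}}\|_{1}\geq (L_{1}L_{2})^{-1}L_{3}\|\theta-\widetilde{\theta}\|_{E}$.

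The main technical obstacle is the lower-bound direction in the second step: there is no general inequality controlling $E|X|$ from below by $E[X^{2}]$, so the argument must exploit the essential boundedness $|Z_{J}'\Delta_{J}|\leq L_{2}\|\Delta\|_{E}$ through the elementary fact $E[X^{2}]\leq \|X\|_{\infty}\,E|X|$. Restricting attention to the coordinates in $J$ (so that $|J|\leq 2q$ and the uniform conditions of assumption (b) apply) is what keeps the constants $c_{u}$ and $c_{l}$ independent of the ambient dimension $p$, which is precisely the feature required for the norm equivalence to feed into the subsequent high-dimensional minimax lower-bound argument.
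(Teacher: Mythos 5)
Your proposal is correct and follows essentially the same route as the paper's own proof: conditioning on $(\widetilde{X},Z)$ and using the two-sided density bound to sandwich $\left\Vert b_{\theta}-b_{\widetilde{\theta}}\right\Vert _{1}$ by $L_{1}^{\pm 1}E\left[ \left\vert Z_{J}^{\prime }\delta _{J}\right\vert \right] $, then obtaining the lower bound from the eigenvalue condition via the same second-moment-to-first-moment step $E[(Z_{J}^{\prime }\delta _{J})^{2}]\leq L_{2}\left\Vert \delta _{J}\right\Vert _{E}E\left[ \left\vert Z_{J}^{\prime }\delta _{J}\right\vert \right] $ that the paper uses. The only cosmetic differences (taking $J$ as the union of supports rather than the set of differing coordinates, and treating $\alpha =\pm 1$ jointly rather than by symmetry) are immaterial.
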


Condition (a) in Proposition \ref{primitive assumptions on norm equivalence}
is mild. The first part of condition (b) holds with $L_{2}=\overline{L}\sqrt{%
2q}$ if $\max_{j\in \{1,...,p\}}\left\vert Z^{(j)}\right\vert \leq \overline{%
L}$ with probability 1 for some universal positive constant $\overline{L}$.
The second part of condition (b) is related to the sparse eigenvalue
assumption used in the high dimensional regression literature (see, e.g. %
\citet{Raskutti2011}). For example, suppose that $Z$ is a random vector with
mean zero and the covariance matrix $\Sigma $ whose $(i,j)$ component is $%
\Sigma _{i,j}=\rho ^{|i-j|}$ for some constant $\rho >0$. Then the smallest
eigenvalue of $\Sigma $ is bounded away from zero where the lower bound is
independent of the dimension $p$ (\citet[][p.\thinspace1384]{van2009}).
Thus, in this case, $E\left( Z_{J}Z_{J}^{\prime }\right) $ is bounded below
by that same lower bound.

We now state the result on the minimax lower bound for the predictive
performance of PRESCIENCE.

\begin{theorem}
\label{risk lower bound}Assume the parameter space $\Theta $ in (\ref%
{theta_q}) satisfies that there is a universal constant $\kappa >0$ such
that 
\begin{equation}
\max\nolimits_{j\in \{1,...,p\}}\left\vert \gamma _{j}\right\vert \leq
\kappa ,  \label{parameter bound}
\end{equation}%
where $\gamma _{j}$ denotes the $j$th component of $\gamma $. Suppose $p$
and $q$ are even numbers and $q<2p/3$. Let $\phi \equiv 0.71$. Then, for any
binary predictor $\widehat{b}:\mathcal{W}\mapsto \{0,1\}$, which is in the
set $\mathcal{B}_{q}$ and is constructed based on the data $\left(
Y_{i},W_{i}\right) _{i=1}^{n}$, we have that 
\begin{equation}
\sup\nolimits_{F\in \mathcal{P}(h,\mathcal{B}_{q})}E\left[ S_{q}^{\ast }-%
\widetilde{S}(\widehat{b})\right] \geq \frac{\phi qc_{l}\left( 1-\phi
\right) \left( 1-h\right) }{32nhc_{u}}\ln \left( \frac{p-q}{q/2}\right)
\label{risk bound}
\end{equation}%
for $h\in \left( 0,1\right) $, which is defined in \eqref{margin condition},
such that 
\begin{equation}
h\geq \left[ \frac{\phi \sqrt{q}\ln \left( \frac{p-q}{q/2}\right) }{8\sqrt{2}%
\kappa nc_{u}}\right] ^{1/2}.  \label{lower bound on h}
\end{equation}
\end{theorem}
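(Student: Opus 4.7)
The plan is to establish the minimax lower bound via the standard Fano-based reduction from estimation to many-hypothesis testing, adapted to the variable-selection structure following \citet{Raskutti2011}. I would proceed in three movements: (i) build a rich packing of $q$-sparse parameters, (ii) attach to each packing point a distribution in $\mathcal{P}(h,\mathcal{B}_q)$ whose pairwise Kullback--Leibler divergences are controlled, and (iii) convert a prediction rule $\widehat{b}$ into a multi-hypothesis test and feed Fano's inequality through the margin condition and the norm equivalence to obtain the stated bound.

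For (i), I would invoke Lemma~4 of \citet{Raskutti2011}, whose hypotheses ($p,q$ even, $q<2p/3$) match those in the theorem, to produce binary vectors $v_{1},\ldots,v_{N}\in\{0,1\}^{p}$ each with exactly $q$ nonzero entries, pairwise Hamming distance at least $q/2$, and $\log N\geq (q/2)\log((p-q)/(q/2))$. Setting $\gamma_{j}=\delta v_{j}$ for a scale $\delta\in(0,\kappa]$ to be tuned gives $\sqrt{q/2}\,\delta\leq\|\gamma_{j}-\gamma_{k}\|_{E}\leq\sqrt{2q}\,\delta$ for $j\neq k$. Fixing a common $(\alpha^{*},\beta^{*})$ and a marginal distribution of $W$ satisfying the norm-equivalence condition \eqref{condition on the distribution of W} with constants $c_{l},c_{u}$, I define $F_{j}$ by the conditional law $\eta_{j}(w)=\tfrac{1}{2}+\tfrac{h}{2}(2b_{\theta_{j}}(w)-1)$ with $\theta_{j}=(\alpha^{*},\beta^{*},\gamma_{j})$; then $b_{j}^{*}=b_{\theta_{j}}\in\mathcal{B}_{q}$ and $|2\eta_{j}(w)-1|=h$, so each $F_{j}\in\mathcal{P}(h,\mathcal{B}_{q})$. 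A direct Bernoulli computation yields
\begin{equation*}
KL(F_{j}^{n}\,\|\,F_{k}^{n})=n\,h\log\tfrac{1+h}{1-h}\cdot\|b_{j}^{*}-b_{k}^{*}\|_{1}\leq n\,h\log\tfrac{1+h}{1-h}\cdot c_{u}\sqrt{2q}\,\delta,
\end{equation*}
using the upper half of the norm equivalence.

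For (iii), given any $\widehat{b}\in\mathcal{B}_{q}$, set $\tilde{j}=\arg\min_{j}\|b_{j}^{*}-\widehat{b}\|_{1}$. The triangle inequality together with $\|b_{j}^{*}-b_{k}^{*}\|_{1}\geq c_{l}\sqrt{q/2}\,\delta$ implies $\|b_{j}^{*}-\widehat{b}\|_{1}\geq c_{l}\sqrt{q/2}\,\delta/2$ on $\{\tilde{j}\neq j\}$, while the margin condition \eqref{margin condition} gives $S_{q}^{*}-\widetilde{S}(\widehat{b})\geq h\|b_{j}^{*}-\widehat{b}\|_{1}$ under $F_{j}$. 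Combining these with Fano's inequality,
\begin{equation*}
\sup_{F\in\mathcal{P}(h,\mathcal{B}_{q})}E[S_{q}^{*}-\widetilde{S}(\widehat{b})]\geq \tfrac{1}{2}\,h\,c_{l}\sqrt{q/2}\,\delta\cdot\Bigl(1-\tfrac{n\,h\log\frac{1+h}{1-h}\,c_{u}\sqrt{2q}\,\delta+\log 2}{(q/2)\log((p-q)/(q/2))}\Bigr).
\end{equation*}
I would then take $\delta$ to be the largest value in $(0,\kappa]$ making the bracketed factor at least $\phi=0.71$; using $\log\frac{1+h}{1-h}\leq 2h/(1-h)$ and substituting back recovers the stated bound, and the requirement $\delta\leq\kappa$ is exactly what is encoded in condition~\eqref{lower bound on h} on $h$.

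The main obstacle is the bookkeeping of constants: pinning down $\phi=0.71$ and the denominator factor $32$ requires tracking several simultaneous inequalities (the Fano calibration, the $\log((1+h)/(1-h))$ upper bound, the $\log 2$ correction in the numerator, and the competing Hamming-distance factors $\sqrt{q/2}$ versus $\sqrt{2q}$). A secondary subtlety is verifying that the common marginal of $W$ can be chosen so that condition~(iii) in the definition of $\mathcal{P}(h,\mathcal{B}_{q})$ holds \emph{uniformly} in $(j,k)$ with the prescribed constants $c_{l},c_{u}$; here I would lean on Proposition~\ref{primitive assumptions on norm equivalence} to guide the construction.
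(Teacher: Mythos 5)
Your proposal is correct in substance and follows the same architecture as the paper's proof: the packing of the $\ell_0$-ball via Lemma 4 of \citet{Raskutti2011}, the construction of distributions with $\eta_b(w)=\tfrac{1}{2}+\tfrac{h}{2}(2b(w)-1)$ (so the margin condition holds with equality and the excess risk equals $h\|b^{\ast}-\widehat{b}\|_{1}$), the exact KL identity $h\ln\frac{1+h}{1-h}\|b_{\theta}-b_{\widetilde{\theta}}\|_{1}$, and the transfer between Euclidean, Hamming and $L_1$ distances via the norm equivalence \eqref{condition on the distribution of W} are all exactly the paper's steps. Where you diverge is the final information-theoretic inequality: the paper cites the reduction in the proof of Theorem 6 of \citet{massart2006} (which restricts the estimator to the net, the step you reconstruct explicitly with the minimum-distance/triangle-inequality argument) and then applies their Lemma 8, i.e.\ Birg\'e's refined version of Fano, in which $\phi=0.71$ is a universal constant bounding $\inf_{\theta}P(\widehat{\theta}=\theta)$ by $\phi\vee\overline{\mathcal{K}}/\ln|\mathcal{D}|$; you instead use classical Fano with the additive $\ln 2$ correction and treat $\phi$ as a calibration level for the Fano bracket. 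Your route is more self-contained, but it does not quite reproduce the stated constants: with vanilla Fano the $\ln 2$ term must be absorbed by $\tfrac{q}{2}\ln\bigl(\tfrac{p-q}{q/2}\bigr)$, so for admissible but small configurations (e.g.\ $p=4$, $q=2$, where $\ln|\mathcal{A}|$ may be as small as $\ln 2$) your bracket cannot reach $0.71$ and the bound degenerates, whereas Birg\'e's lemma remains nontrivial for any packing of size at least two; also your scale calibration yields a prefactor of the form $\phi(1-\phi)/16$ up to the $\ln 2$ loss rather than the exact $\phi(1-\phi)/32$ with $\phi=0.71$, and the condition $\delta\leq\kappa$ then corresponds to a slightly different threshold than \eqref{lower bound on h}. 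So the idea and rate are right, but matching the theorem as stated uniformly in $(p,q)$ requires the sharper Massart--N\'ed\'elec/Birg\'e lemma the paper invokes rather than Fano with the $\ln 2$ correction.
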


For any estimator $\widehat{b}$ taking value in $\mathcal{B}_{q}$, Theorem %
\ref{risk lower bound} implies that, as long as $\mathcal{P}(h,\mathcal{B}%
_{q})$ is non-empty, there is some distribution $F$ under which the average
predictive risk $E\left[ S_{q}^{\ast }-\widetilde{S}(\widehat{b})\right] $
cannot be smaller than the lower bound term stated in \eqref{risk bound}.
Comparing the upper and lower bounds given by (\ref{finite sample upper
bound}) and (\ref{risk bound}), we can deduce conditions under which these
two bounds coincides in terms of rate of convergence such that the
PRESCIENCE approach is rate-optimal in the minimax sense. Suppose that $%
(s,c_{l},c_{u})$ are fixed and does not increase or decrease with $n$. Then
the risk lower bound is of order%
\begin{equation}
O\left( \frac{\left( 1-h\right) \ln p}{nh}\right) .
\label{order of lower bound rate}
\end{equation}%
Comparing \eqref{order of lower bound rate} to \eqref{sharper rate}
evaluated at $\vartheta =1$, we see that, if $h$ is also a universal
constant and $p$ grows at a polynomial or exponential rate in $n$, then the
upper and lower bound results induce the same convergence rate and hence the
PRESCIENCE approach is minimax rate-optimal. On the other hand, when
Condition \ref{margin condition for upper bound} holds with $\vartheta >1$,
the rate given by (\ref{sharper rate}) is slower than that given by (\ref%
{order of lower bound rate}) such that the convergence rate implied by the
risk lower bound need not be attained and therefore the PRESCIENCE approach
may not be rate-optimal.

\begin{remark}
The minimax rate optimality of PRESCIENCE is established under the
assumption that $s$ is fixed. Theorem \ref{risk lower bound} does not
provide a rate-optimal lower bound when $s$ diverges to infinity as $%
n\rightarrow \infty $, although it is a valid lower bound in any finite
sample. It is an interesting open question for future research to
investigate minimax optimality when $s\rightarrow \infty $.
\end{remark}

\begin{remark}
The assumption that $p$ and $q$ are even in Theorem \ref{risk lower bound}
is innocuous for the minimax rate-optimality result. This assumption is made
to invoke the known result (see Lemma 4 of \citet{Raskutti2011}) for the
lower bound on the complexity of the $\ell _{0} $-ball. When $p$ and/or $q$
is odd, the lower bound result still holds since we can always consider $%
\mathcal{P}(h,\mathcal{B}_{q}^{\prime })\subset \mathcal{P}(h,\mathcal{B}%
_{q})$, where $\mathcal{B}_{q}^{\prime }$ is a subspace of $\mathcal{B}_{q}$
for which the parameter vector $\theta $ is confined to a lower dimensional
space with dimension $p-1$ and/or $q-1$.
\end{remark}

\section{Implementation via Mixed Integer Optimization\label{implementation}}

We now present algorithms for solving the maximization problem (\ref{best
subset maximum score binary prediction rule}). It is straightforward to see
that solving (\ref{best subset maximum score binary prediction rule}) is the
same as solving%
\begin{equation*}
\max \left\{ \max\nolimits_{\left( \beta ,\gamma \right) \in \Theta _{q}}%
\text{ }S_{n}(1,\beta ,\gamma ),\max\nolimits_{\left( \beta ,\gamma \right)
\in \Theta _{q}}\text{ }S_{n}(-1,\beta ,\gamma )\right\} .
\end{equation*}%
In what follows, we focus on solving the sub-problem%
\begin{equation}
\max\nolimits_{\left( \beta ,\gamma \right) \in \Theta _{q}}\text{ }%
S_{n}(1,\beta ,\gamma )  \label{sub-problem}
\end{equation}%
because the other case corresponding to $\alpha =-1$ can be solved by
replacing the value of $X_{0i}$ with that of $-X_{0i}$ and then applying the
same solution method as developed for the case (\ref{sub-problem}).

By (\ref{augmented model}) and noting that $Y_{i}\in\{0,1\}$, solving the
problem (\ref{sub-problem}) amounts to solving%
\begin{equation}
\max\nolimits_{\left( \beta,\gamma\right)
\in\Theta_{q}}n^{-1}\sum\nolimits_{i=1}^{n}\left[ \left( 1-Y_{i}\right)
+\left( 2Y_{i}-1\right) 1\{X_{0i}+\widetilde{X}_{i}^{\prime}\beta+Z_{i}^{%
\prime}\gamma \geq0\}\right] .  \label{max score}
\end{equation}
We assume that the parameter space $\Theta$ is bounded and takes the
polyhedral form: 
\begin{equation*}
\Theta=\{\left( \beta,\gamma\right) \in\mathbb{R}^{k+p}:A_{1}\beta
+A_{2}\gamma\leq B\}
\end{equation*}
for some real constant matrices $A_{1}$ and $A_{2}$ and some real constant
vector $B$. Let 
\begin{equation}
\mathbf{C} \equiv\prod\nolimits_{j=1}^{p}\left[ \underline{\gamma}_{j},%
\overline{\gamma}_{j}\right]  \label{space C}
\end{equation}
denote the smallest cube containing all values of $\gamma$ in the pair $%
\left( \beta,\gamma\right) $ confined by $\Theta$. Writing $\gamma
=(\gamma_{1},...,\gamma_{p})$, we have that, if $\left( \beta,\gamma\right)
\in\Theta$, then $-\infty<\underline{\gamma}_{j}\leq\gamma_{j}\leq \overline{%
\gamma}_{j}<\infty$ for $j\in\{1,...,p\}$. Let 
\begin{equation}
M_{i}\equiv\max_{(\beta,\gamma)\in\Theta}\left\vert X_{0i}+\widetilde{X}%
_{i}^{\prime}\beta+Z_{i}^{\prime}\gamma\right\vert \text{ for }i\in
\{1,...,n\}.  \label{Mi}
\end{equation}

Our implementation builds on the method of mixed integer optimization (in
particular, \citet{bertsimas2016}, \citet{Florios:Skouras:08}, and %
\citet{Kitagawa:Tetenov:2015}) and present two alternative solution methods
that complement each other. The values $\left( M_{i}\right) _{i=1}^{n}$ can
be computed by formulating the maximization problem in (\ref{Mi}) as linear
programming problems, which can be easily and efficiently solved by modern
numerical software. Hence these values can be computed and stored beforehand
as inputs to the algorithms that are used to solve the MIO problems
described below.

\subsection{Method 1}

Our first solution method is based on an equivalent reformulation of the
maximization problem (\ref{max score}) as the following constrained mixed
integer optimization (MIO) problem:%
\begin{align}
& \max_{(\beta ,\gamma )\in \mathbf{\Theta }%
,d_{1},...,d_{n},e_{1},...,e_{p}}n^{-1}\sum\nolimits_{i=1}^{n}\left[ \left(
1-Y_{i}\right) +\left( 2Y_{i}-1\right) d_{i}\right]  \label{constrained MIO}
\\
& \text{subject to}  \notag \\
& \left( d_{i}-1\right) M_{i}\leq X_{0i}+\widetilde{X}_{i}^{\prime }\beta
+Z_{i}^{\prime }\gamma <d_{i}(M_{i}+\delta ),\text{ }i\in \{1,...,n\},
\label{constraint on di} \\
& e_{j}\underline{\gamma }_{j}\leq \gamma _{j}\leq e_{j}\overline{\gamma }%
_{j},\text{ }j\in \{1,...,p\},  \label{selection constraint} \\
& \dsum\nolimits_{j=1}^{p}e_{j}\leq q,  \label{cardinality constraint} \\
& d_{i}\in \{0,1\},\text{ }i\in \{1,...,n\},  \label{indicator di} \\
& e_{j}\in \{0,1\},\text{ }j\in \{1,...,p\},  \label{selection indicator ej}
\end{align}%
where $\delta $ is a given small and positive real scalar (e.g. $\delta
=10^{-6}$ as in our numerical study).

We now explain the equivalence between (\ref{max score}) and (\ref%
{constrained MIO}). Given $(\beta ,\gamma )$, the inequality constraints (%
\ref{constraint on di}) and the dichotomization constraints (\ref{indicator
di}) enforce that $d_{i}=1\{X_{0i}+\widetilde{X}_{i}^{\prime }\beta
+Z_{i}^{\prime }\gamma \geq 0\}$ for $i\in \{1,...,n\}$. Therefore,
maximizing the objective function in (\ref{max score}) for $(\beta ,\gamma
)\in \Theta $ subject to the constraints (\ref{constraint on di}) and (\ref%
{indicator di}) is equivalent to solving the problem (\ref{max score}) using
all covariates. This part of formulation is similar to the MIO formulation
used by \citet{Kitagawa:Tetenov:2015} for solving the maximum score type
estimation problems without the variable selection constraint.

Following \citet{bertsimas2016}, we implement the best subset variable
selection feature through the additional constraints (\ref{selection
constraint}), (\ref{cardinality constraint}) and (\ref{selection indicator
ej}). The on-off constraints (\ref{selection constraint}) and (\ref%
{selection indicator ej}) ensure that, whenever $e_{j}=0$, the auxiliary
covariate $Z_{j}$ is excluded in the resulting PRESCIENCE. Finally, the
cardinality constraint $\left\Vert \gamma\right\Vert _{0}\leq q$ is enacted
through the constraint (\ref{cardinality constraint}), which restricts the
maximal number of the binary controls $e_{j}$ that can take value unity. 

Modern numerical optimization solvers such as CPLEX, Gurobi, MOPS, Mosek and
Xpress-MP can be used to effectively solve the MIO formulations of the
PRESCIENCE problem. Most of the solution algorithms employed by the MIO
solvers can be viewed as complex and advanced refinements of the well-known
branch-and-bound method for solving MIO problems.\footnote{%
See Online Appendix \ref{bb-method} for further details of the
branch-and-bound method.} Along the branch-and-bound solution process, we
can keep track of two important values: the best upper and lower bounds on
the objective value of the MIO problem (\ref{constrained MIO}). The best
lower bound corresponds to the objective function evaluated at the incumbent
solution, which is the best feasible solution discovered so far. The best
upper bound can be deduced by taking the maximum of the optimal objective
values of all the linear programming relaxation formulations of the
branching MIO sub-problems that have been solved so far. Let $MIO\_gap$
denote the difference between these two bounds. Note that the incumbent
solution becomes optimal when the $MIO\_gap$ value reduces to zero.

We can use the $MIO\_gap$ value to solve for the $\varepsilon $-level
PRESCIENCE introduced in Section \ref{covariate selection}. To see this,
consider an early termination rule by which the solution algorithm is
terminated whenever $MIO\_gap\leq \varepsilon $ where $\varepsilon $ is a
given tolerance level. Let $(\widehat{\beta },\widehat{\gamma })$ be the
incumbent solution upon termination of the MIO solver. Because $(\widehat{%
\beta },\widehat{\gamma })$ is in the feasible solution set of the problem (%
\ref{constrained MIO}), by constraints (\ref{selection constraint}) and (\ref%
{cardinality constraint}), we have that $\left\Vert \widehat{\gamma }%
\right\Vert _{0}\leq q$ . Moreover, by constraints (\ref{constraint on di})
and (\ref{indicator di}), we have that $\widehat{d}_{i}=1\{X_{0i}+\widetilde{%
X}_{i}^{\prime }\widehat{\beta }+Z_{i}^{\prime }\widehat{\gamma }\geq 0\}$
for $i\in \{1,...,n\}$ so that $S_{n}(1,\widehat{\beta },\widehat{\gamma })$
is equal to the objective function in (\ref{constrained MIO}) evaluated at $(%
\widehat{d}_{1},...,\widehat{d}_{n})$. Since (\ref{max score}) and (\ref%
{constrained MIO}) are equivalent maximization problems, it thus follows
from the construction of $MIO\_gap$ value that%
\begin{equation}
S_{n}(1,\widehat{\beta },\widehat{\gamma })\geq \max\nolimits_{\left( \beta
,\gamma \right) \in \Theta _{q}}\text{ }S_{n}(1,\beta ,\gamma )-MIO\_gap.%
\text{ }  \label{ineq}
\end{equation}%
Given the termination condition, we can therefore see that%
\begin{equation}
S_{n}(1,\widehat{\beta },\widehat{\gamma })\geq \max\nolimits_{\left( \beta
,\gamma \right) \in \Theta _{q}}\text{ }S_{n}(1,\beta ,\gamma )-\varepsilon ,
\label{approximate MIO solution}
\end{equation}%
which yields an approximately optimal solution with the optimization
tolerance level $\varepsilon $\ for the problem (\ref{sub-problem}).

The PRESCIENCE can also be solved by incorporating the constraints (\ref%
{selection constraint}), (\ref{cardinality constraint}) and (\ref{selection
indicator ej}) in the MIO formulation of \citet{Florios:Skouras:08} for
solving the maximum score estimation problem. We now present this
alternative MIO formulation below.

\subsection{Method 2}

Consider the constrained maximization problem:%
\begin{equation}
\max\nolimits_{\left( \beta ,\gamma \right) \in \Theta
_{q}}n^{-1}\sum\nolimits_{i=1}^{n}1\left\{ \left( 2Y_{i}-1\right) (X_{0i}+%
\widetilde{X}_{i}^{\prime }\beta +Z_{i}^{\prime }\gamma )\geq 0\right\} .
\label{Florios-Skouras max score formulation}
\end{equation}%
The problem (\ref{Florios-Skouras max score formulation}) without the
constraint $\left\Vert \gamma \right\Vert _{0}\leq q$ reduces to the type of
maximum score estimation problem studied by \citet{Florios:Skouras:08}. The
objective function in (\ref{Florios-Skouras max score formulation})
coincides with that in (\ref{sub-problem}) with probability 1 as long as the
sum $X_{0}+\widetilde{X}^{\prime }\beta +Z^{\prime }\gamma $ is continuously
distributed. This condition holds provided that the distribution of $X_{0}$
conditional on $(\widetilde{X},Z)$ is continuous. With such a continuous
covariate, we can also solve (\ref{sub-problem}) by solving the following
MIO formulation of (\ref{Florios-Skouras max score formulation}):%
\begin{align}
& \max_{(\beta ,\gamma )\in \mathbf{\Theta }%
,d_{1},...,d_{n},e_{1},...,e_{p}}n^{-1}\sum\nolimits_{i=1}^{n}d_{i}\text{ }
\label{Florios-Skouras MIO} \\
& \text{subject to the constraints (\ref{selection constraint}), (\ref%
{cardinality constraint}), (\ref{indicator di}), (\ref{selection indicator
ej}), and}  \notag \\
& \left( 1-2Y_{i}\right) (X_{0i}+\widetilde{X}_{i}^{\prime }\beta
+Z_{i}^{\prime }\gamma )\leq M_{i}\left( 1-d_{i}\right) ,\text{ }i\in
\{1,...,n\}.  \label{sign matching constraints}
\end{align}

\citet{Florios:Skouras:08} showed that maximizing the objective function in (%
\ref{Florios-Skouras MIO}) for $(\beta ,\gamma )\in \Theta $ subject to the
constraints (\ref{indicator di}) and (\ref{sign matching constraints}) is
equivalent to solving the problem (\ref{Florios-Skouras max score
formulation}) using all covariates. This can be seen from the fact that the
objective function of the MIO problem (\ref{Florios-Skouras MIO}) is
strictly increasing in $d_{i}$ so that, given $(\beta ,\gamma )$, it is
optimal to set $d_{i}=1\{\left( 2Y_{i}-1\right) (X_{0i}+\widetilde{X}%
_{i}^{\prime }\beta +Z_{i}^{\prime }\gamma )\geq 0\}$ under the constraints (%
\ref{indicator di}) and (\ref{sign matching constraints}). Along similar
arguments to those discussed for the problem (\ref{constrained MIO}), it is
also straightforward to verify that the variable selection constraint $%
\left\Vert \gamma \right\Vert _{0}\leq q$ is imposed through the constraints
(\ref{selection constraint}), (\ref{cardinality constraint}) and (\ref%
{selection indicator ej}). Therefore, the maximization problems (\ref%
{Florios-Skouras max score formulation}) and (\ref{Florios-Skouras MIO}) are
equivalent.

\subsection{Tightening the Parameter Space as a Warm Start to the MIO
Formulation of the PRESCIENCE Problem\label{sec:para-space}}

The MIO formulations (\ref{constrained MIO}) and (\ref{Florios-Skouras MIO})
depend on the specification of the parameter space $\Theta$. It is well
known that use of a good and tighter parameter space can strengthen the
performance of a global optimization procedure. Given an initial
specification of $\Theta $, we propose below a data driven approach to
refine the parameter space.

Recall that $W_{i}=(X_{0i},\widetilde{X}_{i},Z_{i})$ is the entire covariate
vector. Let $\widetilde{W}_{i}$ denote the vector $(\widetilde{X}_{i},Z_{i})$%
. For $i\in \{1,...,n\}$, let $\widehat{P}_{i}$ be an estimate of ${P}%
_{i}\equiv P(Y_{i}=1|W_{i})$. Define the following sets recursively:

\begin{equation*}
\underline{\Theta }_{1}\equiv \Theta \mathbf{,}\text{ }\overline{\Theta }%
_{1}\equiv \left\{ \left( t_{1},...,t_{k+p}\right) \in \Theta :t_{1}\geq 
\widehat{l}_{1}\right\}
\end{equation*}%
and, for $m\in \{2,...,k+p\}$, 
\begin{align}
\underline{\Theta }_{m}& \equiv \left\{ \left( t_{1},...,t_{k+p}\right) \in
\Theta :\widehat{l}_{s}\leq t_{s}\leq \widehat{u}_{s}\text{ for }s\in
\{1,...,m-1\}\right\} ,  \label{theta_lowerbar} \\
\overline{\Theta }_{m}& \equiv \left\{ \left( t_{1},...,t_{k+p}\right) \in 
\underline{\Theta }_{m}:t_{m}\geq \widehat{l}_{m}\right\} .
\label{theta_upperbar}
\end{align}%
where, for $j\in \{1,...,k+p\}$, the quantities $\widehat{l}_{j}$ and $%
\widehat{u}_{j}$ are defined respectively by%
\begin{align}
& \widehat{l}_{j}\equiv \min\limits_{t\in \underline{\Theta }_{j}}t_{j}\text{
subject to }  \label{lower bound} \\
& (X_{0i}+\widetilde{W}_{i}^{\prime }t)(\widehat{P}_{i}-0.5)\geq 0\text{ for 
}i\in \{1,...,n\}.  \label{sign constraints} \\[5pt]
& \widehat{u}_{j}\equiv \max\limits_{t\in \overline{\Theta }_{j}}t_{j}\text{
subject to the constraints (\ref{sign constraints}).}  \label{upper bound}
\end{align}%
If the binary outcome $Y_{i}$ is generated from the model specified by (\ref%
{binary response model}) and (\ref{median independence}) and the conditional
probability $P(Y=1|W)$ is nonparametrically estimated, the interval $[%
\widehat{l}_{j},\widehat{u}_{j}]$ is a nonparametric estimate of the
identified set for the $j$th component of the parameter vector $t=(\beta
,\gamma )$. In this case, the sign-matching constraints (\ref{sign
constraints}) can be regarded as the empirical counterparts of the
inequalities stated in the set 
\begin{equation*}
\left\{ t\in \Theta :(X_{0i}+\widetilde{W}_{i}^{\prime
}t)(P(Y_{i}=1|W_{i})-0.5)\geq 0\text{ almost surely}\right\} ,
\end{equation*}%
which contains those $t$ values that are observationally equivalent to the
true data generating parameter value 
\citep[see, e.g.][]{komarova2013,
chen2015}. Our procedures for computing $\widehat{l}_{j}$ and $\widehat{u}%
_{j}$ are modified versions of \citet[p. 62]{horowitz1998}'s linear
programming formulations of the identified bounds on the parameter
components. The formulations (\ref{lower bound}) and (\ref{upper bound})
differ from those of Horowitz in that we further tighten the domain of $t$
in these optimization problems by exploiting the information of the upper
and lower bound values that have been solved so far.

When the covariate vector is of high dimension, nonparametric estimation of $%
P(Y=1|W)$ would suffer from the curse of dimensionality problem. In this
paper, we consider estimating this conditional probability by parametric
methods such as the logit or probit approach. In Monte Carlo experiments and
an empirical application, we estimate $({P}_{i})_{i=1}^{n}$ by the fitted
choice probabilities from the logit regression of $Y$ on all the covariates.

Noting that the parametric model for estimating $P(Y=1|W)$ may be
misspecified, for $\tau \geq 1$, we construct a conservative space $\widehat{%
\Theta }\left( \tau \right) $, which is a $\tau $-enlargement of the space $%
\prod\nolimits_{j=1}^{k+p}\left[ \widehat{l}_{j},\widehat{u}_{j}\right] $ as
given below:%
\begin{equation*}
\widehat{\Theta }\left( \tau \right) \equiv \left\{ t\in \Theta :-\tau
\left( \left\vert \widehat{l}_{j}\right\vert \vee \left\vert \widehat{u}%
_{j}\right\vert \right) \leq t_{j}\leq \tau \left( \left\vert \widehat{l}%
_{j}\right\vert \vee \left\vert \widehat{u}_{j}\right\vert \right) \text{
for }j\in \{1,...,k+p\}\right\} .
\end{equation*}

We can solve the MIO problems (\ref{constrained MIO}) and (\ref%
{Florios-Skouras MIO}) with the refined parameter space $\widehat{\Theta }%
\left( \tau \right) $ in place of the original space $\Theta $.\footnote{%
Both the covering cube $\mathbf{C}$ and the quantities $\left( M_{i}\right)
_{i=1}^{n}$ depend on the input parameter space. Hence, for the warm-start
formulations of (\ref{constrained MIO}) and (\ref{Florios-Skouras MIO}),
these objects are also computed under the refined space $\widehat{\Theta }%
\left( \tau \right) $.} Using the terminology used in \citet{bertsimas2016},
we shall refer to these refined MIO representations as the \emph{warm-start}
MIO formulations of the PRESCIENCE problem. The value of $\tau $ is treated
as a tuning parameter for solving the warm-start MIO problems. The original
formulations (\ref{constrained MIO}) and (\ref{Florios-Skouras MIO}) based
on the space $\Theta $ are referred to as the \emph{cold-start} MIO
formulations.

Computation of the refined space $\widehat{\Theta }\left( \tau \right) $
requires solving $2\left( k+p\right) $ simple linear programming problems.
This task can be done very efficiently even when $p$ is relatively large. On
the other hand, the space $\widehat{\Theta }\left( \tau \right) $ is not
always constructible since the problems (\ref{lower bound}) and (\ref{upper
bound}) may not admit any feasible solution. This may occur due to the
misspecification issue of using parametric choice probability estimates.
Alternatively, it can also occur when the postulated binary response model
specified by (\ref{binary response model}) and (\ref{median independence})
itself is misspecified. As illustrated by Monte Carlo simulations and a real
data application in Online Appendices \ref{simulations of the cold and warm
start methods} and \ref{emp:linear:spec}, when the refined space $\widehat{%
\Theta }\left( \tau \right) $ is available, solving the warm-start MIO
formulations can be computationally far more efficient than solving their
corresponding cold-start versions.

We conclude this subsection by commenting that our warm-start approach does
not work well when $p$ is greater than $n$. In this case, irrespective of
the knowledge of the true choice probabilities, the dimension of the vector
of unknown coefficients is larger than the number of inequalities given by
the constraints (\ref{sign constraints}) such that these constraints may
become ineffective for tightening the original parameter bounds. It is a
topic for future research how to devise a good wart-start option for the
high-dimensional setup.

\section{Simulation Study\label{simulation study}}

In this section, we study the performance of the PRESCIENCE method in Monte
Carlo experiments. Throughout this paper, we used the MATLAB implementation
of the Gurobi Optimizer to solve the MIO problems. Moreover, all numerical
computations were done on a desktop PC (Windows 7) equipped with 32 GB RAM
and a CPU processor (Intel i7-5930K) of 3.5 GHz.\footnote{%
The MATLAB codes for implementing the PRESCIENCE approach are available from
the authors via the website %
\url{https://github.com/LeyuChen/Best-Subset-Binary-Prediction}. This
implementation requires the Gurobi Optimizer, which is freely available for
academic purposes.}

Let $V=(V_{1},...,V_{p+1})$ be a multivariate normal random vector with mean
zero and covariance matrix $\Sigma $ with its element $\Sigma _{i,j}=\left(
0.25\right) ^{\left\vert i-j\right\vert }$. The binary outcome is generated
according to the following setup:%
\begin{equation*}
Y=1\{W^{\prime }\theta ^{\ast }\geq \sigma (W)\xi \},
\end{equation*}%
where $\theta ^{\ast }$ denotes the value of the true data generating
parameter vector, $W=(X,Z)$ is a $(p+2)$ dimensional covariate vector with
the focus covariates $X=(X_{0},\widetilde{X})=(V_{1},1)$ and the auxiliary
covariates $Z=(V_{2},...,V_{p+1})$, and $\xi $ is a $N(0,1)$ random variate
independent of $V$. We set $\theta _{1}^{\ast }=1$, $\theta _{2}^{\ast }=0$,
and $\theta _{j}^{\ast }=0$ for $j\in \{4,...,p+2\}.$ The coefficient $%
\theta _{3}^{\ast }$ is chosen to be non-zero such that, among the $p$
auxiliary covariates, only the variable $Z_{1}$ is relevant in the data
generating processes (DGP).

We consider the following two specifications for $\theta _{3}^{\ast }$ and $%
\sigma (W)$: 
\begin{eqnarray*}
&&\text{DGP(i) : }\theta _{3}^{\ast }=-0.35\text{ and }\sigma (W)=0.25. \\
&&\text{DGP(ii) : }\theta _{3}^{\ast }=-1.5\text{ and }\sigma (W)=0.25\left(
1+2\left( V_{1}+V_{2}\right) ^{2}+\left( V_{1}+V_{2}\right) ^{4}\right) .
\end{eqnarray*}

As before, the parameter vector $\theta $ in (\ref{predictor}) is decomposed
as $\theta =(\alpha ,\beta ,\gamma )$ where $\alpha $, $\beta $ and $\gamma $
are coefficients associated with $X_{0}$, $\widetilde{X}$ and $Z$,
respectively. The parameter space for the PRESCIENCE approach is specified
to be%
\begin{equation}
\{\left( \alpha ,\beta ,\gamma \right) \in \mathbb{R}^{p+2}:\alpha =1,(\beta
,\gamma )=(\beta ,\gamma _{1},...,\gamma _{p})\in \lbrack -10,10]^{p+1}\}
\label{PRESCIENCE parameter space}
\end{equation}%
over which we compute PRESCIENCE via solving its corresponding MIO problem.

There were $100$ simulation repetitions in each Monte Carlo experiment. For
each simulation repetition, we generated a training sample of $n$
observations for estimating the coefficients $\theta $ and a validation
sample of $5000$ observations for evaluating the out-of-sample predictive
performance. The training sample size $n$ was set to be 100 for DGP(i) and
50 for DGP(ii). For each DGP setup, we performed simulations with both the
low and high dimensional covariate configurations. For the low dimensional
case, we set $p=10$ for both DGP(i) and (ii). For the high dimensional case,
we set $p=200$ for DGP(i) and $p=60$ for DGP(ii).

We considered the following class of prediction methods:%
\begin{equation}
\mathcal{M}=\{\{\text{PRESCIENCE}\left( q\right) \text{, }q\in \{1,2,3\}\},%
\text{PRE\_CV},\text{logit\_lasso,probit\_lasso}\}\text{,}
\label{prediction methods}
\end{equation}%
where PRESCIENCE$\left( q\right) $ denotes the PRESCIENCE approach with a
cardinality bound $q$ imposed on the auxiliary covariates, PRE\_CV denotes
the PRESCIENCE approach using a data driven value of $q\in \{1,2,3\}$ via
the 5-fold cross validation procedure, and logit\_lasso and probit\_lasso
respectively denote the $\ell _{1}$-penalized logit and probit maximum
likelihood estimation (MLE) approaches \citep[see
e.g.][]{friedman2010}. Throughout this simulation study, we employed the
cold-start MIO formulation (\ref{constrained MIO}) to solve the PRESCIENCE
problems. For the simulation experiment with $p<n$, we computed the exact
solution to each PRESCIENCE problem. For the high dimensional case with $p>n$%
, we solved for the PRESCIENCE solution with the tolerance level $%
\varepsilon $ specified according to the rule%
\begin{equation}
\varepsilon =\min \{0.05,0.5\sqrt{\ln (p\vee n)/n}\}.
\label{early stopping rule}
\end{equation}%
Note that this early termination rule is compatible with the order of
magnitude stated in the condition (\ref{epsilon}) for the convergence rate
result \eqref{theorem1-expectation}. For the logit\_lasso and probit\_lasso
approaches, we used the MATLAB function \textbf{lassoglm} to implement these
two penalized MLE approaches for which we calibrated the lasso penalty
parameter value over a sequence of 100 values via the 10-fold cross
validation procedure. We used the default setup of \textbf{lassoglm} for
constructing this tuning sequence among which we made the following three
choices, $\left\{ \lambda _{\min },\lambda _{1se},\lambda _{2se}\right\} $,
of the penalty parameter value. To describe these, let $\{I_{j}:j=1,\ldots
,K\}$ be the partition of data, where $K=10$, and let $\hat{L}(I_{j},\lambda
)$ denote the minus log likelihood function evaluated using data in $I_{j}$
but estimating the model using data in $\bigcup\nolimits_{i\in
\{1,...,K\}\backslash \{j\}}I_{i}$ with a given penalty parameter value $%
\lambda $. The value $\lambda _{\min }$ refers to the $\lambda $ value that
minimized the mean cross validated deviances ($K^{-1}\sum_{j=1}^{K}\hat{L}%
(I_{j},\lambda )$), whereas $\lambda _{1se}$ and $\lambda _{2se}$
respectively denote the largest penalty parameter values whose corresponding
mean cross validated deviances still fall within the one- and two-standard
errors of $K^{-1}\sum_{j=1}^{K}\hat{L}(I_{j},\lambda _{\min })$.\footnote{%
Here, standard errors are computed over the 10 cross-validation folds.
Choice of the lasso tuning parameter based on $\lambda _{1se}$ is also known
as the "one-standard-error" rule, which is commonly employed in the
statistical learning literature \citep{Hastie2009}.}

For each $m\in $ $\mathcal{M}$, let $\widehat{\theta }(m)$ denote the
coefficients computed under the prediction method $m$. Let $in\_Score$
denote the average of the in-sample objective values $S_{n}(\widehat{\theta }%
(m))$ over all the simulation repetitions. In each simulation repetition, we
approximated the out-of-sample objective value $S(\widehat{\theta }(m))$
using the generated validation sample. Let $out\_Score$ denote the average
of $S(\widehat{\theta }(m))$ over all the simulation repetitions. It is
straightforward to see that the theoretically best prediction rule $b^{\ast
}(w)$ in this simulation design takes the form $b^{\ast }(w)=1\left\{
w^{\prime }\theta ^{\ast }\geq 0\right\} $. Hence, we also assess the
predictive performance of a given prediction method $m\in \mathcal{M}$ by
its relative score, which is ratio of the score evaluated at $\widehat{%
\theta }(m)$ over that evaluated at $\theta ^{\ast }$. Let $in\_RS$ and $%
out\_RS$ respectively denote the average of in-sample relative scores $S_{n}(%
\widehat{\theta }(m))/$ $S_{n}(\theta ^{\ast })$ and that of out-of-sample
relative scores $S(\widehat{\theta }(m))/S(\theta ^{\ast })$ over all the
simulation repetitions.

We also examine the variable selection performance of the prediction method.
We say that a variable $Z_{j}$ is effectively selected under the prediction
method $m$ if and only if the magnitude of $\widehat{\theta }_{j+2}(m)$ is
larger than a small tolerance level (e.g. $10^{-6}$ as used in our numerical
study) which is distinct from zero in numerical computation. Let $Corr\_sel$
be the proportion of the auxiliary covariate $Z_{1}$ being effectively
selected. Let $Orac\_sel$ be the proportion of obtaining an oracle variable
selection outcome where, among all the auxiliary covariates, $Z_{1}$ was the
only one that was effectively selected. Let $Num\_irrel$ denote the average
number of effectively selected auxiliary covariates whose true DGP
coefficients are zero.

\subsection{Simulation Results for the DGP(i) Design}

We now present the simulation results under the setup of DGP(i). First, we
report the computational performance of our MIO solution algorithm to the
PRESCIENCE problems. Table \ref{tab1} gives the summary statistics of the
MIO computation time in CPU seconds across simulation repetitions. From this
table, we can see that the MIO problems for the PRESCIENCE computation were
solved very efficiently in the DGP(i) simulations where the number of the
auxiliary covariates could be the double of the sample size yet the maximum
computation time was only around 5 minutes. It is also interesting to note
that the PRESCIENCE computation time was not monotone in $q$. This feature
might be due to the branching strategy heuristics of the MIO
branch-and-bound solution algorithms. 
\begin{table}[tbph]
\caption{PRESCIENCE computation time (CPU seconds) under DGP(i)}
\label{tab1}
\begin{center}
\begin{tabular}{c|ccc|ccc}
\hline\hline
& \multicolumn{3}{|c|}{$p=10$} & \multicolumn{3}{|c}{$p=200$} \\ \hline
$q$ & 1 & 2 & 3 & 1 & 2 & 3 \\ \hline
mean & 0.69 & 1.10 & 0.51 & 7.11 & 20.37 & 1.96 \\ 
min & 0.01 & 0.01 & 0.01 & 0.31 & 0.16 & 0.07 \\ 
median & 0.35 & 0.38 & 0.33 & 3.76 & 2.09 & 0.99 \\ 
max & 3.51 & 24.56 & 5.42 & 68.04 & 362.7 & 15.95 \\ \hline
\end{tabular}%
\end{center}
\end{table}

We next turn to the statistical performance of the binary prediction method.
In Tables \ref{tab2} and \ref{tab3}, we compare the aforementioned
predictive and variable selection performance measures for the various
prediction approaches given in (\ref{prediction methods}). As shown in these
two tables, regardless of $p$, the in-sample fit in terms of $in\_Score$ and 
$in\_RS$ for the PRESCIENCE($q$) method increased with $q$. This finding is
expected because the in-sample objective function (\ref{empirical objective
function}) is monotone in $q$ by design for the PRESCIENCE approach.
Nonetheless, both tables indicate that $out\_Score$ and $out\_RS$ also
declined as $q$ increased, thus resounding with the known issue that
in-sample overfitting may result in poor out-of-sample performance. When the
true number of effective auxiliary covariates is unknown, one can choose the 
$q$ value that maximizes the mean cross validated score. From Tables \ref%
{tab2} and \ref{tab3}, we find that the PRE\_CV approach indeed balanced
well the in-sample and out-of-sample predictive performances. Moreover, its
predictive performance measures were also comparable to those given by the
logit\_lasso and probit\_lasso approaches.

We now discuss the variable selection results. Table \ref{tab2} indicates
that all the prediction approaches had high $Corr\_sel$ rates and hence were
capable of effectively selecting the relevant covariate $Z_{1}$. However,
the good performance in the $Corr\_sel$ criterion may arise at the risk of
overfitting. Therefore, we also have to take into account the performance in
excluding irrelevant auxiliary covariates. The simulation design implies
that the case with $q=1$ is the most parsimonious PRESCIENCE setup that
correctly specifies the number of effective auxiliary covariates in the DGP.
Therefore, it is not surprising that PRESCIENCE(1) performed the best in
terms of $Num\_irrel$. We note that the PRE\_CV approach also performed very
well in excluding the irrelevant variables. In fact, for the PRESCIENCE
based approaches, only PRESCIENCE(1) and PRE\_CV could yield a non-zero
probability of inducing an oracle variable selection outcome. For the
penalized MLE approaches, the logit\_lasso and probit\_lasso coupled with
the larger penalty parameter value $\lambda _{2se}$ also performed well in
terms of $Orac\_Sel$ and $Num\_irrel$, albeit at the cost of a slight
reduction of the out-of-sample predictive performances.

We also observe a similar pattern in the results of the DGP(i) setup with $%
p=200$. From Table \ref{tab3}, we find that the PRE\_CV approach also
balanced very well the requirement for including the relevant but excluding
the irrelevant variables. It is also noted that the logit\_lasso and
probit\_lasso approaches in this high dimensional simulation setup tended to
select more irrelevant variables than the PRESCIENCE approach, hence
suffering from a larger extent of overfitting. 
\begin{table}[tbph]
\caption{Comparison of prediction methods under DGP(i) with $p=10$}
\label{tab2}
\begin{center}
\begin{tabular}{c|ccc|c|ccc|ccc}
\hline\hline
{\small method} & \multicolumn{3}{|c|}{{\small PRESCIENCE(}$q${\small )}} & 
{\small PRE\_CV} & \multicolumn{3}{|c}{\small logit\_lasso} & 
\multicolumn{3}{|c}{\small probit\_lasso} \\ 
& ${\small q=1}$ & ${\small q=2}$ & ${\small q=3}$ &  & ${\small \lambda }%
_{\min }$ & ${\small \lambda }_{1se}$ & ${\small \lambda }_{2se}$ & ${\small %
\lambda }_{\min }$ & ${\small \lambda }_{1se}$ & ${\small \lambda }_{2se}$
\\ \hline
${\small Corr\_sel}$ & {\small 0.93} & {\small 0.99} & {\small 1} & {\small %
0.97} & {\small 1} & {\small 1} & {\small 0.94} & {\small 1} & {\small 1} & 
{\small 0.95} \\ 
${\small Orac\_sel}$ & {\small 0.93} & {\small 0} & {\small 0} & {\small 0.51%
} & {\small 0} & {\small 0.18} & {\small 0.45} & {\small 0} & {\small 0.21}
& {\small 0.48} \\ 
${\small Num\_irrel}$ & {\small 0.07} & {\small 1.01} & {\small 1.99} & 
{\small 0.71} & {\small 5.03} & {\small 2.02} & {\small 0.97} & {\small 4.75}
& {\small 1.87} & {\small 0.9} \\ 
${\small in\_Score}$ & {\small 0.948} & {\small 0.964} & {\small 0.974} & 
{\small 0.960} & {\small 0.947} & {\small 0.930} & {\small 0.919} & {\small %
0.944} & {\small 0.928} & {\small 0.917} \\ 
${\small in\_RS}$ & {\small 1.028} & {\small 1.046} & {\small 1.058} & 
{\small 1.042} & {\small 1.028} & {\small 1.009} & {\small 0.997} & {\small %
1.024} & {\small 1.007} & {\small 0.995} \\ 
${\small out\_Score}$ & {\small 0.904} & {\small 0.901} & {\small 0.898} & 
{\small 0.903} & {\small 0.904} & {\small 0.905} & {\small 0.899} & {\small %
0.904} & {\small 0.904} & {\small 0.898} \\ 
${\small out\_RS}$ & {\small 0.982} & {\small 0.979} & {\small 0.976} & 
{\small 0.981} & {\small 0.983} & {\small 0.983} & {\small 0.977} & {\small %
0.983} & {\small 0.983} & {\small 0.976} \\ \hline
\end{tabular}%
\end{center}
\end{table}

\begin{table}[tbph]
\caption{Comparison of prediction methods under DGP(i) with $p=200$}
\label{tab3}
\begin{center}
\begin{tabular}{c|ccc|c|ccc|ccc}
\hline\hline
{\small method} & \multicolumn{3}{|c|}{{\small PRESCIENCE(}$q${\small )}} & 
{\small PRE\_CV} & \multicolumn{3}{|c}{\small logit\_lasso} & 
\multicolumn{3}{|c}{\small probit\_lasso} \\ 
& ${\small q=1}$ & ${\small q=2}$ & ${\small q=3}$ &  & ${\small \lambda }%
_{\min }$ & ${\small \lambda }_{1se}$ & ${\small \lambda }_{2se}$ & ${\small %
\lambda }_{\min }$ & ${\small \lambda }_{1se}$ & ${\small \lambda }_{2se}$
\\ \hline
${\small Corr\_sel}$ & {\small 0.78} & {\small 0.88} & \multicolumn{1}{c|}%
{\small 0.89} & {\small 0.86} & {\small 0.99} & {\small 0.9} & {\small 0.79}
& {\small 0.99} & {\small 0.9} & {\small 0.78} \\ 
${\small Orac\_sel}$ & {\small 0.78} & {\small 0} & \multicolumn{1}{c|}%
{\small 0} & {\small 0.51} & {\small 0} & {\small 0.04} & {\small 0.14} & 
{\small 0} & {\small 0.02} & {\small 0.15} \\ 
${\small Num\_irrel}$ & {\small 0.22} & {\small 1.12} & \multicolumn{1}{c|}%
{\small 2.11} & {\small 0.73} & {\small 21.18} & {\small 8} & {\small 3.67}
& {\small 20.38} & {\small 7.4} & {\small 3.37} \\ 
${\small in\_Score}$ & {\small 0.943} & {\small 0.965} & \multicolumn{1}{c|}%
{\small 0.972} & {\small 0.957} & {\small 0.981} & {\small 0.939} & {\small %
0.911} & {\small 0.977} & {\small 0.934} & {\small 0.907} \\ 
${\small in\_RS}$ & {\small 1.032} & {\small 1.056} & \multicolumn{1}{c|}%
{\small 1.063} & {\small 1.047} & {\small 1.073} & {\small 1.027} & {\small %
0.997} & {\small 1.070} & {\small 1.022} & {\small 0.993} \\ 
${\small out\_Score}$ & {\small 0.893} & {\small 0.891} & \multicolumn{1}{c|}%
{\small 0.883} & {\small 0.895} & {\small 0.876} & {\small 0.884} & {\small %
0.881} & {\small 0.876} & {\small 0.884} & {\small 0.880} \\ 
${\small out\_RS}$ & {\small 0.971} & {\small 0.969} & \multicolumn{1}{c|}%
{\small 0.960} & {\small 0.973} & {\small 0.953} & {\small 0.961} & {\small %
0.958} & {\small 0.953} & {\small 0.961} & {\small 0.957} \\ \hline
\end{tabular}%
\end{center}
\end{table}

To save space, we present details of the simulation results under the setup
of DGP(ii) in Online Appendix \ref{sim:hetero:design}. The results are
similar to those under DGP(i).

\section{An Illustrative Application\label{empirical application}}

We illustrate usefulness of PRESCIENCE in the empirical application of
work-trip mode choice. We used the transportation mode dataset analyzed by %
\citet{horowitz1993}. This dataset has also been well studied for
illustration of econometric methods developed for binary response models
(e.g., see \citet{Florios:Skouras:08}, \citet[Section 4.3]{benoit2012}, and
the references therein). 
The previous literature focused on estimating slope coefficients in the
binary response model; however, in this section, we are mainly interested in
the numerical performance of alternative MIO algorithms and the result of
covariate selection.

The data consist of 842 observations sampled randomly from the Washington,
D.C., area transportation study. Each record in the dataset contains the
following information for a single work trip of the traveler: the chosen
transportation mode, the number of cars owned by the traveler's household ($%
CARS$), the transit out-of-vehicle travel time minus automobile
out-of-vehicle travel time in minutes ($DOVTT$), the transit in-vehicle
travel time minus automobile in-vehicle travel time in minutes ($DIVTT$) and
the transit fare minus automobile travel cost in dollars ($DCOST$).

The dependent variable $Y$ is the traveler's chosen mode of transportation
such that $Y=1$ if the choice is automobile and 0 otherwise. Following %
\citet{Florios:Skouras:08}, we standardized each of explanatory variables to
have mean zero and unit variance. Following \citet{horowitz1993} and %
\citet{Florios:Skouras:08}, we specified the coefficient of $DCOST$ to be
unity and did not estimate that parameter. We set the focus covariates $%
X=(X_{0},\widetilde{X})$ to be $\left( DCOST,1\right) $, where the constant
term was included to capture the regression intercept and the parameter $%
\alpha $ was set to be unity. The resulting PRESCIENCE problem hence reduced
to the maximization problem (\ref{sub-problem}). We implemented the two MIO
formulations developed in Section \ref{implementation} for solving this
problem. To compare their computational performance, we report the CPU time
(in seconds) and the number of branch-and-bound nodes that the MIO solver
had explored to reach the optimal solution. The former depends on both the
computing hardware and software configurations whereas the latter only
depends on the solution algorithms employed by the MIO solver.

For the auxiliary covariates, we set%
\begin{align}
Z& =(CARS,DOVTT,DIVTT,CARS\times DOVTT,DOVTT\times DIVTT,  \notag \\
& CARS\times DIVTT,CARS\times CARS,DOVTT\times DOVTT,  \notag \\
& DIVTT\times DIVTT).  \label{quadratic expansion specification}
\end{align}%
The covariate specification (\ref{quadratic expansion specification}) allows
us to approximate a smooth function of the variables $(CARS,DOVTT,DIVTT)$ by
its quadratic expansion.\footnote{%
In Online Appendix \ref{emp:linear:spec}, we report empirical results using $%
Z=\left( CARS,DOVTT,DIVTT\right) $.} We are interested in the data driven
selection of these expansion terms through the PRESCIENCE procedure. In this
setup, we have that $k=1$ and $p=9$. We specified the parameter space $%
\Theta $ to be $\left[ -10,10\right] ^{10}$. We computed the PRESCIENCE for
each $q\in \{1,3,5\}$. Since there are 10 unknown parameters in this setup,
we solved for the PRESCIENCE solutions with a non-zero tolerance level which
was also specified according to the rule (\ref{early stopping rule}). For $%
n=842$, this amounts to setting the MIO tolerance level to be about 4.4\%.
To further reduce the computational cost, we adopted the warm-start strategy
in the resulting MIO formulations. We set $\tau =1.5$ and constructed $(%
\widehat{P}_{i})_{i=1}^{n}$ using the fitted choice probabilities from the
logit regression of $Y$ on all the covariates to derive the refined space $%
\widehat{\Theta }\left( \tau \right) $ from the initial parameter space $%
\Theta $. 
\begin{table}[tbh]
\caption{Refined parameter bounds ($\protect\tau =1.5$)}
\label{tab-04}
\begin{center}
\emph{Covariate specification}: $k=1,p=9$ \\[0pt]
\vspace*{2ex} 
\begin{tabular}{lcc}
\hline\hline
Variable & lower bound & upper bound \\ \hline
Intercept & -10 & 10 \\ 
$CARS$ & -10 & 10 \\ 
$DOVTT$ & -9.8299 & 9.8299 \\ 
$DIVTT$ & -8.0158 & 8.0158 \\ 
$CARS\times DOVTT$ & -6.0306 & 6.0306 \\ 
$DOVTT\times DIVTT$ & -7.5870 & 7.5870 \\ 
$CARS\times DIVTT$ & -5.7873 & 5.7873 \\ 
$CARS\times CARS$ & -4.2513 & 4.2513 \\ 
$DOVTT\times DOVTT$ & -1.9552 & 1.9552 \\ 
$DIVTT\times DIVTT$ & -5.7297 & 5.7297 \\ \hline
\end{tabular}%
\end{center}
\end{table}

Table 4 presents the refined parameter bounds derived from $\widehat{\Theta }%
\left( \tau \right) $ in this setup. The results of Table \ref{tab-04}
indicate that the size (measured in the volume of a ($k+p$) dimensional
cube) of $\widehat{\Theta }\left( \tau \right) $ is only about 0.99\% of
that of $\Theta $; thus, there is a considerable reduction in the parameter
search space by using the refined space $\widehat{\Theta }\left( \tau
\right) $ in place of the original space $\Theta $. 
\begin{table}[tbh]
\caption{Results Using Quadratic Expansion}
\label{tab-05}
\begin{center}
\emph{Covariate specification}: $k=1,p=9$ \\[0pt]
\vspace*{2ex} 
\begin{tabular}{|l|l|l|l|l|l|l|}
\hline\hline
$q$ & \multicolumn{2}{|l}{$1$} & \multicolumn{2}{|l}{$3$} & 
\multicolumn{2}{|l|}{$5$} \\ \hline
MIO formulation & (\ref{constrained MIO}) & (\ref{Florios-Skouras MIO}) & (%
\ref{constrained MIO}) & (\ref{Florios-Skouras MIO}) & (\ref{constrained MIO}%
) & (\ref{Florios-Skouras MIO}) \\ \hline
\multicolumn{7}{|l|}{focus covariates} \\ \hline
$DCOST$ & 1 & 1 & 1 & 1 & 1 & 1 \\ \hline
Intercept & 3.6374 & 3.2803 & 4.5191 & 3.1493 & 5.4587 & 2.9101 \\ \hline
\multicolumn{7}{|l|}{auxiliary covariates} \\ \hline
$CARS$ & 3.0404 & 2.4667 & 3.5782 & 2.4416 & 6.3592 & 2.1480 \\ \hline
$DOVTT$ & \multicolumn{1}{|l|}{0} & \multicolumn{1}{|l|}{0} & 0.8390 & 0.3416
& 0 & 0 \\ \hline
$DIVTT$ & \multicolumn{1}{|l|}{0} & \multicolumn{1}{|l|}{0} & 0 & 0 & 0 & 0
\\ \hline
$CARS\times DOVTT$ & \multicolumn{1}{|l|}{0} & \multicolumn{1}{|l|}{0} & 0 & 
0 & -1.1798 & -0.5332 \\ \hline
$DOVTT\times DIVTT$ & \multicolumn{1}{|l|}{0} & \multicolumn{1}{|l|}{0} & 0
& 0 & -3.7056 & -0.4177 \\ \hline
$CARS\times DIVTT$ & \multicolumn{1}{|l|}{0} & \multicolumn{1}{|l|}{0} & 
-0.2744 & 0.1644 & 0 & 0 \\ \hline
$CARS\times CARS$ & \multicolumn{1}{|l|}{0} & \multicolumn{1}{|l|}{0} & 0 & 0
& 0 & 0 \\ \hline
$DOVTT\times DOVTT$ & \multicolumn{1}{|l|}{0} & \multicolumn{1}{|l|}{0} & 0
& 0 & 1.3282 & 0.1835 \\ \hline
$DIVTT\times DIVTT$ & \multicolumn{1}{|l|}{0} & \multicolumn{1}{|l|}{0} & 0
& 0 & 2.7936 & 0.0744 \\ \hline
\multicolumn{7}{|l|}{in-sample performance} \\ \hline
maximized average score & 0.8979 & 0.8979 & 0.9086 & 0.9086 & 0.9145 & 0.9097
\\ \hline
\multicolumn{7}{|l|}{MIO solver output} \\ \hline
$MIO\_gap$ & 0.0428 & 0.0428 & 0.0428 & 0.0428 & 0.0428 & 0.0428 \\ \hline
CPU time (in seconds) & 55 & 259 & 65985 & 3931 & 566 & 778 \\ \hline
branch-and-bound nodes & 24547 & 71943 & 1521685 & 439394 & 208537 & 425053
\\ \hline
\end{tabular}%
\end{center}
\end{table}

We now present in Table \ref{tab-05} the estimation results for the setup
with the covariate specification (\ref{quadratic expansion specification}).
From Table \ref{tab-05}, we can see that the two MIO formulations (\ref%
{constrained MIO}) and (\ref{Florios-Skouras MIO}) yield the same set of
selected variables across all the three cases of $q$. The parameter
estimates computed from both formulations are also qualitatively similar in
general. The variable $CARS$ remains to be selected in all these cases and
its parameter estimate is also of the largest magnitude among all parameter
estimates of the quadratic expansion variables. Moreover, there is very
little loss in the goodness of fit from adopting only $CARS$ as the
auxiliary covariate.

We now remark on the computational performance. For the cases of $q\in
\{1,5\}$, formulation (\ref{constrained MIO}) clearly outperformed
formulation (\ref{Florios-Skouras MIO}) in both the CPU time and the number
of branch-and-bounds used in the computation; however, both MIO formulations
performed quite well for these two variable selection cases. By contrast,
for the case of $q=3$, both approaches incurred more computational cost. In
particular, it took around 18.3 hours and noticeably more branch-and-bound
nodes to solve the formulation (\ref{constrained MIO}) in the $q=3$
scenario. On the whole, these results suggest that the two MIO formulations (%
\ref{constrained MIO}) and (\ref{Florios-Skouras MIO}) are valuable
complements for implementing the PRESCIENCE procedures.

By construction, the maximized average score of the PRESCIENCE approach
increases with the specified value of $q$. In this empirical application,
our results presented so far indicate that the in-sample predictive
performance of the parsimonious predictive model using an intercept term and
only the two variables $\left( DCOST,CARS\right) $ seems comparable to that
of a more complex model using a richer set of covariates. We now investigate
this issue further via the method of cross validation (CV).

We conducted the 5-fold CV analysis to assess the out-of-sample predictive
performance of the PRESCIENCE methods for the selection of the variables
specified by (\ref{quadratic expansion specification}). As in Table \ref%
{tab-05}, we considered the cases of $q\in \{1,3,5\}$ for which we
implemented the corresponding PRESCIENCE procedures with a non-zero
tolerance level. Because the training sample in each CV fold contains around
80\% of the original observations, we set the MIO tolerance level for early
termination to be about 4.9\% by the rule (\ref{early stopping rule}).

\begin{table}[tbh]
\caption{Summary of the 5-fold Cross Validation Results}
\label{tab-06}
\begin{center}
\emph{Covariate specification}: $k=1,p=9$ \\[0pt]
\vspace*{2ex} 
\begin{tabular}{|l|l|l|l|l|l|l|}
\hline\hline
$q$ & \multicolumn{2}{|l}{$1$} & \multicolumn{2}{|l}{$3$} & 
\multicolumn{2}{|l|}{$5$} \\ \hline
MIO formulation & (\ref{constrained MIO}) & (\ref{Florios-Skouras MIO}) & (%
\ref{constrained MIO}) & (\ref{Florios-Skouras MIO}) & (\ref{constrained MIO}%
) & (\ref{Florios-Skouras MIO}) \\ \hline
\multicolumn{7}{|l|}{average in-sample performance} \\ \hline
$MIO\_gap$ & 0.0454 & 0.0475 & 0.0475 & 0.0475 & 0.0475 & 0.0475 \\ \hline
maximized objective value & 0.8993 & 0.8993 & 0.9100 & 0.9112 & 0.9127 & 
0.9148 \\ \hline
\multicolumn{7}{|l|}{average out-of-sample performance} \\ \hline
proportion of correct predictions & 0.9026 & 0.8979 & 0.8884 & 0.8812 & 
0.8932 & 0.8884 \\ \hline
\end{tabular}%
\end{center}
\end{table}

Table \ref{tab-06} summarizes the 5-fold CV results, which are based on the
averages over the performance results computed in each CV fold. From Table %
\ref{tab-06}, we can see that, for both MIO formulations, the in-sample
maximized objective values were already very similar across the three cases
of $q$ though they did strictly increase with $q$. Moreover, irrespective of
the MIO formulations, the parsimonious case of $q=1$ had the best
out-of-sample performance.

\begin{figure}[h!tb]
\caption{Auxiliary Covariate Parameter Estimates in each CV-fold}
\label{fig-01}
\begin{center}
\emph{Covariate specification}: $k=1,p=9$ \\[0pt]
\vspace*{2ex} \makebox{
\includegraphics[scale=.61]{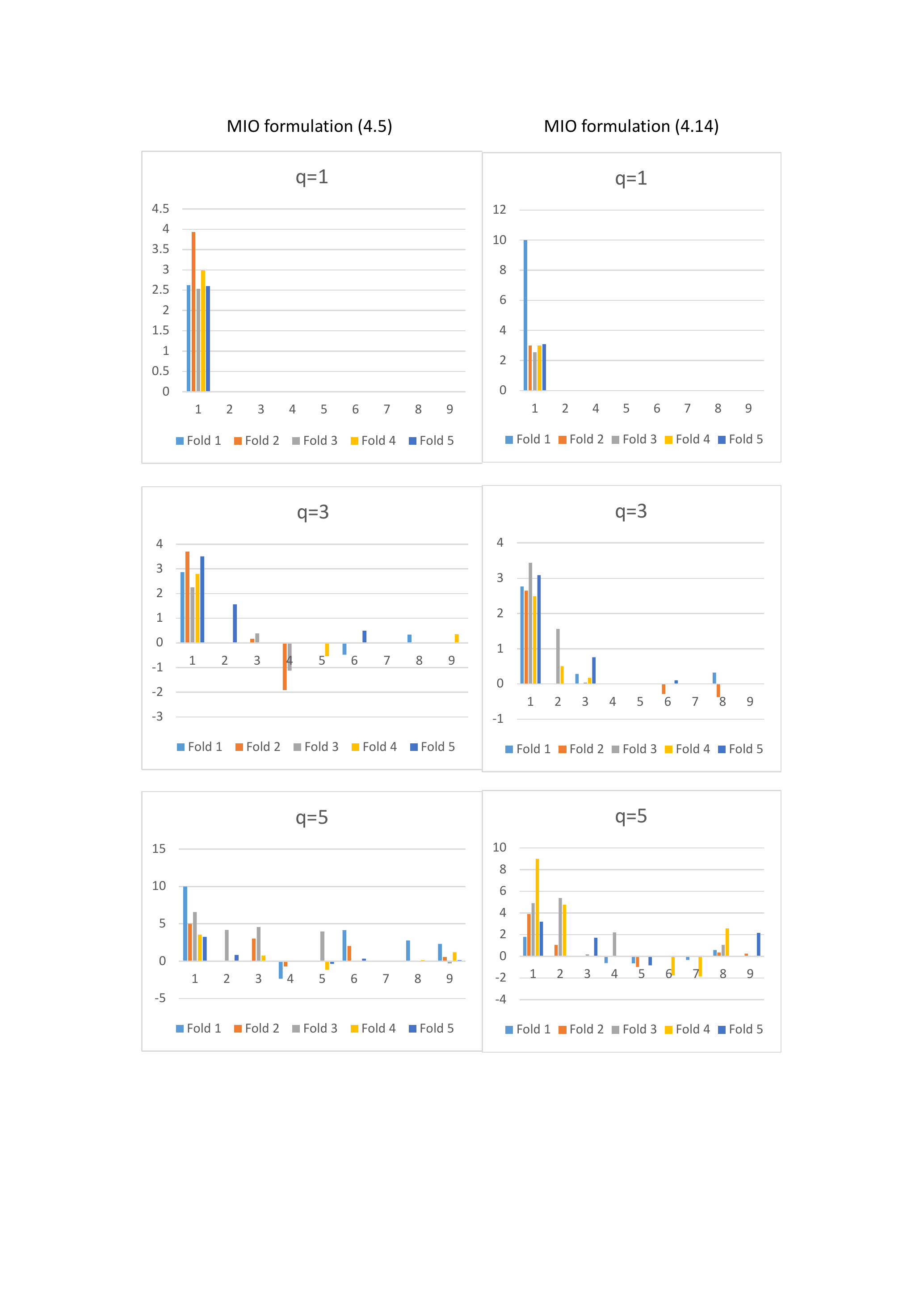}
}
\end{center}
\par
\parbox{6in}{
Notes: For each panel of Figure \ref{fig-01}, the values in the horizontal axis correspond to
the indices of the 9 components of the auxiliary covariate vector defined by
(\ref{quadratic expansion specification}), whereas the vertical axis
displays the values of the parameter estimates.
The left and right panels are based on MIO formulations \eqref{constrained MIO}
and \eqref{Florios-Skouras MIO}, respectively. 
}
\end{figure}

We now inspect the variables selected in each of the 5 CV folds. Figure \ref%
{fig-01} summarizes the parameter estimates computed in each CV fold for the
9 auxiliary covariates specified by (\ref{quadratic expansion specification}%
). From this figure, we also note that $CARS$ was selected across all $q$
cases in all CV folds. Its parameter estimate was also of a relatively large
magnitude when compared to those of other selected variables. These
cross-validation results further strengthen the finding that $CARS$ may be
the most important predictive variable for the work-trip mode choice.

\section{Concluding Remarks\label{conclusion}}

In this paper, we consider the variable selection problem for predicting
binary outcomes. We study the best subset selection procedure by which the
covariates are chosen by maximizing the maximum score objective function
subject to a constraint on the maximal number of selected covariates. We
establish non-asymptotic upper and lower risk bounds for the resulting best
subset maximum score binary prediction rule when the dimension of potential
covariates is possibly much larger than the sample size. The derived upper
and lower risk bounds are minimax rate-optimal when the maximal number of
selected variables is fixed and does not increase with the sample size. For
implementation, we show that the variable selection problem of this paper
can be equivalently formulated as a mixed integer optimization problem,
which enables computation of the exact or an approximate solution with a
definite approximation error bound.

The present paper takes the maximum score approach for the binary prediction
problem. There is a large body of the literature that studies maximum score
estimation in various other aspects since the seminal work by %
\citet{Manski:1975, Manski:1985}. In the context of semiparametric binary
response models, advances of the maximum score approach have been made in
terms of point identification \citep{Manski:1988}, partial identification 
\citep{Manski2002, komarova2013, blevins2015,
chen2015}, asymptotic distribution \citep{kim1990,Seo2016}, panel data %
\citep{Manski87, CMvS:95, abrevaya2000}, time series 
\citep{Moon2004,
Guerre2006, deJong2011}, dynamic network formation \citep{Graham:2016},
nonparametrically generated regressors \citep{Chen-et-al:14}, and so on. The
numerical approach employed in this paper can be adapted to these contexts.

Efficient computation is particularly demanding when it is necessary to
obtain maximum score estimates repeatedly many times. This difficulty
naturally arises, for example, in the context of resampling %
\citep{delgado2001, abrevaya2005, Lee:Pun:06, Patra2015} and change-point
problems \citep{Lee2008, Lee-et-al:11}. It would be an interesting future
research topic to investigate numerical performance of our method (e.g. the
warm-start procedure in Section \ref{sec:para-space}) in these
computation-intensive problems.

The maximum score approach has produced many offsprings: smoothed maximum
score estimation \citep{Horowitz:92,  horowitz2002}, multinomial choice
estimation \citep{Matzkin:93, Fox2007}, integrated maximum score estimation %
\citep{Chen2010}, Bayesian method \citep{benoit2012}, alternative estimation
based on local nonlinear least squares 
\citep{blevins-khan-2013-dfbr,
blevins-khan-2013, Khan2013}, estimation using local polynomial smoothing %
\citep{Chen:Zhang:2015}, and non-Bayesian Laplace type estimation %
\citep{JPW15, JPW14} among many others. Some of these alternative estimation
methods are equipped with algorithms that are easier to compute than the
maximum score estimation. It is an interesting open question how to
accommodate a variable selection problem in these alternative methods. One
starting point can be the work by \citet{Jiang:10} who considered empirical
risk minimization under the $\ell _{0}$ constraint with Gibbs posterior as
an alternative to maximum score prediction. It might be also interesting to
consider penalized estimation with the $\ell _{0}$, $\ell _{1}$ and/or $\ell
_{2}$ penalty. The maximum score approach is also closely related to the
maximum utility estimation framework 
\citep{LieliNieto-Barthaburu, LieliWhite, Elliott2013,
LieliSpringborn} for binary decision under model uncertainty. It would be
interesting to generalize our results to this framework. These are also
topics for future research.


\appendix%


\section*{Appendices}

Appendix \ref{sec:appendix} presents proofs of all
theoretical results, 
Appendix  \ref{bb-method} gives a brief description of the 
 branch-and-bound method for solving the mixed integer optimization
problem,
Appendix \ref{sim:hetero:design} reports
the simulation results under the
setup of DGP(ii) of Section \ref{simulation study}, 
Appendix \ref{simulations of the cold and warm start methods} reports an additional 
simulation study on the performance of adopting the warm-start strategy of
Section \ref{sec:para-space} in solving the MIO formulations (\ref%
{constrained MIO}) and (\ref{Florios-Skouras MIO}),
and 
Appendix \ref{emp:linear:spec} reports additional results for the empirical example using the
linear specification of covariates.

\section{Proofs of theoretical results}

\label{sec:appendix}

\subsection{Proofs of Propositions \protect\ref{observational equivalence}, 
\protect\ref{margin condition for the binary response model}, and \protect
\ref{primitive assumptions on norm equivalence}}

\begin{proof}[Proof of Proposition \protect\ref{observational equivalence}]
By (\ref{binary response model}) and (\ref{median independence}), we have
that%
\begin{equation}
w^{\prime }\theta ^{\ast }\geq 0\Longleftrightarrow \eta (w)\geq
1/2\Longleftrightarrow b^{\ast }(w)=1.  \label{r1}
\end{equation}%
Since the functional $\widetilde{S}(b^{\ast })$ is maximized at $b=b^{\ast }$
and, by Condition \ref{margin condition for upper bound}, this maximum is
unique, we have that%
\begin{equation}
b^{\ast }\in \mathcal{B}_{q}\Longleftrightarrow b^{\ast }=b_{\theta _{0}}.
\label{r2}
\end{equation}%
By (\ref{predictor}), we have that%
\begin{equation}
b_{\theta _{0}}(w)=1\Longleftrightarrow w^{\prime }\theta _{0}\geq 0.
\label{r3}
\end{equation}%
Thus it follows from (\ref{r1}), (\ref{r2}) and (\ref{r3}) that $b^{\ast
}\in \mathcal{B}_{q}$ if and only if $W^{\prime }\theta _{0}$ and $W^{\prime
}\theta ^{\ast }$ have the same sign with probability 1.
\end{proof}

\begin{proof}[Proof of Proposition \protect\ref{margin condition for the
binary response model}]
Let $G_{w}(t)\equiv P\left( \xi \leq t|W=w\right) $. By (\ref{binary
response model}) and (\ref{median independence}), we have that $\eta
(w)=G_{w}(w^{\prime }\theta ^{\ast })$ and $G_{w}(0)=1/2$. By condition (ii)
of Proposition \ref{margin condition for the binary response model}, it
hence follows that, if $w^{\prime }\theta ^{\ast }\geq \kappa _{1}$, then 
\begin{equation*}
\eta (w)\geq G_{w}(\kappa _{1})\geq 1/2+\kappa _{1}\kappa _{2}.
\end{equation*}

On the other hand, by similar arguments, we have that, if $w^{\prime }\theta
^{\ast }\leq -\kappa _{1}$, then%
\begin{equation*}
\eta (w)\leq G_{w}(-\kappa _{1})\leq 1/2-\kappa _{1}\kappa _{2}.
\end{equation*}%
Combing these results and using condition (i) of Proposition \ref{margin
condition for the binary response model}, we thus deduce that condition (\ref%
{margin condition}) holds with $h=2\kappa _{1}\kappa _{2}$.
\end{proof}

\begin{proof}[Proof of Proposition \protect\ref{primitive assumptions on
norm equivalence}]
Consider any two vectors $\theta =\left( \alpha ,\beta ,\gamma \right) ,$ $%
\widetilde{\theta }=(\widetilde{\alpha },\widetilde{\beta },\widetilde{%
\gamma })\in \{-1,1\}\times \Theta _{q}$ such that $\alpha =\widetilde{%
\alpha }$ and $\beta =\widetilde{\beta }$. We now prove this proposition for
the case $\alpha =\widetilde{\alpha }=1$. The other case $\alpha =\widetilde{%
\alpha }=-1$ can be proved using identical arguments and hence is omitted.

Assume $\alpha =\widetilde{\alpha }=1.$ Note that 
\begin{eqnarray*}
&&\left\Vert b_{\theta }-b_{\widetilde{\theta }}\right\Vert _{1} \\
&=&P\left( -\widetilde{X}\beta -Z^{\prime }\gamma \leq X_{0}<-\widetilde{X}%
\widetilde{\beta }-Z^{\prime }\widetilde{\gamma }\right) +P\left( -%
\widetilde{X}\beta -Z^{\prime }\gamma >X_{0}\geq -\widetilde{X}\widetilde{%
\beta }-Z^{\prime }\widetilde{\gamma }\right) \\
&=&E\left[ P\left( -\widetilde{X}\beta -Z^{\prime }\gamma \leq X_{0}<-%
\widetilde{X}\widetilde{\beta }-Z^{\prime }\widetilde{\gamma }|\widetilde{X}%
,Z\right) 1\left\{ Z^{\prime }\gamma \geq Z^{\prime }\widetilde{\gamma }%
\right\} \right] \\
&&+E\left[ P\left( -\widetilde{X}\beta -Z^{\prime }\gamma >X_{0}\geq -%
\widetilde{X}\widetilde{\beta }-Z^{\prime }\widetilde{\gamma }|\widetilde{X}%
,Z\right) 1\left\{ Z^{\prime }\gamma \leq Z^{\prime }\widetilde{\gamma }%
\right\} \right] .
\end{eqnarray*}%
By Condition (a) of Proposition \ref{primitive assumptions on norm
equivalence}, we hence have that 
\begin{equation}
E\left[ \left\vert Z^{\prime }\left( \gamma -\widetilde{\gamma }\right)
\right\vert \right] L_{1}^{-1}\leq \left\Vert b_{\theta }-b_{\widetilde{%
\theta }}\right\Vert _{1}\leq E\left[ \left\vert Z^{\prime }\left( \gamma -%
\widetilde{\gamma }\right) \right\vert \right] L_{1}.  \label{b1}
\end{equation}%
Let $J\equiv \{j\in \left\{ 1,...,p\right\} :\gamma _{j}\neq \widetilde{%
\gamma }_{j}\}$. Since $\left\Vert \gamma \right\Vert _{0}\leq q$ and $%
\left\Vert \widetilde{\gamma }\right\Vert _{0}\leq q$, we have that $%
\left\vert J\right\vert \leq 2q$. Therefore, $Z^{\prime }\left( \gamma -%
\widetilde{\gamma }\right) =Z_{J}^{\prime }\delta _{J}$ where $Z_{J}$
denotes the subvector of $Z\equiv (Z^{(1)},\ldots ,Z^{(p)})^{\prime }$
formed by keeping only those elements $Z^{(j)}$ with $j\in J$ and $\delta
_{J}$ denotes the subvector of $\gamma -\widetilde{\gamma }$ formed by
keeping only those elements $(\gamma _{j}-\widetilde{\gamma }_{j})$ with $%
j\in J$.

By Condition (b) and Cauchy-Schwarz inequality, we have that with
probability 1,%
\begin{equation}
\left\vert Z_{J}^{\prime }\delta _{J}\right\vert \leq L_{2}\left\Vert \delta
_{J}\right\Vert _{E}  \label{b2}
\end{equation}%
and hence%
\begin{equation}
\frac{\delta _{J}^{\prime }Z_{J}Z_{J}^{\prime }\delta _{J}}{%
L_{2}^{2}\left\Vert \delta _{J}\right\Vert _{E}^{2}}\leq \frac{\left\vert
Z_{J}^{\prime }\delta _{J}\right\vert }{L_{2}\left\Vert \delta
_{J}\right\Vert _{E}}\leq 1.  \label{b3}
\end{equation}%
Using (\ref{b3}) and the assumption that the smallest eigenvalue of $E\left(
Z_{J}Z_{J}^{\prime }\right) $ is bounded below by $L_{3}$, we thus have that 
\begin{equation}
E\left[ \left\vert Z_{J}^{\prime }\delta _{J}\right\vert \right] \geq
L_{2}^{-1}L_{3}\left\Vert \delta _{J}\right\Vert _{E}.  \label{b4}
\end{equation}%
Noting that $\left\Vert \delta _{J}\right\Vert _{E}=\left\Vert \theta -%
\widetilde{\theta }\right\Vert _{E}$ and combining (\ref{b1}), (\ref{b2})
and (\ref{b4}), we conclude that condition (\ref{condition on the
distribution of W}) holds with $c_{u}=L_{1}L_{2}$ and $%
c_{l}=(L_{1}L_{2})^{-1}L_{3}$.
\end{proof}

\subsection{Proof of Theorem \protect\ref{maximal inequality}}

Recall the notation $\theta=(\alpha,\beta,\gamma)$. Let 
\begin{equation*}
G_{n}(\theta)\equiv\sqrt{n}\left[ S_{n}(\theta)-S(\theta)\right] .
\end{equation*}
We first present the following lemma, which will be used to prove Theorem %
\ref{maximal inequality}.

\begin{lemma}
\label{tail probability bound}For $t>0$, there is a universal constant $D$
such that
\end{lemma}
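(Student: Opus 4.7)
Given that the lemma is used to derive Theorem 1, the target bound is plausibly of the form
\[
P\!\left(\sup_{\theta \in \{-1,1\}\times\Theta_{q}} |G_{n}(\theta)| > t\right)
\leq D \binom{p}{q}(n+1)^{s+1}\exp\!\left(-t^{2}/D\right),
\]
which, after setting $t=\sqrt{M_{\sigma}r_{n}}$, yields the exponential tail in Theorem \ref{maximal inequality} since $U_{n} \le 2 n^{-1/2}\sup_{\theta}|G_{n}(\theta)|$ by the defining inequality of $(\widehat\alpha,\widehat\beta,\widehat\gamma)$ (with $\varepsilon = 0$) and the definition of $S_{q}^{\ast}$. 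The plan is to reduce the supremum over the $\ell_{0}$-constrained parameter space to a union of supremums over low-dimensional VC classes, and then apply a standard empirical-process concentration inequality to each piece.

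First, I would decompose
\[
\mathcal{B}_{q} \;=\; \bigcup_{J\subset\{1,\ldots,p\},\,|J|\le q}\,\mathcal{B}_{J},
\]
where $\mathcal{B}_{J}$ is the sub-class of prediction rules $1\{\alpha x_{0}+\widetilde{x}'\beta + z_{J}'\gamma_{J}\ge 0\}$ indexed by $(\alpha,\beta,\gamma_{J})\in\{-1,1\}\times\mathbb{R}^{k+|J|}$, with the remaining $\gamma$-coordinates set to zero. For each fixed $J$, $\mathcal{B}_{J}$ is (a subset of) the class of indicator functions of half-spaces in $\mathbb{R}^{k+|J|+1}$, which has VC-dimension at most $k+|J|+1 \le s+1$. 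The number of such index sets $J$ is at most $\binom{p}{q}$, and incorporating the binary choice of $\alpha$ only doubles this.

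Next, on each sub-class $\mathcal{B}_{J}$ I would invoke a standard VC-type uniform deviation bound (e.g.\ the Vapnik--Chervonenkis inequality, or a bounded-differences / Talagrand inequality combined with Dudley's entropy integral), giving a tail bound of the form
\[
P\!\left(\sup_{b\in\mathcal{B}_{J}} |G_{n}(b)| > t\right) \le D_{1}(n+1)^{s+1}\exp(-t^{2}/D_{2}),
\]
valid uniformly in $J$ because the dimension bound $s+1$ does not depend on $J$. A union bound over the $2\binom{p}{q}$ sub-classes then yields the announced inequality, with the combinatorial factor $\binom{p}{q}$ absorbed into the $\exp(C r_{n})$ term via $\ln\binom{p}{q}\le q\ln p \le r_{n}$ and $(n+1)^{s+1}$ similarly controlled by $\exp((s+1)\ln(n+1))\le \exp(C r_{n})$.

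The main obstacle is tracking constants carefully enough that the final exponent in Theorem \ref{maximal inequality} is $-\sigma r_{n}$ for an arbitrary $\sigma>0$ (which forces $M_{\sigma}$ to depend on $\sigma$). Concretely, one must choose $M_{\sigma}$ large enough that the ``VC complexity cost'' $C r_{n}$ (from $\ln\binom{p}{q}+(s+1)\ln(n+1)$) is dominated by a fraction of $t^{2}/D = M_{\sigma}r_{n}/D$, leaving a remainder $-\sigma r_{n}$ in the exponent. The side condition \eqref{condition on r_n} is exactly what is needed to ensure that the polynomial prefactor $D_{1}(n+1)^{s+1}$, after being folded into the exponent, does not overwhelm the $\exp(-(M_{\sigma}/D-\text{const})r_{n})$ term; this is why the condition involves $\ln(M_{\sigma}r_{n})$ on the left. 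Everything else is bookkeeping.
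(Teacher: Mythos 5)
Your proposal follows the same skeleton as the paper's proof: split the $\ell_0$-constrained class into the $\binom{p}{q}$ coordinate subsets (doubled for $\alpha=\pm 1$), treat each piece as a low-dimensional VC class, and finish with a union bound --- this is exactly the paper's decomposition (its sub-classes $\Gamma_m$ are your $\mathcal{B}_J$), and your reduction $U_n\leq 2n^{-1/2}\sup_\theta|G_n(\theta)|$ is also the paper's. The difference lies in the per-class concentration inequality. The paper bounds uniform covering numbers via VC lemmas (Kosorok, Lemmas 9.6 and 9.9, Theorem 9.3) and then applies Talagrand's (1994) sharp deviation bound, which yields an $n$-free prefactor $\left(Dt/\sqrt{4s+4}\right)^{4s+4}$ together with the sharp exponent $e^{-2t^2}$; you instead invoke the classical Vapnik--Chervonenkis/Sauer route, giving a prefactor $(n+1)^{s+1}$ and an unspecified constant in the exponent. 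Both versions deliver the rate in Theorem \ref{maximal inequality}: with your bound the complexity cost $q\ln p+(s+1)\ln(n+1)$ is absorbed into a constant multiple of $r_n$ and a larger $M_\sigma$ finishes the job (indeed with a simpler side condition), whereas the paper's sharper, $n$-free prefactor is precisely what produces the specific condition \eqref{condition on r_n} involving $(4s+4)\ln(M_\sigma r_n)$ and the clean constant $2$ in the exponential. One small point to tighten: $G_n$ is indexed by the loss functions $(y,w)\mapsto 1\{y=b_\theta(w)\}$ rather than by the half-space indicators themselves, so you need the (standard) observation that this transformation preserves the VC property --- the paper handles it through Kosorok's lemmas --- but that is bookkeeping, not a gap.
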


\begin{equation*}
P\left( \sup\limits_{\theta\in\{-1,1\}\times\Theta_{q}}\left\vert
G_{n}(\theta)\right\vert >t\right) \leq2\binom{p}{q}\left( \frac{Dt}{\sqrt{%
4s+4}}\right) ^{4s+4}e^{-2t^{2}}.
\end{equation*}

\begin{proof}
Let $m$ be a subset of the index set $\left\{ 1,...,p\right\} $ such that $m$
contains only $q$ elements. Let $\mathcal{M}$ be the collection of all such
subsets. Note that $\left\vert \mathcal{M}\right\vert =\binom{p}{q}$. For $%
m\in\mathcal{M}$, let%
\begin{equation*}
\Gamma_{m}\equiv\{\left( \beta,\gamma\right) \in\Theta:\gamma_{j}=0\text{
for }j\notin m\}.
\end{equation*}

For any $t>0$, we have that%
\begin{align}
P\left( \sup\limits_{\theta\in\{-1,1\}\times\Theta_{q}}\left\vert
G_{n}(\theta)\right\vert >t\right) & \leq\sum_{m\in\mathcal{M}}P\left(
\sup\limits_{\left( \beta,\gamma\right) \in\Gamma_{m}}\left\vert
G_{n}(1,\beta,\gamma)\right\vert >t\right)  \notag \\
& +\sum_{m\in\mathcal{M}}P\left( \sup\limits_{\left( \beta,\gamma\right)
\in\Gamma_{m}}\left\vert G_{n}(-1,\beta,\gamma)\right\vert >t\right) .
\label{ineq 1}
\end{align}
To complete the proof, it remains to derive the bounds on the tail
probability terms on the right hand side of the inequality above.

Consider the function $f_{\theta }:\{0,1\}\times \mathcal{W\mapsto }\{0,1\}$
defined by%
\begin{align*}
f_{\theta }(y,w)& \equiv 1\left\{ y=b_{\theta }(w)\right\} \\
& =1-y+\left( 2y-1\right) b_{\theta }.
\end{align*}%
For $m\in \mathcal{M}$, let 
\begin{align*}
\digamma _{m}^{+}& \equiv \{f_{\theta }:\theta \in \{1\}\times \Gamma _{m}\},
\\
\digamma _{m}^{-}& \equiv \{f_{\theta }:\theta \in \{-1\}\times \Gamma
_{m}\}.
\end{align*}%
For each $m\in \mathcal{M}$, by Lemmas 9.6 and 9.9 of \citet{Kosorok2008},
the class of functions $\digamma _{m}^{+}$ \ and $\digamma _{m}^{-}$ are
both $VC$ classes of functions with $VC$ indices $V(\digamma _{m}^{+})$ and $%
V(\digamma _{m}^{-})$ satisfying that 
\begin{equation*}
V(\digamma _{m}^{+})=V(\digamma _{m}^{-})\leq 2s+3.
\end{equation*}%
For two measurable functions $f$ and $g$ and a given probability measure $Q$%
, define the (semi-) metric 
\begin{equation*}
d_{Q}(f,g)\equiv \sqrt{\int \left( f-g\right) ^{2}dQ}.
\end{equation*}%
By Theorem 9.3 of \citet{Kosorok2008}, we have that, for some universal
constant $K>1$ and $0<\epsilon <1$,%
\begin{align*}
& \sup\nolimits_{Q}N\left( \epsilon ,\digamma _{m}^{+},d_{Q}\right) \vee
\sup\nolimits_{Q}N\left( \epsilon ,\digamma _{m}^{-},d_{Q}\right) \\
& \leq K\left( 2s+3\right) \left( 16e\right) ^{2s+3}\left( \epsilon \right)
^{-4(s+1)} \\
& \leq \left( \frac{\Lambda (s)}{\epsilon }\right) ^{4s+4},
\end{align*}%
where%
\begin{equation*}
\Lambda (s)\equiv \left[ K\left( 2s+3\right) \left( 16e\right) ^{2s+3}\right]
^{\frac{1}{4s+4}}
\end{equation*}%
and, for a given class of functions $\digamma $, $N\left( \epsilon ,\digamma
,d_{Q}\right) $ denotes the minimal number of open balls (defined under the
metric $d_{Q}$) of radius $\epsilon $ required to cover $\digamma $.

Observe that 
\begin{equation}
\Lambda(s)\leq16Ke\left( 2s+3\right) ^{\frac{1}{4s+4}}\leq 32Ke,
\label{uniform bound on Gamma(s)}
\end{equation}
where the second inequality follows from the fact that $i\leq2^{i}$ for all
integer $i$.

Using (\ref{uniform bound on Gamma(s)}), we can apply Theorem 1.3 of %
\citet{Talagrand1994} to deduce that, for $t>0$, there is a universal
constant $D$ such that%
\begin{align}
& P\left( \sup\limits_{\left( \beta ,\gamma \right) \in \Gamma
_{m}}\left\vert G_{n}(1,\beta ,\gamma )\right\vert >t\right) \vee P\left(
\sup\limits_{\left( \beta ,\gamma \right) \in \Gamma _{m}}\left\vert
G_{n}(-1,\beta ,\gamma )\right\vert >t\right)  \notag \\
& \leq \left( \frac{Dt}{\sqrt{4s+4}}\right) ^{4s+4}e^{-2t^{2}}.
\label{Talagrand result}
\end{align}%
Lemma \ref{tail probability bound} hence follows by combining inequalities (%
\ref{ineq 1}) and (\ref{Talagrand result}).
\end{proof}

We now prove Theorem \ref{maximal inequality}.

\begin{proof}[Proof of Theorem \protect\ref{maximal inequality}]
By Lemma \ref{tail probability bound} and using the fact that $\binom{p}{q}%
\leq p^{q},$ we have that, for some universal constant $D$ and for $t>0$,%
\begin{equation}
P\left( \sup\limits_{\theta\in\{-1,1\}\times\Theta_{q}}\left\vert
G_{n}(\theta)\right\vert >t\right) \leq2p^{q}\left( \frac{Dt}{\sqrt{4s+4}}%
\right) ^{4s+4}e^{-2t^{2}}.  \label{ineq 2}
\end{equation}
For $t\geq D$ and $s\geq1$, the right hand side term of (\ref{ineq 2}) can
be further bounded above by%
\begin{equation}
p^{s}t^{8s+8}2^{-6s-5}e^{-2t^{2}}=e^{\lambda(s,p,t)},  \label{lamda}
\end{equation}
where%
\begin{equation*}
\lambda(s,p,t)\equiv-2t^{2}+\left( 8s+8\right) \ln t+s\ln p-\left(
6s+5\right) \ln2.
\end{equation*}
For $\sigma>0$, let 
\begin{equation}
t=\sqrt{M_{\sigma}r_{n}}  \label{t}
\end{equation}
where%
\begin{equation}
M_{\sigma}\equiv\left( 1+\sigma/2\right) \vee D^{2}.  \label{M_eta bound}
\end{equation}
Note that $t\geq D$ by (\ref{t}) and (\ref{M_eta bound}). Thus with the $t$
value specified by (\ref{t}), we have that 
\begin{equation}
\lambda(s,p,t)\leq\left( -2M_{\sigma}+1\right) r_{n}+\left( 4s+4\right)
\ln\left( M_{\sigma}r_{n}\right) -\left( 6s+5\right) \ln2\leq-\sigma r_{n}
\label{lamda bound}
\end{equation}
where the second inequality follows from (\ref{M_eta bound}) and (\ref%
{condition on r_n}).

It is straightforward to see that%
\begin{align*}
U_{n}& =\sup\limits_{\theta \in \{-1,1\}\times \Theta _{q}}\text{ }S(\theta
)-S(\widehat{\theta }) \\
& \leq \sup\limits_{\theta \in \{-1,1\}\times \Theta _{q}}\left\vert
S_{n}(\theta )-S(\theta )\right\vert +\sup\limits_{\theta \in \{-1,1\}\times
\Theta _{q}}S_{n}(\theta )-S(\widehat{\theta }) \\
& \leq \sup\limits_{\theta \in \{-1,1\}\times \Theta _{q}}\left\vert
S_{n}(\theta )-S(\theta )\right\vert +S_{n}(\widehat{\theta })-S(\widehat{%
\theta }) \\
& \leq 2\sup\limits_{\theta \in \{-1,1\}\times \Theta _{q}}\left\vert
S_{n}(\theta )-S(\theta )\right\vert .
\end{align*}%
Hence, we have that%
\begin{equation}
P\left( U_{n}>2\sqrt{\frac{M_{\sigma }r_{n}}{n}}\right) \leq P\left(
\sup\limits_{\theta \in \{-1,1\}\times \Theta _{q}}\left\vert G_{n}(\theta
)\right\vert >\sqrt{M_{\sigma }r_{n}}\right) .  \label{uniform bound}
\end{equation}%
Therefore, Theorem \ref{maximal inequality} follows by putting together the
results of (\ref{ineq 2}), (\ref{lamda}) and (\ref{lamda bound}) and then
concluding that the right hand side term of (\ref{uniform bound}) is bounded
above by $e^{-\sigma r_{n}}$ under condition (\ref{condition on r_n}).
\end{proof}

\subsection{Proof of Theorem \protect\ref{risk upper bound}}

We first introduce some notation which will be used in the proof Theorem \ref%
{risk upper bound}. Let $\mathcal{A}$ be a collection of subsets of $%
\mathcal{W}$. For any subset $S\subset\mathcal{W}$, let $T_{\mathcal{A}}(S)$
denote the trace of $\mathcal{A}$ on $S$ defined by%
\begin{equation*}
T_{\mathcal{A}}(S)=\left\{ A\cap S:A\in\mathcal{A}\right\} .
\end{equation*}
If $\mathcal{A=\cup}_{j\in J}\mathcal{A}_{j}$, then we have 
\begin{equation}
T_{\mathcal{A}}(S)\subset\mathcal{\cup}_{j\in J}T_{\mathcal{A}_{j}}(S)\text{
for }S\subset\mathcal{W}\text{.}  \label{trace}
\end{equation}

We now prove Theorem \ref{risk upper bound}.

\begin{proof}[Proof of Theorem \protect\ref{risk upper bound}]
By (\ref{predictor}), (\ref{S(a,b,c)}), (\ref{theoretical prediction rule})
and (\ref{S_tilda}), we have that%
\begin{equation}
S(\widehat{\theta })=\widetilde{S}\left( b_{\widehat{\theta }}\right) \text{
and }S_{q}^{\ast }=\sup\nolimits_{b\in \mathcal{B}_{q}}\widetilde{S}\left(
b\right) .  \label{S and S_tilda}
\end{equation}%
Theorem \ref{risk upper bound} is an application of Theorem 2 of %
\citet{massart2006} to the binary prediction problem. Massart and N\'{e}d%
\'{e}lec\thinspace (2006, Section 2.4) showed how to apply their Theorem 2
to derive the risk upper bound for the empirical risk minimizer. Using their
derived results (Massart and N\'{e}d\'{e}lec\thinspace (2006, p.\thinspace
2340)) in their Theorem 2 for our setup, we conclude that, under Condition %
\ref{margin condition for upper bound}, there are universal constants $K$
and $K^{\prime }$ such that%
\begin{equation}
\begin{split}
& E\left[ \widetilde{S}\left( b^{\ast }\right) -\widetilde{S}\left( b_{%
\widehat{\theta }}\right) \right] \\
& \leq 2\left[ \widetilde{S}\left( b^{\ast }\right) -\sup\nolimits_{b\in 
\mathcal{B}_{q}}\widetilde{S}\left( b\right) \right] \\
& +K^{\prime }\left[ \left( \frac{K^{2}\left( 1\vee E(H_{\mathcal{A}%
})\right) }{nh}\right) ^{\vartheta /\left( 2\vartheta -1\right) }\wedge 
\sqrt{\frac{K^{2}\left( 1\vee E(H_{\mathcal{A}})\right) }{n}}\right] ,
\end{split}
\label{upper bound due to Massart and Nedelec}
\end{equation}%
where 
\begin{align*}
\mathcal{A}& \mathcal{\equiv }\{A_{\theta }:\theta \in \{-1,1\}\times \Theta
_{q}\}, \\
A_{\theta }& \equiv \{w\in \mathcal{W}:w^{\prime }\theta \geq 0\},
\end{align*}%
and $H_{\mathcal{A}}$ is the random combinatorial entropy of $\mathcal{A}$
defined by%
\begin{equation*}
H_{\mathcal{A}}=\ln \left( \left\vert T_{\mathcal{A}}(\left\{
W_{1},W_{2},...,W_{n}\right\} )\right\vert \right) ,
\end{equation*}%
where 
\begin{equation}
T_{\mathcal{A}}(\left\{ W_{1},W_{2},...,W_{n}\right\} )=\left\{ A\cap
\left\{ W_{1},W_{2},...,W_{n}\right\} :A\in \mathcal{A}\right\}
\label{random trace}
\end{equation}%
is the trace of $\mathcal{A}$ on the covariate data $\left\{
W_{1},W_{2},...,W_{n}\right\} $.

Using (\ref{U}) and (\ref{S and S_tilda}), it follows that 
\begin{eqnarray}
U_{n} &=&\sup\nolimits_{b\in \mathcal{B}_{q}}\widetilde{S}\left( b\right) -%
\widetilde{S}\left( b_{\widehat{\theta }}\right)  \notag \\
&=&\left[ \sup\nolimits_{b\in \mathcal{B}_{q}}\widetilde{S}\left( b\right) -%
\widetilde{S}\left( b^{\ast }\right) \right] +\left[ \widetilde{S}\left(
b^{\ast }\right) -\widetilde{S}\left( b_{\widehat{\theta }}\right) \right] .
\label{Un}
\end{eqnarray}%
To complete the proof, it thus remains to derive an upper bound on the term $%
E(H_{\mathcal{A}})$.

Let $m$ be a subset of the index set $\left\{ 1,...,p\right\} $ such that $m$
contains only $q$ elements. Let $\mathcal{M}$ be the collection of all such
subsets. Note that $\left\vert \mathcal{M}\right\vert =\binom{p}{q}$. For $%
m\in \mathcal{M}$, let 
\begin{align*}
\mathcal{A}_{m}^{+}& \mathcal{\equiv }\{A_{\theta }:\theta \in \{1\}\times
\Gamma _{m}\}, \\
\mathcal{A}_{m}^{-}& \mathcal{\equiv }\{A_{\theta }:\theta \in \{-1\}\times
\Gamma _{m}\}, \\
\Gamma _{m}& \equiv \{\left( \beta ,\gamma \right) \in \Theta :\gamma _{j}=0%
\text{ for }j\notin m\}\text{.}
\end{align*}%
It is immediate to see that%
\begin{equation}
\mathcal{A=\cup }_{m\in \mathcal{M}}\left( \mathcal{A}_{m}^{+}\cup \mathcal{A%
}_{m}^{-}\right) .  \label{A}
\end{equation}%
For each $m\in \mathcal{M}$, by Lemmas 9.6 and 9.9 of \citet{Kosorok2008},
the family of sets $\mathcal{A}_{m}^{+}$ \ and $\mathcal{A}_{m}^{-}$ are
both $VC$ classes of sets with $VC$ indices $V(\mathcal{A}_{m}^{+})$ and $V(%
\mathcal{A}_{m}^{-})$ satisfying that 
\begin{equation*}
V(\mathcal{A}_{m}^{+})=V(\mathcal{A}_{m}^{-})\leq s+2.
\end{equation*}%
Hence by Corollary 1.3 of \citet{Lugosi:2002}, we have that%
\begin{equation}
\left\vert T_{\mathcal{A}_{m}^{+}}(\left\{ W_{1},W_{2},...,W_{n}\right\}
)\right\vert \vee \left\vert T_{\mathcal{A}_{m}^{-}}(\left\{
W_{1},W_{2},...,W_{n}\right\} )\right\vert \leq (n+1)^{s+1}.
\label{shattering coefficient bound}
\end{equation}%
By (\ref{trace}), (\ref{A}) and (\ref{shattering coefficient bound}), we
thus have that%
\begin{align}
H_{\mathcal{A}}& \leq \ln 2+\ln \binom{p}{q}+\left( s+1\right) \ln \left(
n+1\right)  \notag \\
& \leq \ln 2+q\ln p+\left( s+1\right) \ln \left( n+1\right) .
\label{bound on random combinatorial entropy}
\end{align}%
Theorem \ref{risk upper bound} therefore follows by combining the results (%
\ref{upper bound due to Massart and Nedelec}), (\ref{Un}) and (\ref{bound on
random combinatorial entropy}).
\end{proof}

\subsection{Proof of Theorem \protect\ref{risk lower bound}}

\begin{proof}[Proof of Theorem \protect\ref{risk lower bound}]
Define%
\begin{equation*}
R_{n}(h,\mathcal{B}_{q})=\inf_{\widehat{b}\in \mathcal{B}_{q}}\sup_{F\in 
\mathcal{P}(h,\mathcal{B}_{q})}E_{F}\left[ \widetilde{S}\left( b^{\ast
}\right) -\widetilde{S}(\widehat{b})\right]
\end{equation*}%
where the infimum is taken over the set of all binary predictors in $%
\mathcal{B}_{q}$ that are constructed based on the data $\left(
Y_{i},W_{i}\right) _{i=1}^{n}$. By (\ref{theoretical prediction rule}), it
follows that $S_{q}^{\ast }=\widetilde{S}\left( b^{\ast }\right) $ under $%
F\in \mathcal{P}(h,\mathcal{B}_{q})$. Thus Theorem \ref{risk lower bound} is
proved once we show that 
\begin{equation}
R_{n}(h,\mathcal{B}_{q})\geq \frac{\phi qc_{l}\left( 1-\phi \right) \left(
1-h\right) }{32nhc_{u}}\ln \left( \frac{p-q}{q/2}\right) .
\label{risk lower bound result}
\end{equation}

For any indicator function $b\in \mathcal{B}_{q}$, let%
\begin{equation*}
\eta _{b}(w)\equiv \left[ 1+\left( 2b(w)-1\right) h\right] /2\text{ for }%
w\in \mathcal{W}.
\end{equation*}%
Let $F_{b}$ denote the joint distribution of $\left( Y,W\right) $ such that
under $F_{b}$, the distribution of $W$ satisfies condition (\ref{condition
on the distribution of W}), and $Y$ conditional on $W=w$ follows a Bernoulli
distribution with parameter $\eta _{b}(w)$ for every $w\in \mathcal{W}$. By
the same arguments as those in the proof of Theorem 6 of Massart and N\'{e}d%
\'{e}lec\thinspace (2006, p.\thinspace 2355), we can deduce that, for any
finite subset $\mathcal{C}$ of $\mathcal{B}_{q}$,%
\begin{equation}
\left\{ F_{b}:b\in \mathcal{C}\right\} \subset \mathcal{P}(h,\mathcal{B}_{q})
\label{subset of P(h,Bq)}
\end{equation}%
and 
\begin{equation}
R_{n}(h,\mathcal{B}_{q})\geq \frac{h}{2}\inf_{\widehat{b}\in \mathcal{C}%
}\sup_{b\in \mathcal{C}}E_{F_{b}}\left[ \left\Vert b-\widehat{b}\right\Vert
_{1}\right] .  \label{risk lower bound 1}
\end{equation}

Consider the set%
\begin{equation*}
\mathcal{H}\equiv \left\{ \gamma \in \left\{ -1,0,1\right\} ^{p}:\left\Vert
\gamma \right\Vert _{0}=q\right\} .
\end{equation*}%
By Lemma 4 of \citet{Raskutti2011}, we have that, for $p,q$ even and $q<2p/3$%
, there is a subset $\mathcal{A\subset H}$ with cardinality 
\begin{equation}
\left\vert \mathcal{A}\right\vert \geq \left( \frac{p-q}{q/2}\right) ^{q/2}
\label{cardinality of A}
\end{equation}%
such that%
\begin{equation}
\left\Vert \gamma -\gamma ^{\prime }\right\Vert _{0}\geq q/2\text{ for all }%
\gamma ,\gamma ^{\prime }\in \mathcal{A}\text{ and }\gamma \neq \gamma
^{\prime }.  \label{e1}
\end{equation}

Let%
\begin{equation}
\mathcal{D\equiv }\{1\}\times \{\underline{0}\}\times \epsilon _{n}\mathcal{A%
}  \label{set D}
\end{equation}%
where $\underline{0}=(0,...,0)$ denotes the $k$-dimensional vector of which
all elements take value $0$, and $\epsilon _{n}>0$ is a given sequence that
will be chosen later. Note that, for any $\theta ,\widetilde{\theta }\in 
\mathcal{D}$, 
\begin{equation}
\epsilon _{n}^{2}\left\Vert \theta -\widetilde{\theta }\right\Vert _{0}\leq
\left\Vert \theta -\widetilde{\theta }\right\Vert _{E}^{2}\leq 4\epsilon
_{n}^{2}\left\Vert \theta -\widetilde{\theta }\right\Vert _{0}.
\label{bound for theta minus theta_tilda}
\end{equation}

Now take 
\begin{equation}
\mathcal{C=}\left\{ b_{\theta }\in \mathcal{B}_{q}:\theta \in \mathcal{D}%
\right\} .  \label{set C}
\end{equation}%
We then have that%
\begin{align}
R_{n}(h,\mathcal{B}_{q})& \geq \frac{h}{2}\inf_{\widehat{\theta }\in 
\mathcal{D}}\sup_{\theta \in \mathcal{D}}E_{F_{b_{\theta }}}\left[
\left\Vert b_{\theta }-b_{\widehat{\theta }}\right\Vert _{1}\right]
\label{e} \\
& \geq \frac{hc_{l}}{2}\inf_{\widehat{\theta }\in \mathcal{D}}\sup_{\theta
\in \mathcal{D}}E_{F_{b_{\theta }}}\left[ \left\Vert \theta -\widehat{\theta 
}\right\Vert _{E}\right]  \label{e2} \\
& \geq \frac{hc_{l}\epsilon _{n}}{2}\inf_{\widehat{\theta }\in \mathcal{D}%
}\sup_{\theta \in \mathcal{D}}E_{F_{b_{\theta }}}\left[ \sqrt{\left\Vert
\theta -\widehat{\theta }\right\Vert _{0}}\right]  \label{e3} \\
& \geq \frac{hc_{l}\epsilon _{n}\sqrt{q}}{2\sqrt{2}}\inf_{\widehat{\theta }%
\in \mathcal{D}}\sup_{\theta \in \mathcal{D}}P_{F_{b_{\theta }}}\left( 
\widehat{\theta }\neq \theta \right)  \label{e4} \\
& \geq \frac{hc_{l}\epsilon _{n}\sqrt{q}}{2\sqrt{2}}\inf_{\widehat{\theta }%
\in \mathcal{D}}\left[ 1-\inf_{\theta \in \mathcal{D}}P_{F_{b_{\theta
}}}\left( \widehat{\theta }=\theta \right) \right] ,  \label{e5}
\end{align}%
where the infimum in (\ref{e}) is taken over the set of all estimators $%
\widehat{\theta }$ taking values in $\mathcal{D}$; (\ref{e2}) follows from (%
\ref{subset of P(h,Bq)}), (\ref{set C}) and (\ref{condition on the
distribution of W}); (\ref{e3}) follows from (\ref{bound for theta minus
theta_tilda}); (\ref{e4}) follows from (\ref{e1}).

By Lemma 8 of \citet{massart2006}, we have that, for a given point $%
\widetilde{\theta }\in \mathcal{D}$, 
\begin{equation}
\inf_{\theta \in \mathcal{D}}P_{F_{b_{\theta }}}\left( \widehat{\theta }%
=\theta \right) \leq \phi \vee \frac{\overline{\mathcal{K}}}{\ln \left(
\left\vert \mathcal{D}\right\vert \right) },  \label{e6}
\end{equation}%
where%
\begin{equation*}
\overline{\mathcal{K}}=\frac{n}{\left\vert \mathcal{D}\right\vert -1}%
\sum_{\theta \in \mathcal{D},\theta \neq \widetilde{\theta }}\mathcal{K}%
\left( F_{b_{\theta }},F_{b_{\widetilde{\theta }}}\right) ,
\end{equation*}%
and $\mathcal{K}\left( F_{b_{\theta }},F_{b_{\widetilde{\theta }}}\right) $
is the Kullback-Leibler information between $F_{b_{\theta }}$ and $F_{b_{%
\widetilde{\theta }}}$. For $\widetilde{\theta },\theta \in \mathcal{D}$ and 
$\theta \neq \widetilde{\theta }$, using Lemma 7 of \citet{massart2006}, we
have that, for $h<1$, 
\begin{align*}
\mathcal{K}\left( F_{b_{\theta }},F_{b_{\widetilde{\theta }}}\right) & =h\ln
\left( \frac{1+h}{1-h}\right) \left\Vert b_{\theta }-b_{\widetilde{\theta }%
}\right\Vert _{1} \\
& \leq \frac{2c_{u}h^{2}}{1-h}\left\Vert \theta -\widetilde{\theta }%
\right\Vert _{E} \\
& \leq \frac{4c_{u}h^{2}\epsilon _{n}}{1-h}\sqrt{\left\Vert \theta -%
\widetilde{\theta }\right\Vert _{0}} \\
& \leq \frac{4c_{u}h^{2}\epsilon _{n}\sqrt{2q}}{1-h},
\end{align*}%
where the last inequality follows since $\left\Vert \theta -\widetilde{%
\theta }\right\Vert _{0}\leq 2q$ for all $\widetilde{\theta },\theta \in 
\mathcal{D}$. Hence, we have that%
\begin{equation}
\overline{\mathcal{K}}\leq \frac{4nc_{u}h^{2}\epsilon _{n}\sqrt{2q}}{1-h}.
\label{K_upper_bar}
\end{equation}

Putting together (\ref{e5}), (\ref{e6}) and (\ref{K_upper_bar}), we have
that 
\begin{equation}
R_{n}(h,\mathcal{B}_{q})\geq \frac{h\epsilon _{n}c_{l}\sqrt{q}}{2\sqrt{2}}%
\left( 1-\phi \right)  \label{e7}
\end{equation}%
provided that%
\begin{equation}
\frac{4nc_{u}h^{2}\epsilon _{n}\sqrt{2q}}{\left( 1-h\right) \ln \left(
\left\vert \mathcal{D}\right\vert \right) }\leq \phi .  \label{e8}
\end{equation}%
By (\ref{cardinality of A}) and (\ref{set D}), condition (\ref{e8}) holds
whenever%
\begin{equation}
\frac{8nc_{u}h^{2}\epsilon _{n}\sqrt{2q}}{\left( 1-h\right) q\ln \left( 
\frac{p-q}{q/2}\right) }\leq \phi .  \label{e9}
\end{equation}%
By (\ref{parameter bound}), (\ref{set D}) and (\ref{set C}), we have that $%
\epsilon _{n}\leq \kappa .$ Therefore, we can get the result (\ref{risk
lower bound result}) by setting%
\begin{equation*}
\epsilon _{n}=\frac{\phi \sqrt{q}\left( 1-h\right) \ln \left( \frac{p-q}{q/2}%
\right) }{8nc_{u}h^{2}\sqrt{2}}
\end{equation*}%
provided that this choice of $\epsilon _{n}$ also satisfies that $\epsilon
_{n}\leq \kappa $, which can be easily seen to hold under the condition (\ref%
{lower bound on h}) for the lower bound on the value of $h$.
\end{proof}

\section{The Branch-and-Bound Method for Solving MIO Problems\label%
{bb-method}}

For completeness of the paper and for readers who are unfamiliar with MIO,
we will present briefly the branch-and-bound method for solving the MIO
problem. For further details, see, e.g., \citet{CCZ:2014} for a recent and
comprehensive study on the MIO theory and solution methods.

We take the formulation (\ref{constrained MIO}) as an expositional example
and explain how the branch-and-bound method can be used to solve this MIO
problem. The maximization problem (\ref{constrained MIO}) consists of $n+p$
binary control variables. Let $v=(d_{1},...,d_{n},e_{1},...,e_{p})$ denote
the vector that collects all these binary controls. Let $\widetilde{S}%
_{n}\left( \beta ,\gamma ,v\right) $ denote the objective function of (\ref%
{constrained MIO}). We may maximize $\widetilde{S}_{n}$ over $v$ by
enumerating all possible values of $v$, which amounts to exhaustively
searching over a binary tree that has $2^{n+p}$ leaf nodes. This naive
method is inefficient and becomes practically infeasible for large scale
problems. The branch-and-bound method improves the search efficiency by
avoid visiting those tree nodes which can be fathomed not to constitute the
optimum.

Let $\Gamma _{0}$ denote the space of the controls $\left( \beta ,\gamma
,v\right) $ defined by all the constraints stated in the MIO problem (\ref%
{constrained MIO}). Let $\overline{\Gamma }_{0}$ be an enlargement of $%
\Gamma _{0}$, which is defined analogously to $\Gamma _{0}$ with the
dichotomization constraints $v\in \{0,1\}^{n+p}$ being replaced by the
constraints $v\in \lbrack 0,1]^{n+p}$. Optimizing the objective function $%
\widetilde{S}_{n}$ over $\left( \beta ,\gamma ,v\right) \in \overline{\Gamma 
}_{0}$ reduces to a simple linear programming (LP) problem. Clearly, the
maximized objective value of this LP relaxation problem forms an upper bound
on the function $\widetilde{S}_{n}\left( \beta ,\gamma ,v\right) $ defined
on the original domain $\Gamma _{0}$. Moreover, if the solution for $v$ in
the LP relaxation problem turns out to be a vector of binary values, we can
deduce that the LP relaxation solution for $\left( \beta ,\gamma ,v\right) $
is also the solution to the MIO problem (\ref{constrained MIO}).

When the LP relaxation solution for $v$ contains fractional-valued elements,
we choose a fractional-valued element $v_{j}$ and then construct the two LP
sub-problems, denoted as $LP_{1}$ and $LP_{1}^{\prime }$, which correspond
to maximizing $\widetilde{S}_{n}\left( \beta ,\gamma ,v\right) $ over the
subspaces $\overline{\Gamma }_{1}\equiv $ $\overline{\Gamma }_{0}\cap
\left\{ \left( \beta ,\gamma ,v\right) :v_{j}=0\right\} $ and $\overline{%
\Gamma }_{1}^{\prime }\equiv $ $\overline{\Gamma }_{0}\cap \left\{ \left(
\beta ,\gamma ,v\right) :v_{j}=1\right\} $, respectively. Consider the
problem $LP_{1}$ and note that the treatment of $LP_{1}^{\prime }$ is
similar. There are four possible cases for $LP_{1}$: (i) $\overline{\Gamma }%
_{1}$ is empty and hence $LP_{1}$ is infeasible. (ii) $\overline{\Gamma }%
_{1} $ is non-empty and the maximized objective value of $LP_{1}$ is not
larger than the best known lower bound on the objective value of (\ref%
{constrained MIO}). (iii) $\overline{\Gamma }_{1}$ is non-empty, the
maximized objective value of $LP_{1}$ is larger than the best known lower
bound on the objective value of (\ref{constrained MIO}), and the solution
for $v$ of $LP_{1}$ is in $\{0,1\}^{n+p}$. (iv) $\overline{\Gamma }_{1}$ is
non-empty, the maximized objective value of $LP_{1}$ is larger than the best
known lower bound on the objective value of (\ref{constrained MIO}), and the
solution for $v$ of $LP_{1}$ contains fractional-valued elements.

For cases (i) and (ii), we can bypass further sub-problems of $LP_{1}$ since
these will not yield a solution to the MIO problem (\ref{constrained MIO}).
In other words, all nodes of the binary search tree along the branch implied
by $LP_{1}$ can be pruned and need not be further considered. For case
(iii), we can update the best known feasible solution to the MIO problem (%
\ref{constrained MIO}) as the optimal solution to the problem $LP_{1}$. For
case (iv), the sub-domain $\Gamma _{0}\cap \left\{ \left( \beta ,\gamma
,v\right) :v_{j}=0\right\} $ may still contain an optimal solution.
Therefore, in case (iv), we branch on a fraction-valued component of the $%
LP_{1}$ solution for $v$ to create further two sub-problems and then repeat
this process as described above.

\section{Simulation Results for the DGP(ii) Design}

\label{sim:hetero:design}

In this part of the appendix, we report the simulation results under the
setup of DGP(ii) of Section \ref{simulation study}. Table \ref{tab4} gives
the MIO computation time statistics for solving the PRESCIENCE($q$) problem
under DGP(ii). Compared to the results of Table \ref{tab1}, the PRESCIENCE
problem appeared to be more computationally difficult for the high
dimensional setup in the DGP(ii) design where the maximum computation time
could exceed 2.5 hours. However, the mean and median computation time
remained well capped below 6 minutes across all cases in Table \ref{tab4}.
In fact, the case of the MIO computation lasting over one hour appeared in
only 3 out of the 100 repetitions for the PRESCIENCE(3) simulations in the
setup of $p=60$. 
\begin{table}[tbph]
\caption{PRESCIENCE computation time (CPU seconds) under DGP(ii)}
\label{tab4}
\begin{center}
\begin{tabular}{c|ccc|ccc}
\hline\hline
& \multicolumn{3}{|c|}{$p=10$} & \multicolumn{3}{|c}{$p=60$} \\ \hline
$q$ & 1 & 2 & 3 & 1 & 2 & 3 \\ \hline
mean & 0.39 & 1.55 & 0.73 & 3.47 & 68.53 & \multicolumn{1}{l}{350.4} \\ 
min & 0.05 & 0.04 & 0.03 & 0.21 & 0.03 & \multicolumn{1}{l}{0.04} \\ 
median & 0.40 & 0.76 & 0.38 & 2.77 & 23.07 & \multicolumn{1}{l}{50.13} \\ 
max & 1.02 & 20.34 & 9.97 & 16.56 & 417.2 & \multicolumn{1}{l}{8552} \\ 
\hline
\end{tabular}%
\end{center}
\end{table}

\begin{table}[tbph]
\caption{Comparison of prediction methods under DGP(ii) with $p=10$}
\label{tab5}
\begin{center}
\begin{tabular}{c|ccc|c|ccc|ccc}
\hline\hline
{\small method} & \multicolumn{3}{|c|}{{\small PRESCIENCE(}$q${\small )}} & 
{\small PRE\_CV} & \multicolumn{3}{|c}{\small logit\_lasso} & 
\multicolumn{3}{|c}{\small probit\_lasso} \\ 
& ${\small q=1}$ & ${\small q=2}$ & ${\small q=3}$ &  & ${\small \lambda }%
_{\min }$ & ${\small \lambda }_{1se}$ & ${\small \lambda }_{2se}$ & ${\small %
\lambda }_{\min }$ & ${\small \lambda }_{1se}$ & ${\small \lambda }_{2se}$
\\ \hline
${\small Corr\_sel}$ & {\small 0.86} & {\small 0.95} & {\small 0.95} & 
{\small 0.91} & {\small 0.91} & {\small 0.6} & {\small 0.27} & {\small 0.88}
& {\small 0.55} & {\small 0.28} \\ 
${\small Orac\_sel}$ & {\small 0.86} & {\small 0.01} & {\small 0} & {\small %
0.53} & {\small 0.09} & {\small 0.3} & {\small 0.16} & {\small 0.09} & 
{\small 0.3} & {\small 0.16} \\ 
${\small Num\_irrel}$ & {\small 0.14} & {\small 1.04} & {\small 2.02} & 
{\small 0.67} & {\small 2.93} & {\small 0.62} & {\small 0.26} & {\small 2.49}
& {\small 0.51} & {\small 0.26} \\ 
${\small in\_Score}$ & {\small 0.834} & {\small 0.871} & {\small 0.894} & 
{\small 0.860} & {\small 0.787} & {\small 0.710} & {\small 0.630} & {\small %
0.776} & {\small 0.695} & {\small 0.627} \\ 
${\small in\_RS}$ & {\small 1.095} & {\small 1.144} & {\small 1.175} & 
{\small 1.131} & {\small 1.031} & {\small 0.930} & {\small 0.825} & {\small %
1.016} & {\small 0.910} & {\small 0.822} \\ 
${\small out\_Score}$ & {\small 0.724} & {\small 0.711} & {\small 0.696} & 
{\small 0.716} & {\small 0.673} & {\small 0.614} & {\small 0.553} & {\small %
0.668} & {\small 0.606} & {\small 0.552} \\ 
${\small out\_RS}$ & {\small 0.948} & {\small 0.930} & {\small 0.910} & 
{\small 0.937} & {\small 0.881} & {\small 0.804} & {\small 0.724} & {\small %
0.874} & {\small 0.794} & {\small 0.723} \\ \hline
\end{tabular}%
\end{center}
\end{table}

\begin{table}[tbph]
\caption{Comparison of prediction methods under DGP(ii) with $p=60$}
\label{tab6}
\begin{center}
\begin{tabular}{c|ccc|c|ccc|ccc}
\hline\hline
{\small method} & \multicolumn{3}{|c|}{{\small PRESCIENCE(}$q${\small )}} & 
{\small PRE\_CV} & \multicolumn{3}{|c}{\small logit\_lasso} & 
\multicolumn{3}{|c}{\small probit\_lasso} \\ 
& ${\small q=1}$ & ${\small q=2}$ & ${\small q=3}$ &  & ${\small \lambda }%
_{\min }$ & ${\small \lambda }_{1se}$ & ${\small \lambda }_{2se}$ & ${\small %
\lambda }_{\min }$ & ${\small \lambda }_{1se}$ & ${\small \lambda }_{2se}$
\\ \hline
${\small Corr\_sel}$ & {\small 0.76} & {\small 0.82} & {\small 0.88} & 
{\small 0.82} & {\small 0.64} & {\small 0.45} & {\small 0.29} & {\small 0.63}
& {\small 0.43} & {\small 0.28} \\ 
${\small Orac\_sel}$ & {\small 0.76} & {\small 0} & {\small 0} & {\small 0.41%
} & {\small 0} & {\small 0.13} & {\small 0.16} & {\small 0} & {\small 0.11}
& {\small 0.15} \\ 
${\small Num\_irrel}$ & {\small 0.24} & {\small 1.18} & {\small 2.12} & 
{\small 0.88} & {\small 4.53} & {\small 1.17} & {\small 0.47} & {\small 4.17}
& {\small 1.06} & {\small 0.44} \\ 
${\small in\_Score}$ & {\small 0.842} & {\small 0.894} & {\small 0.927} & 
{\small 0.880} & {\small 0.766} & {\small 0.680} & {\small 0.629} & {\small %
0.759} & {\small 0.673} & {\small 0.626} \\ 
${\small in\_RS}$ & {\small 1.103} & {\small 1.171} & {\small 1.216} & 
{\small 1.153} & {\small 1.000} & {\small 0.887} & {\small 0.821} & {\small %
0.990} & {\small 0.878} & {\small 0.817} \\ 
${\small out\_Score}$ & {\small 0.713} & {\small 0.693} & {\small 0.673} & 
{\small 0.700} & {\small 0.608} & {\small 0.575} & {\small 0.540} & {\small %
0.608} & {\small 0.571} & {\small 0.539} \\ 
${\small out\_RS}$ & {\small 0.934} & {\small 0.907} & {\small 0.881} & 
{\small 0.917} & {\small 0.797} & {\small 0.753} & {\small 0.708} & {\small %
0.796} & {\small 0.748} & {\small 0.706} \\ \hline
\end{tabular}%
\end{center}
\end{table}

We compare in Tables \ref{tab5} and \ref{tab6} the predictive and variable
selection performance results for the various prediction methods given in (%
\ref{prediction methods}). For the penalized MLE approaches, the
logit\_lasso and probit\_lasso implemented with $\lambda _{\min }$ performed
better in terms of predictive performance than those implemented with $%
\lambda _{1se}$ or $\lambda _{2se}$. Yet, the PRE\_CV approach still had the
best overall performance among all the prediction approaches in Tables \ref%
{tab5} and \ref{tab6}. We also note that the PRE\_CV approach could
outperform the logit\_lasso and probit\_lasso approaches by a large margin
in both the in-sample and out-of-sample predictive performances in the high
dimensional variable selection setup.

It is well known in the binary prediction literature 
\citep[see
e.g.][]{Elliott2013} that the optimal prediction rule in terms of score
maximization does not hinge on knowing the true distribution of $Y$ given $W$%
, and binary prediction based on the MLE approach with a misspecified
likelihood can yield poor predictive performance. For the DGP(ii) design,
the binary response probability $P(Y=1|W)$ depends on the index $\left[
\sigma (W)\right] ^{-1}W^{\prime }\theta $, which is nonlinear in the
variables $V_{1}$ and $V_{2}$, such that the logit and probit likelihoods
with an index linear in $W$ are misspecified. We approximated this
nonlinearity by using covariates that consisted of cubic polynomial terms in 
$\left( V_{1},V_{2}\right) $. Specifically, we replaced the last 7 variables
of the original auxiliary covariates $\left( V_{2},...,V_{p+1}\right) $ by
the vector 
\begin{equation*}
(V_{1}^{2},V_{2}^{2},V_{1}^{3},V_{2}^{3},V_{1}V_{2},V_{1}V_{2}^{2},V_{1}^{2}V_{2}),
\end{equation*}%
where each nonlinear covariate was standardized to have mean zero and
variance unity, so that the resulting auxiliary covariate vector remained to
be of dimension $p$. In Table \ref{tab_nonlinear_covariates}, we reported
simulation results under DGP(ii) with this auxiliary covariate specification
on the predictive performance comparison between the PRESCIENCE and
penalized MLE approaches. Except for the auxiliary covariate setting, all
simulation setups for the results of Table \ref{tab_nonlinear_covariates}
were the same as described in Section \ref{simulation study}. Comparing
these results to those of Tables \ref{tab5} and \ref{tab6}, we found that
the out-of-sample predictive performance for both the logit\_lasso and
probit\_lasso approaches could indeed be improved; however, the PRE\_CV
approach continued dominating these MLE approaches and the performance gain
remained substantial in the setup with 60 auxiliary covariates.

\begin{table}[tbph]
\caption{Predictive performance comparison under DGP(ii) where the auxiliary
covariates include cubic polynomial terms in $(V_{1},V_{2})$. }
\label{tab_nonlinear_covariates}
\begin{center}
\begin{tabular}{c|ccc|c|ccc|ccc}
\hline\hline
{\small method} & \multicolumn{3}{|c|}{{\small PRESCIENCE(}$q${\small )}} & 
{\small PRE\_CV} & \multicolumn{3}{|c}{\small logit\_lasso} & 
\multicolumn{3}{|c}{\small probit\_lasso} \\ 
& ${\small q=1}$ & ${\small q=2}$ & ${\small q=3}$ &  & ${\small \lambda }%
_{\min }$ & ${\small \lambda }_{1se}$ & ${\small \lambda }_{2se}$ & ${\small %
\lambda }_{\min }$ & ${\small \lambda }_{1se}$ & ${\small \lambda }_{2se}$
\\ \hline
\multicolumn{11}{c}{\small 10 auxiliary covariates} \\ \hline
${\small in\_Score}$ & {\small 0.837} & {\small 0.879} & {\small 0.911} & 
{\small 0.876} & {\small 0.801} & {\small 0.721} & {\small 0.661} & {\small %
0.791} & {\small 0.723} & {\small 0.664} \\ 
${\small in\_RS}$ & {\small 1.099} & {\small 1.155} & {\small 1.198} & 
{\small 1.152} & {\small 1.051} & {\small 0.947} & {\small 0.867} & {\small %
1.039} & {\small 0.949} & {\small 0.870} \\ 
${\small out\_Score}$ & {\small 0.717} & {\small 0.720} & {\small 0.707} & 
{\small 0.718} & {\small 0.701} & {\small 0.634} & {\small 0.578} & {\small %
0.699} & {\small 0.636} & {\small 0.582} \\ 
${\small out\_RS}$ & {\small 0.939} & {\small 0.942} & {\small 0.926} & 
{\small 0.940} & {\small 0.918} & {\small 0.830} & {\small 0.757} & {\small %
0.914} & {\small 0.832} & {\small 0.762} \\ \hline
\multicolumn{11}{c}{\small 60 auxiliary covariates} \\ \hline
${\small in\_Score}$ & {\small 0.846} & {\small 0.900} & {\small 0.936} & 
{\small 0.89} & {\small 0.779} & {\small 0.678} & {\small 0.629} & {\small %
0.769} & {\small 0.674} & {\small 0.625} \\ 
${\small in\_RS}$ & {\small 1.107} & {\small 1.180} & {\small 1.227} & 
{\small 1.166} & {\small 1.018} & {\small 0.883} & {\small 0.820} & {\small %
1.004} & {\small 0.879} & {\small 0.815} \\ 
${\small out\_Score}$ & {\small 0.696} & {\small 0.699} & {\small 0.695} & 
{\small 0.696} & {\small 0.616} & {\small 0.576} & {\small 0.540} & {\small %
0.615} & {\small 0.574} & {\small 0.538} \\ 
${\small out\_RS}$ & {\small 0.912} & {\small 0.915} & {\small 0.911} & 
{\small 0.911} & {\small 0.807} & {\small 0.755} & {\small 0.708} & {\small %
0.805} & {\small 0.752} & {\small 0.705} \\ \hline
\end{tabular}%
\end{center}
\end{table}

\section{Additional Simulations on the Performance of the Warm-Start MIO
Approaches to the PRESCIENCE Problem\label{simulations of the cold and warm
start methods}}

In Appendix \ref{simulations of the cold and warm start methods}, we conduct
a simulation study on the performance of adopting the warm-start strategy of
Section \ref{sec:para-space} in solving the MIO formulations (\ref%
{constrained MIO}) and (\ref{Florios-Skouras MIO}). We used the setup of
DGP(ii) of Section \ref{simulation study} for the simulation design. For all
simulation experiments in this section, we used a training sample of $n=100$
observations over which we computed the exact solutions to all the MIO
problems. We set $p$, the dimension of the vector of auxiliary covariates,
to be 10. We used the space (\ref{PRESCIENCE parameter space}) as the
parameter space $\Theta $ for the cold-start MIO solution approach. For the
warm-start MIO formulations, we set $\tau =1.5$ and constructed $(\widehat{P}%
_{i})_{i=1}^{n}$ using the fitted choice probabilities from the logit
regression of $Y$ on the entire covariate vector $W$ to derive the space $%
\widehat{\Theta }\left( \tau \right) $ as a refinement of the initial
parameter space $\Theta $. The number of simulation repetition was set to be 
$100$.

We now present the simulation results. Table \ref{tab7} gives the summary
statistics of the MIO computation time in CPU seconds and the average of the
maximized scores over all the simulation repetitions. From this table, we
note that, fixing the start method, the average of maximized scores under
the MIO formulation (\ref{constrained MIO}) was identical to that under the
MIO formulation (\ref{Florios-Skouras MIO}). In fact, the maximized score
values computed under the two MIO formulations were also identical across
all the simulation repetitions. This matched the mathematical equivalence
between the MIO problems (\ref{constrained MIO}) and (\ref{Florios-Skouras
MIO}). Across the start method, we find that the warm start approach, which
is based on a smaller parameter space, could miss the global optimum.
However, the difference of the cold and warm-start based maximized objective
values was very small. In fact, in no more than 7\% of the simulation
repetitions did we observe the occurrence of such differences among which
the maximal difference was about 0.02.

On the other hand, we observe significant reduction of computation time from
employing the warm start strategy across nearly all cases in Table \ref{tab7}%
. However, the cold start approach might sometimes be less computationally
costly than its warm start version. Yet this happened only in the simulation
with the MIO formulation (\ref{constrained MIO}) and its ocurrence was rare
(in only 1 out of 100 repetitions). Therefore, we believe that the warm
start method may be a useful heuristic device to improve the computational
efficiency for the MIO based computation of PRESCIENCE. Finally, concerning
the performance comparison of the MIO formulations (\ref{constrained MIO})
and (\ref{Florios-Skouras MIO}), Table \ref{tab7} reveals that the
formulation (\ref{constrained MIO}) tended to outperform the formulation (%
\ref{Florios-Skouras MIO}) in terms of computation time in the cold-start
setting. This tendency appeared to be reversed in the warm-start setting. 
\begin{table}[tbph]
\caption{Performance comparison of cold and warm-start MIO formulations}
\label{tab7}
\begin{center}
\begin{tabular}{c|ccc|ccc}
\hline\hline
& \multicolumn{3}{c|}{cold start} & \multicolumn{3}{c}{warm start} \\ \hline
$q$ & 1 & 2 & 3 & 1 & 2 & 3 \\ \hline
\multicolumn{7}{l}{MIO formulation (\ref{constrained MIO})} \\ \hline
maximized score & 0.812 & 0.837 & 0.855 & 0.811 & 0.836 & 0.854 \\ \hline
\multicolumn{7}{c}{MIO computation time} \\ \hline
mean & 1.44 & 40.5 & 149 & 0.49 & 4.30 & 43.4 \\ 
min & 0.64 & 1.68 & 0.57 & 0.07 & 0.11 & 0.18 \\ 
median & 1.33 & 33.5 & 94.5 & 0.36 & 2.11 & 6.54 \\ 
max & 3.18 & 223 & 1102 & 1.31 & 59.7 & 3074 \\ \hline
\multicolumn{7}{l}{MIO formulation (\ref{Florios-Skouras MIO})} \\ \hline
maximized score & 0.812 & 0.837 & 0.855 & 0.811 & 0.836 & 0.854 \\ \hline
\multicolumn{7}{c}{MIO computation time} \\ \hline
mean & 7.78 & 79.2 & 401 & 0.55 & 4.32 & 19.4 \\ 
min & 0.55 & 1.09 & 0.42 & 0.05 & 0.09 & 0.11 \\ 
median & 4.66 & 63.2 & 242 & 0.40 & 1.92 & 5.05 \\ 
max & 31.3 & 868 & 3153 & 4.28 & 35.7 & 367 \\ \hline
\end{tabular}%
\end{center}
\end{table}

\section{Empirical Illustration for Best Subset Selection Using Linear
Specification\label{emp:linear:spec}}

In this section of the appendix, we report empirical results using the
linear specification of covariates. Specifically, we use the same focused
covariates as constructed in the main text; for auxiliary covariates $Z$, we
consider the simple specification where $Z=\left( CARS,DOVTT,DIVTT\right) $.
Under this setup, we have that $k=1$ and $p=3$. Following %
\citet{Florios:Skouras:08}, we set all unknown parameters to be within the
range $[-10,10]$ and took the parameter space $\Theta $ to be $\left[ -10,10%
\right] ^{4}$. The refined space $\widehat{\Theta }\left( \tau \right) $ was
computed accordingly as described in Section \ref{sec:para-space}. Table \ref%
{tab-01} presents the resulting refined parameter bounds derived from $%
\widehat{\Theta }\left( \tau \right) $. 
\begin{table}[tbh]
\caption{Refined parameter bounds ($\protect\tau =1.5$)}
\label{tab-01}
\begin{center}
\emph{Covariate specification}: $k=1,p=3$ \\[0pt]
\vspace*{2ex} 
\begin{tabular}{lcc}
\hline\hline
Variable & lower bound & upper bound \\ \hline
Intercept & -7.8275 & 7.8275 \\ 
$CARS$ & -5.4143 & 5.4143 \\ 
$DOVTT$ & -1.9229 & 1.9229 \\ 
\ $DIVTT$ & -0.7884 & 0.7884 \\ \hline
\end{tabular}%
\end{center}
\end{table}

We can clearly see from Table \ref{tab-01} that using the refined space $%
\widehat{\Theta }\left( \tau \right) $ helped to reduce the parameter search
space in both of the MIO formulations (\ref{constrained MIO}) and (\ref%
{Florios-Skouras MIO}). The extent of this reduction could be quite large
even when the enlargement parameter $\tau $ was set to be $1.5$. In fact,
the size of $\widehat{\Theta }\left( \tau \right) $ was merely about 0.64\%
of that of $\Theta $. Therefore, we can anticipate considerably
computational efficiency gain from using the warm-start MIO formulations.

In Table \ref{tab-02}, we present comparative results of the warm-start and
cold-start approaches for the MIO formulations in (\ref{constrained MIO})
specified with different values of the cardinality bound $q$. Since there
are only 4 unknown parameters in this simple setup, we set $\varepsilon =0$
in (\ref{approximate MIO solution}) and solved for the exact PRESCIENCE. 
\begin{table}[tbh]
\caption{Implementation Using MIO formulation (\protect\ref{constrained MIO}%
) }
\label{tab-02}
\begin{center}
\emph{Covariate specification}: $k=1,p=3$ \\[0pt]
\vspace*{2ex} 
\begin{tabular}{|l|l|l|l|l|l|l|}
\hline\hline
$q$ & \multicolumn{2}{|l}{$1$} & \multicolumn{2}{|l}{$2$} & 
\multicolumn{2}{|l|}{$3$} \\ \hline
MIO start method & warm & cold & warm & cold & warm & cold \\ \hline
\multicolumn{7}{|l|}{focus covariates} \\ \hline
$DCOST$ & 1 & 1 & 1 & 1 & 1 & 1 \\ \hline
Intercept & 3.4654 & 3.4258 & 3.2447 & 5.2013 & 4.7953 & 4.9267 \\ \hline
\multicolumn{7}{|l|}{auxiliary covariates} \\ \hline
$CARS$ & 2.7420 & 2.7191 & 2.7470 & 4.3828 & 3.6571 & 3.8179 \\ \hline
$DOVTT$ & 0 & 0 & 0 & 0.9278 & 0.7951 & 0.7952 \\ \hline
$DIVTT$ & 0 & 0 & 0.6653 & 0 & 0.3830 & 0.4978 \\ \hline
\multicolumn{7}{|l|}{in-sample performance} \\ \hline
maximized score & 756 & 756 & 763 & 763 & 765 & 765 \\ \hline
maximized average score & 0.8979 & 0.8979 & 0.9062 & 0.9062 & 0.9086 & 0.9086
\\ \hline
\multicolumn{7}{|l|}{MIO solver output} \\ \hline
$MIO\_gap$ & 0 & 0 & 0 & 0 & 0 & 0 \\ \hline
CPU time (in seconds) & 15 & 13 & 112 & 903 & 253 & 1887 \\ \hline
branch-and-bound nodes & 6321 & 6120 & 62630 & 304708 & 129947 & 887349 \\ 
\hline
\end{tabular}%
\end{center}
\end{table}

To interpret the results of Table \ref{tab-02}, first note that $MIO\_gap=0$
for all MIO problems in this table. Thus all these MIO solutions were exact;
moreover, both the cold-start and warm-start MIO approaches yielded the same
maximized objective values and the parameter estimates for auxiliary
covariates indeed respected the $\ell _{0}$-norm constraint specified in (%
\ref{theta_q}). The parameter estimates solved by these two different
approaches were very similar for the cases of $q\in \{1,3\}$. Since the
maximum score objective function is a step function, it is not surprising to
have multiple solutions and thus the PRESCIENCE for a given value of $q$
need not be unique. This can be clearly seen from the case of $q=2$, where
the results of the cold-start and warm-start methods differed in the
covariate to be excluded from the corresponding PRESCIENCE.

We now assess the computational efficiency of the warm-start and cold-start
approaches. From Table \ref{tab-02}, we can see that both approaches
performed very well. Most of the MIO cases considered in this table were
solved in few minutes and the case taking the longest time was also solved
in about half an hour. We also notice from these results that, except for
the case of $q=1$ under which both approaches were comparable, the
cold-start formulation was clearly outperformed by its corresponding
warm-start version. The difference in computational efficiency can be
sizable: the warm-start approach just took about 12\% (13\%) of the time and
20\% (14\%) of the branch-and-bound nodes used by the cold-start approach to
solve the $q=2$ ($q=3$) case.

In this empirical application, the computational merit of using the refined
parameter space is also evident for the PRESCIENCE implementation using the
MIO formulation (\ref{Florios-Skouras MIO}). The results for this
formulation are summarized in Table \ref{tab-03}. 
\begin{table}[tbh]
\caption{Implementation Using MIO formulation (\protect\ref{Florios-Skouras
MIO})}
\label{tab-03}
\begin{center}
\emph{Covariate specification}: $k=1,p=3$ \\[0pt]
\vspace*{2ex} 
\begin{tabular}{|l|l|l|l|l|l|l|}
\hline\hline
$q$ & \multicolumn{2}{|l}{$1$} & \multicolumn{2}{|l}{$2$} & 
\multicolumn{2}{|l|}{$3$} \\ \hline
MIO start method & warm & cold & warm & cold & warm & cold \\ \hline
\multicolumn{7}{|l|}{focus covariates} \\ \hline
$DCOST$ & 1 & 1 & 1 & 1 & 1 & 1 \\ \hline
Intercept & 3.2803 & 3.2803 & 4.6701 & 5.2013 & 4.8270 & 4.9267 \\ \hline
\multicolumn{7}{|l|}{auxiliary covariates} \\ \hline
$CARS$ & 2.4666 & 2.4667 & 3.4212 & 4.3828 & 3.7000 & 3.8179 \\ \hline
$DOVTT$ & 0 & 0 & 0.9542 & 0.9278 & 0.8141 & 0.7952 \\ \hline
$DIVTT$ & 0 & 0 & 0 & 0 & 0.3211 & 0.4978 \\ \hline
\multicolumn{7}{|l|}{in-sample performance} \\ \hline
maximized score & 756 & 756 & 763 & 763 & 765 & 765 \\ \hline
maximized average score & 0.8979 & 0.8979 & 0.9062 & 0.9062 & 0.9086 & 0.9086
\\ \hline
\multicolumn{7}{|l|}{MIO solver output} \\ \hline
$MIO\_gap$ & 0 & 0 & 0 & 0 & 0 & 0 \\ \hline
CPU time (in seconds) & 62 & 470 & 81 & 6428 & 144 & 2052 \\ \hline
branch-and-bound nodes & 38163 & 139833 & 89977 & 4867501 & 154390 & 1539205
\\ \hline
\end{tabular}%
\end{center}
\end{table}

As in Table \ref{tab-02}, the parameter estimates from both the warm-start
and cold-start approaches in Table \ref{tab-03} were very similar for the
cases $q\in \left\{ 1,3\right\} $. For these two cases, the parameter
estimates as displayed in Tables \ref{tab-02} and \ref{tab-03} were also
qualitatively and quantitatively similar across the MIO formulations (\ref%
{constrained MIO}) and (\ref{Florios-Skouras MIO}). For the case of $q=2$,
in contrast to the results for the formulation (\ref{constrained MIO}), the
variable being excluded was the same for both warm-start and cold-start
versions of the formulation (\ref{Florios-Skouras MIO}). We notice that the
case of $q=3$ reduces to the maximum score estimation problem using all
covariates; for this case, our MIO estimates were quite similar to those
computed by \citet{Florios:Skouras:08} and both our and their maximized
objective values were identical.

It can be noticed that, across all cases in Tables \ref{tab-02} and \ref%
{tab-03}, the variable $CARS$ was always selected and its parameter estimate
was of a much larger magnitude than those of other selected auxiliary
covariates. This indicates that $CARS$ is the most important variable among
the three auxiliary covariates. Moreover, by comparing the maximized average
scores derived under difference cases of $q$, there was very little loss in
the goodness of fit from adopting the parsimonious specification using only $%
CARS$ as the auxiliary covariate.

Regarding the computational efficiency, we can clearly see from the results
across all cases of $q$ in Table \ref{tab-03} that, for the MIO formulation (%
\ref{Florios-Skouras MIO}), there was huge performance gain from using the
warm-start approach in terms of reduction of the CPU time and
branch-and-bound nodes. Putting together the results from both Tables 2 and
3, we thus find that, regardless of the MIO formulations (\ref{constrained
MIO}) and (\ref{Florios-Skouras MIO}), it is generally far more
computationally efficient to adopt the warm-start strategy in the
implementation.

We now compare computational performance across the formulations (\ref%
{constrained MIO}) and (\ref{Florios-Skouras MIO}). From Tables \ref{tab-02}
and \ref{tab-03}, we can see that, for all three cases of $q$, it took far
fewer branch-and-bound nodes to solve the formulation (\ref{constrained MIO}%
) than to solve the formulation (\ref{Florios-Skouras MIO}). However, in
terms of usage of the CPU time, the former formulation did not completely
dominate the latter. This can be intuitively explained as follows.

There are $2n$ inequalities stated in (\ref{constraint on di}) of the
formulation (\ref{constrained MIO}). By contrast, the number of inequalities
in (\ref{sign matching constraints}) of the formulation (\ref%
{Florios-Skouras MIO}) is only half of that amount. Hence, the corresponding
linear programming (LP) relaxation problems in the branch-and-bound solution
procedure for the formulation (\ref{constrained MIO}) are likely to be
tighter than those for the formulation (\ref{Florios-Skouras MIO}). This
would help to reduce the number of branching steps required to reach the
optimum. On the other hand, for the formulation (\ref{constrained MIO}), it
may take much longer to solve at each node the LP relaxation problem which
contains a massive amount of inequality constraints. Thus, there is a
tradeoff between the computational cost per node and the total number of
required nodes in the solution procedure. This tradeoff may depend on the
sample size, the support of the data and the variable selection bound $q$.
Therefore, we find that each of these two MIO formulations has its strength
and hence both complement each other for solving the PRESCIENCE problems.

{\singlespacing
\bibliographystyle{econometrica}
\bibliography{BSBP}
}

\end{document}